\renewcommand{\theequation}{\thesection\arabic{equation}}
\definecolor{myblue}{rgb}{0.56, 0.92, 0.75}
\definecolor{green1}{RGB}{44, 220, 45}
\definecolor{yellow0}{RGB}{255, 242, 94}
\theoremstyle{plain}
\newtheorem{assump}{Assumption}
\newcommand*{\fbar}{\overline{f}}
\newtheorem{theorem}{Theorem}
\newtheorem{lemma}{Lemma}
\newtheorem{proposition}{Proposition}
\theoremstyle{definition}
\theoremstyle{remark}
\newtheorem*{remark}{Remark}
\theoremstyle{plain}
\newcolumntype{P}[1]{>{\centering\arraybackslash}p{#1}}
\DeclarePairedDelimiter\floor{\lfloor}{\rfloor}
\begin{document}


\renewcommand{\baselinestretch}{2}

\renewcommand{\sectionmark}[1]{}
\renewcommand{\subsectionmark}[1]{}
\markright{ \hbox{\footnotesize\rm Statistica Sinica
}\hfill\\[-13pt]
\hbox{\footnotesize\rm
}\hfill }

\markboth{\hfill{\footnotesize\rm WENTIAN HUANG AND DAVID RUPPERT} \hfill}
{\hfill {\footnotesize\rm COPULA-BASED FUNCTIONAL BAYES CLASSIFICATION} \hfill}

\renewcommand{\thefootnote}{}
$\ $\par


\fontsize{12}{14pt plus.8pt minus .6pt}\selectfont \vspace{0.8pc}
\centerline{\large\bf COPULA-BASED FUNCTIONAL BAYES CLASSIFICATION WITH}
\vspace{2pt} \centerline{\large\bf PRINCIPAL COMPONENTS AND PARTIAL LEAST SQUARES}
\vspace{.4cm} \centerline{Wentian Huang, David Ruppert} \vspace{.4cm} \centerline{\it
Cornell University} \vspace{.55cm} \fontsize{9}{11.5pt plus.8pt minus
.6pt}\selectfont


\begin{quotation}
\noindent {\it Abstract:}
We present a new functional Bayes classifier that uses principal component (PC) or partial least squares (PLS) scores from the common (i.e.\ pooled) covariance function, that is, the covariance function marginalized over groups.  When the groups have different covariance functions, the PC or PLS scores need not be independent or even uncorrelated.  
We use copulas to model the dependence.   Our method is semiparametric; the marginal densities are estimated nonparametrically using kernel smoothing, and the copula is modeled parametrically.  We focus on Gaussian and $t$-copulas, but other copulas can be used.   The strong performance of our methodology is demonstrated through simulation, real-data examples, and asymptotic properties.

\vspace{9pt}
\noindent {\it Key words and phrases:}
Asymptotic theory, Bayes classifier, functional data, perfect classification, rank correlation, semiparametric model
\par
\end{quotation}\par

\def\thefigure{\arabic{figure}}
\def\thetable{\arabic{table}}

\renewcommand{\theequation}{\thesection.\arabic{equation}}

\fontsize{12}{14pt plus.8pt minus .6pt}\selectfont

\setcounter{section}{0} 
\setcounter{equation}{0} 

\lhead[\footnotesize\thepage\fancyplain{}\leftmark]{}\rhead[]{\fancyplain{}\rightmark\footnotesize\thepage}

\section{Introduction}\label{sec:intro}
Functional classification, where the features are continuous functions on a compact interval, is receiving increasing interest in fields such as chemometrics, medicine, economics, and environmental science.   \cite{hastie} extended the linear discriminant analysis (LDA) to functional data (FLDA), including the case where the curves are partially observed.  \cite{james02} proposed a functional version of the generalized linear model (FGLM), including functional logistic regression. Thereafter, the FGLM was further researched by, among others, \cite{muller2005generalized},  \cite{li2010generalized}, \cite{zhu2010bayesian}, \cite{mclean2014functional}, and \cite{shang2015nonparametric}. Aside from the FGLM, other classifiers have also been studied. \cite{svm} applied support vector machines (SVM) to classify infinite-dimensional data.  \cite{depth} explored the classification of functional data based on data depth.   \cite{li} suggested a functional segmented discriminant analysis combining an LDA and an \mbox{SVM}, and \cite{agg} proposed a nonlinear aggregation classifier. 

However, certain issues remain.  Current methods, such as the FLDA, SVM, and functional centroid classifier (\cite{DH2012}), distinguish groups by the differences between their functional means. They achieve satisfactory results when the location difference is the dominant feature distinguishing classes, but functional data provide more information than just group means. For example, Fig.\ \ref{ccamusd} from the example in Section~\ref{DTIandMS} compares the mean and standard deviation functions of raw and smoothed fractional anisotropy (FA) measured along the corpus callosum (cca) of $141$ subjects, $99$ with multiple sclerosis (MS) and $42$ without.  The disparity between the group standard deviations in panel (c) provides additional information that can identify MS patients.  As shown in Section~\ref{DTIandMS}, the LDA and centroid classifiers fail to capture this information, and have higher misclassification rates than the classifiers we propose.

\begin{figure}[h] 
\centering
      \includegraphics[scale=0.4]{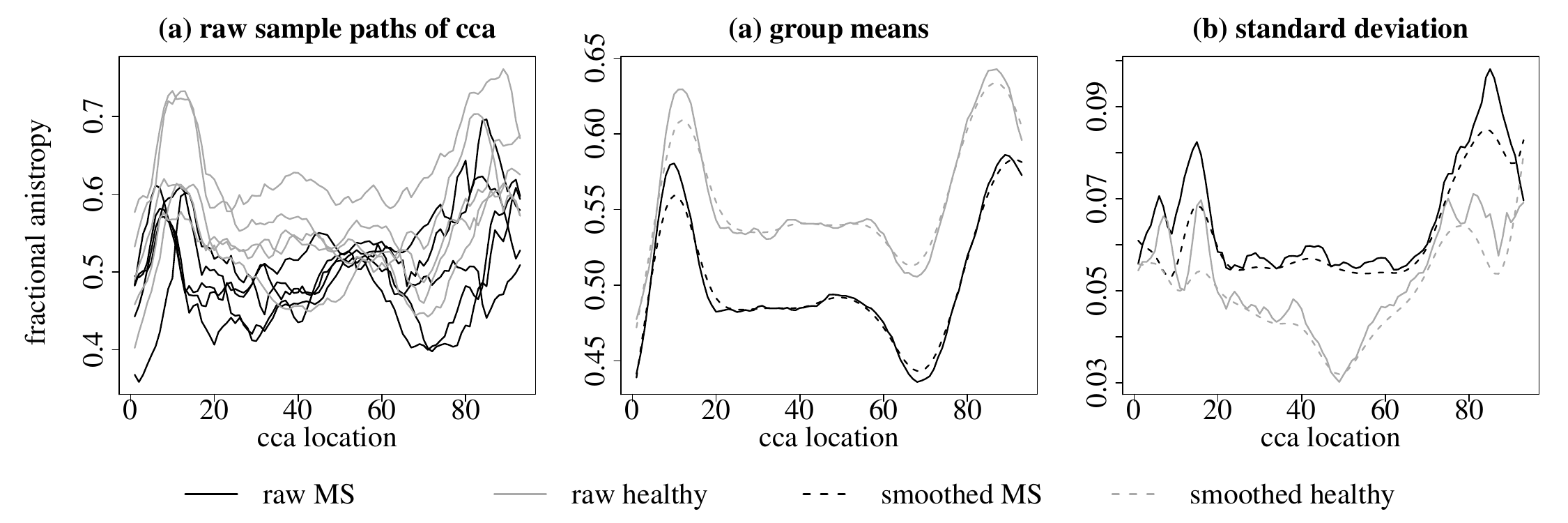}
  \caption{Panel (a) shows profiles of FA, five each of cases and controls, and panels (b) and (c) show the group means and standard deviations. Compared to the controls, the MS group has a lower mean and a higher standard deviation.}
  \label{ccamusd}
\end{figure}

Both parametric and nonparametric methods have drawbacks in classifying functional data.  Parametric models, such as linear and quadratic discriminant analysis, are popular in functional classification, especially because nonparametric methods are likely to encounter the curse of dimensionality. However, parametric methods can cast rigid assumptions on the class boundaries (\cite{li}).  Our interest is in methods that avoid stringent assumptions on the data.   \cite{DMY2017} proposed a nonparametric Bayes classifier, assuming that the subgroups share the same sets of eigenfunctions, and that the scores projected on them are independent. With these assumptions and the definition of the density of random functions proposed by \cite{DH2010}, the joint densities of the truncated functional data  can be estimated using a \textit{univariate} kernel density estimation (KDE).   The Bayes rules estimated this way avoid the curse of dimensionality, but require that the groups have equal sets of eigenfunctions and independent scores.

We propose new semiparametric Bayes classifiers.  We project the functions onto the eigenfunctions of the pooled covariance function, that is, the covariance function marginalized over groups.  
These eigenfunctions can be estimated by applying a functional principal components analysis (fPCA) to the combined groups.
The projections will not be independent or even uncorrelated, unless these common eigenfunctions are also the eigenfunctions of the group-specific covariance functions, an assumption not likely to hold in many situations. For instance, in Section \ref{realdata} we discuss two real-data examples, and include a comparison of their group eigenfunctions in the Supplementary Material (Fig.\ \ref{ccagroup} and Fig.\ \ref{truckgroup}). Both cases appear to violate the equal eigenfunction assumption. We estimate the marginal density of the projected scores using a univariate KDE, as in  \cite{DMY2017}, and model the association between the scores using a parametric copula. Our semiparametric methodology avoids the restricted range of applications imposed by the assumption of equal group-specific eigenfunctions.  It also avoids the curse of dimensionality that a multivariate nonparametric density estimation would entail.  

In addition to the principal components (PC) basis,
we also consider a partial least squares (PLS) projection basis.  
PLS has attracted recent attention owing to its effectiveness in prediction and classification problems with high-dimensional and functional data. \cite{P2007} discuss a functional LDA combined with \mbox{PLS}.  \cite{DH2012} mention the potential advantage of PLS scores in their functional centroid classifier, when the difference between the group means does not lie primarily in the space spanned by the first few eigenfunctions. We find that PLS scores can be more efficient than PC scores in capturing group mean differences.  

This study contributes to the literature in two ways. In our numerical results, the new method shows improved prediction accuracy and strength in dimension reduction, and extends the functional Bayes classification to multiclass classification. In the theoretical analysis, several new conditions are added for the functional data to achieve asymptotic optimality.  These conditions are required because of the unequal group-specific eigenfunctions. Moreover, we propose asymptotic sparsity assumptions on the inverse of the copula correlations in our new method, following the design of \cite{yuan10} and \cite{Liu} for high-dimensional data. We also build a new theorem that uses the special copula structure to achieve asymptotic perfect classification.

In Section \ref{model}, we introduce our model and the copula-based functional Bayes classifiers.  Section \ref{sim} contains a comprehensive simulation study comparing our methods with existing classifiers on both binary and multiclass problems.
Section \ref{realdata} uses two real-data examples to show the strength of our classifiers in terms of accuracy and dimension reduction with respect to data size.
In Section \ref{theory}, we discuss the asymptotic properties of our classifiers.  We also establish conditions for our classifiers to achieve perfect classification on data generated by Gaussian and non-Gaussian processes.  Finally, in Section \ref{discussion}, we discuss future work, including extending the classification to the case where there are multiple functional predictors. Additional results and detailed proofs are provided in the Supplementary Material.
\setcounter{equation}{0}
\section{Model Setup \& Functional Bayes Classifiers with Copulas} \label{model}

\subsection{Methodology}
Suppose $(X_{i \cdot \cdot}, Y_i), \  i=1, \ldots, n$ are independent and identically distributed (i.i.d.) from the joint distribution of $(X, Y)$, where  $X$ is a square integrable function over some compact interval $\mathcal{T}$, that is, $X \in \mathcal{L}^2(\mathcal{T})$. Here $Y=0, 1$ is an indicator of groups $\Pi_0$ and $\Pi_1$, respectively, and $\pi_k=P(Y=k)$.   In addition, $X_{i\cdot k}$, for $i=1, \ldots, n_k$ and $k=0,1$, denotes the $i$th sample curve of  $X_{\cdot \cdot k}=\left(X|Y=k\right)$, and $n=\sum_{k=0, 1} n_k$. Our goal is to classify a new observation, $x$. 

Note that throughout the paper, we order the index of $X$ by observation counts ($i$), joint basis ($j$), and group labels ($k$): for curves, $X_{i\cdot \cdot}$ denotes the $i$th observation of the random function $X$, and $X_{\cdot \cdot k}$ is the random function $X|Y=k$. Therefore, $X_{i\cdot k}$ is the $i$th sample curve of $X_{\cdot \cdot k}$. Furthermore, $X_{\cdot j \cdot}$ and $X_{\cdot jk}$ are random variables from projecting $X$ and $X_{\cdot \cdot k}$, respectively, onto the $j$th joint basis function $\psi_j$, with $X_{ijk}$ the $i$th observation of $X_{\cdot jk}$. 

\cite{DMY2017} extended the Bayes classification from multivariate  to functional data: a new curve $x$ is classified into $\Pi_1$ if
\begin{equation} \label{eq:1}
Q(x)=\dfrac{P(Y=1|X=x)}{P(Y=0|X=x)}=\dfrac{\fbar_1(x) \pi_1}{\fbar_0(x) \pi_0} \approx \dfrac{f_1(x_1, \ldots, x_J) \pi_1}{f_0(x_1, \ldots, x_J) \pi_0} > 1,
\end{equation}
where $\fbar_k$ is the density of $X_{\cdot \cdot k}$ and $f_k$ is the joint density of the scores $X_{\cdot jk}$ on the basis $\psi_j$, for $1 \le j \le J$.

A key feature of the Bayes classification on functional data is that the classifiers vary with the choice of basis functions $\psi_j$ and with the estimation of $f_0, f_1$.  \cite{DMY2017} built the original functional Bayes classifier (BC), upon two important assumptions.  First, the sets of the first $J$ eigenfunctions, $\{\phi_1, \ldots, \phi_J\}$, of the covariance operators $G_1$ and $G_0$ of the two groups are equal.  Here, $G_k (\phi_j)(t)=\int_{\mathcal{T}} G_k (s,t) \phi_j (s) ds=\lambda_{jk} \phi_j (t)$,  $G_k(s,t)=\text{cov} \{X_{\cdot \cdot k}(s), X_{\cdot \cdot k}(t) \}=\displaystyle \sum_{j=1}^{\infty} \lambda_{jk} \phi_j (s) \phi_j (t)$, and $\lambda_{jk}$ is the $j$th eigenvalue in group $k$.
Second, letting $\psi_j=\phi_j$, for $1 \le j \le J$, the $J$ projected scores $X_{\cdot jk}=\langle X_{\cdot \cdot k}, \phi_{j} \rangle$ are independent. Then, with $f_{jk}$ as the marginal density of $X_{\cdot jk}$, the log ratio of $Q(x)$ in Eq.(\ref{eq:1}) becomes
\begin{equation} \label{eq:2}
\log Q(x) \approx \log Q_J(x)=\log \left (\dfrac{\pi_1}{\pi_0} \right)+\displaystyle \sum_{j=1}^J \log \left \{ \dfrac{f_{j1}(x_j)}{f_{j0}(x_j)} \right \}.
\end{equation}

A classifier that uses Eq.(\ref{eq:2}) avoids the curse of dimensionality and only needs to estimate the marginal densities, $f_{jk}$. However, as later simulations and examples show, its performance can degrade if the two aforementioned assumptions are not met. We propose new semiparametric Bayes classifiers based on copulas that do not require these two assumptions, and yet are free from the curse of dimensionality. The theoretical work in Section~\ref{theory} proves that these classifiers maintain the advantages of BC over a wider range of data distributions, and are capable of perfect classification when $n \to \infty$ and $J\to\infty$.

\subsection{Copula-Based Bayes Classifier with PC} \label{section: 2.2}
Allowing for possibly unequal group eigenfunctions, the covariance function of  group $k$ is
 \[
 G_k (s,t)= \text{cov}\left(X_{\cdot \cdot k} (s), X_{\cdot \cdot k}(t) \right)=\displaystyle \sum_{j=1}^\infty \lambda_{jk} \phi_{jk}(s) \phi_{jk}(t), \ k=0,1,
 \]
 with $\phi_{1k}, \ldots, \phi_{Jk}$ as the eigenfunctions. For simplicity, we assume the group means are $E(X|Y=0)=0$ and $E(X|Y=1)=\mu_d$. The joint covariance operator $G$ then has the kernel $G (s,t)= \pi_1 G_1(s,t)+\pi_0 G_0(s,t)+\pi_1\pi_0\mu_d(s)\mu_d(t)$. 
 
 As later examples suggest, the unequal group eigenfunction case is common.
 To accommodate this case, we  can project data from both groups onto the same basis functions.
Therefore, we use the eigenfunctions $\phi_1, \ldots, \phi_J$ of $G$ as the basis $\psi_1, \ldots, \psi_J$. 

The joint density $f_k$,for $k=0,1$, in Eq.(\ref{eq:1}) allows for potential score correlation and tail dependency, which we use copulas to model. A copula is a multivariate cumulative distribution function (CDF) with univariate marginal distributions that are all uniform, and it characterizes only the dependency between the components; see, for example, \cite{RM2015}. Here, we extend its use to truncated scores of functional data. 

Let $x_j=\langle x, \phi_j\rangle=\int_{\mathcal{T}} x(t)\phi_j(t) dt$ be the $j$th projected score of $x$. The copula function $C_k$ describes the distribution of the first $J$ scores in $\Pi_k$ by
\begin{align} 
F_k \left (x_1, \ldots, x_J \right )&=C_k\left \{F_{1k}(x_1), \ldots, F_{Jk}(x_J) \right \}, \label{eq:3}\\
f_k \left (x_1, \ldots, x_J \right )&=c_k\left \{F_{1k}(x_1), \ldots, F_{Jk}(x_J) \right \} f_{1k}(x_1)\cdots f_{Jk}(x_J). \label{eq:4}
\end{align} 
$F_k$ in Eq.(\ref{eq:3}) is the joint CDF of $X_{\cdot 1k}, \ldots, X_{\cdot Jk}$, and $C_k$ is the CDF of the uniformly distributed variables $F_{1k}(X_{\cdot 1k}), \ldots, F_{Jk}(X_{\cdot Jk})$, where $F_{jk}$ is the univariate CDF of $X_{\cdot jk}$. In Eq.(\ref{eq:4}), the joint density $f_k$ is decomposed into score marginal densities $f_{jk}$ and the copula density $c_k$ for the dependency between the projected scores. Our revised classifier is $\mathds{1}\left\{\log Q_J^*(x) >0\right\}$; that is, the new curve $x$ belongs to $\Pi_1$ if
\begin{equation} \label{eq:5}
\log Q^*_J \left(x\right)=\log \left (\dfrac{\pi_1}{\pi_0} \right )+\displaystyle \sum_{j=1}^J \log \left \{ \dfrac{f_{j1}(x_j)}{f_{j0}(x_j)} \right \} + \log \left \{\dfrac{c_1\{ F_{11}(x_1), \ldots, F_{J1}(x_J)\}}{c_0\{ F_{10}(x_1), \ldots, F_{J0}(x_J)\}} \right \} > 0.
\end{equation}

We also consider situations in which $Y$ has more than two classes. A more general procedure for multiclass classification is described in the Supplementary Material Section \ref{sp:multiclass}.

\subsection{Choice of Copula and Correlation Estimator} \label{section:copula}
There are a number of approaches to copula estimation. \cite{G1995} studied the asymptotic properties of semiparametric estimation in copula models. \cite{CHEN2006} discussed semiparametric copula estimation to characterize the temporal dependence in time series data. \cite{K2013} estimated the copula density nonparametrically using penalized splines, and \cite{GOV2012} applied multivariate kernel density estimation to copulas. 

To address the high dimensionality of functional data, we model the copula densities $c_1$ and $c_0$ parametrically, and use a kernel estimation for the univariate densities $f_{1k}, \ldots, f_{Jk}$, for $k=0, 1$.
We study the properties of Bayes classification using both Gaussian copulas and t-copulas, denoted by BCG and BCt, respectively. When $c_k$ is modeled by a Gaussian copula in Eq.(\ref{eq:4}),
$
c_k (\cdot)  = c_{ G,k }( \cdot | \mathbf{\Omega}_{G,k}),
$
where $c_{G,k}$ is the Gaussian copula density with $J \times J$ correlation matrix $\mathbf{\Omega}_{G,k}$. When there is tail dependency between the scores, a t-copula is used:
$
c_k( \cdot ) = c_{t,k}(  \cdot  | \mathbf{\Omega}_{t,k}, \nu_k  ),
$
with $c_{t,k}$ the t-copula density, $\mathbf{\Omega}_{t,k}$ the correlation matrix, and $\nu_k$ the tail index. 

There are several ways to estimate the correlation matrices $\mathbf{\Omega}_{G,k}$ or $\mathbf{\Omega}_{t,k}$.  We use rank correlations, and specifically, Kendall's $\tau$. Kendall's $\tau$ between the projected scores of $X_{\cdot \cdot k}$ on the $j$th and $j'$th basis is 
$
\rho_{\tau} \left(X_{\cdot j k}, X_{\cdot j' k} \right)= E \left[\text{sign} \left\{ \left(X^{(1)}_{\cdot j k}-X^{(2)}_{\cdot j k} \right)\left(X^{(1)}_{\cdot j' k}-X^{(2)}_{\cdot j' k} \right) \right \} \right], 
$
sign$\left(x\right)=\mathds{1} \left\{x>0\right\}-\mathds{1} \left\{x<0\right\}$, and $X^{(1)}_{\cdot \cdot k}$, $X^{(2)}_{\cdot \cdot k}$ are i.i.d.\ samples of $X_{\cdot\cdot k}$. The robustness of the rank correlation and its optimal asymptotic error rate are studied by  \cite{Liu}.

A relationship exists between the $(j, j')$th entry of the copula correlation $\mathbf{\Omega}_k$ and Kendall's $\tau$: $\mathbf{\Omega}^{jj'}_{k}=\sin \left(\dfrac{\pi}{2} \rho_{\tau}  \left(X_{\cdot j k}, X_{\cdot j' k} \right) \right)$ for both Gaussian copulas and $t$-copulas (\cite{Kendall}; \cite{Kruskal}; \cite{RM2015}). Then, $\mathbf{\Omega}^{jj'}_{k}$ is estimated by Kendall's $\tau$ as
$\hat{\mathbf{\Omega}}_k^{jj'}=\sin \left(\dfrac{\pi}{2} \hat{\rho}_{\tau, k}^{jj'}\right)$, where
\begin{align} 
&\hat{\rho}_{\tau, k}^{jj'}=\dfrac{2}{n_k\left(n_k-1\right)}\sum_{1 \le i \le i' \le n_k} \text{sign} \left\{\langle X_{i \cdot k}-X_{i' \cdot k}, \hat{\phi}_j\rangle \langle X_{i \cdot k}-X_{i' \cdot k}, \hat{\phi}_{j'}\rangle\right\}. \nonumber
\end{align} 
It is possible that $\hat{\mathbf{\Omega}}_k$ is not positive definite, but this problem is easily remedied (\cite{RM2015}).
Another rank correlation, Spearman's $\rho$, is similar and is omitted here. In the Supplementary Material \ref{sup:correst}, we show that for Gaussian copulas, the difference between the log determinant of $\hat{\mathbf{\Omega}}_k$, as estimated, and that of $\mathbf{\Omega}_k$ is $Op \left(J\sqrt{(\log J)/n}\right)$.

Additionally for t-copulas with $\hat{\mathbf{\Omega}}_{t,k}$, we apply a pseudo-maximum likelihood to estimate the tail parameter $\nu_k > 0$ by maximizing the log copula density
\newline
$
\displaystyle \sum_{i=1}^{n_k} \log \left[c_{t,k} \left\{\hat{F}_{1k} \left(X_{i1k} \right), \ldots, \hat{F}_{Jk}\left(X_{iJk}\right)|\hat{\mathbf{\Omega}}_{t,k}, \nu_k \right\} \right],
$
with $\hat{F}_{jk} \left(x\right)=\sum_{i=1}^{n_k} \mathds{1}\left\{X_{ijk} \le x\right\}/\left(n_k+1\right)$. \cite{MZ2002} discuss the maximum pseudo-likelihood estimation of t-copulas, and apply it to model extreme co-movements of financial assets. 

\subsection{Marginal Density $f_{jk}$ Estimation}
We estimate the marginal density $f_{jk}$ of the projected scores $X_{\cdot j k}$ using a kernel density estimation:
$
\hat{f}_{jk} \left(\hat{x}_j\right)=\dfrac{1}{n_k h_{jk}} \displaystyle \sum_{i=1}^{n_k} K\left(\dfrac{\langle x-X_{i \cdot k}, \hat{\phi}_j \rangle}{h_{jk}}\right),
$
with $K$ the standard Gaussian kernel, $\hat{\phi}_j$ the estimated $j$th joint eigenfunction, $h_{jk}=\hat{\sigma}_{jk} h$ the bandwidth for scores projected on $\hat{\phi}_j$ in group $k$, $\hat{\sigma}_{jk}$ as the estimated standard deviation of $\sigma_{jk}=\sqrt{\text{Var }(X_{\cdot j k})}$, and $\hat{x}_j=\langle x, \hat{\phi}_j\rangle$. Then, $\log Q_J^*\left(x\right)$ in Eq.(\ref{eq:5})  is estimated by
\begin{equation*} 
\log \hat{Q}^*_J \left(x\right)=\log \left (\dfrac{\hat{\pi}_1}{\hat{\pi}_0} \right )+\displaystyle \sum_{j=1}^J \log \left \{ \dfrac{\hat{f}_{j1}(\hat{x}_j)}{\hat{f}_{j0}(\hat{x}_j)} \right \} + \log \left \{\dfrac{\hat{c}_1\{ \hat{F}_{11}(\hat{x}_1), \ldots, \hat{F}_{J1}(\hat{x}_J)\}}{\hat{c}_0\{ \hat{F}_{10}(\hat{x}_1), \ldots, \hat{F}_{J0}(\hat{x}_J)\}} \right \},
\end{equation*}
where $\hat{c}_k$ is the Gaussian copula or t-copula density with the estimated parameters, and $\hat{\pi}_k=n_k/n$. Proposition~\ref{prop1} in Section~\ref{theory} shows that with an additional mild assumption, when the group eigenfunctions are unequal, $|\hat{f}_{jk}(\hat{x}_j)-f_{jk}(x_j)|$ is asymptotically bounded at the same rate as when the eigenfunctions are equal. Detailed proofs are included in Supplementary Material.

\subsection{Copula-Based Bayes Classifier with Partial Least Squares} \label{section:PLS}
An interesting alternative to using PCs is to use functional partial least squares (FPLS). FPLS finds directions that maximize the covariance between the projected $X$ and $Y$ scores, rather than focusing on the variation in  $X$ alone, as with \mbox{PCA}.  As the algorithm in the Supplementary Material \ref{sup:pls} describes, FPLS iteratively generates a weight function $w_j$ at each step $j$, for $1 \le j \le J$, which solves
$
\max_{w_j \in \mathcal{L}^2(\mathcal{T})} \text{cov}^2 \left\{Y^{j-1}, \langle X^{j-1}, w_j \rangle \right\},
$
such that $\|w_j\|=1$ and $ \langle w_j, G(w_j') \rangle=0$, for all $ 1 \le j' \le j-1$.  Recall that $G$ is the joint covariance operator of the random function $X$.   Here, $Y^{j-1}$ and $X^{j-1}$ are the updated function $X$ and the indicator $Y$ at step $j-1$ (see \ref{sup:pls}), respectively, and their corresponding sample values are denoted as $Y^{j-1}_i$ and $X^{j-1}_{i\cdot \cdot}$, for $i=1, \ldots, n$.

The algorithm gives the decomposition $X_{i \cdot \cdot}(t)=\sum_{j=1}^J s_{ij} P_j(t)+E_{i}(t)$, for $t \in \mathcal{T}$, where $\mathbf{s}_i=\left(s_{i1}, \ldots, s_{iJ}\right)^T$ is the length $J$ score vector, $P_j \in \mathcal{L}^2(\mathcal{T})$, for $1 \le j \le J$, are loading functions, and $E_{i}$ is the residual.  \cite{P2007} investigated PLS in linear discriminant analysis (LDA), and defined score vectors $\mathbf{S}_j$ as eigenvectors of the product of the Escoufier's operators of $X$ and $Y$ (\cite{E1970}).  For our case, the classifiers BCG and BCt now act on the PLS scores $\mathbf{s}_i=\left(s_{i1}, \ldots, s_{iJ}\right)^T$ of each observation $X_{i \cdot \cdot}$. We refer to these classifiers as BCG-PLS and BCt-PLS, respectively.  

The dominant PCA directions might only have large within-group variances and small between-group differences in means.  Such directions will have little power to discriminate between groups.
This problem can be fixed by FPLS.   The advantages of FPLS have been discussed, for example, by \cite{P2007} and  \cite{DH2012}. The latter found that when the difference between the group means projected on the $j$th PC direction is large only for large  $j$, their functional centroid classifier with PLS scores has lower misclassification rates than when using PCA scores.  As later examples show, FPLS is especially effective  in such situations. 

\setcounter{equation}{0}
\section{Comparison of Classifiers using Simulated Data} \label{sim}


\subsection{Data Design} \label{design}
To set up the simulation, for simplicity, we use $\pi_1=\pi_0=0.5$.  By  Karhunen\textendash Lo\`eve expansions, the functions $X_{i\cdot k}$, for $i=1, \ldots, n_k$, of group $k=0, 1$ can be decomposed as $X_{i\cdot k}=\mu_k + \sum_{j=1}^J \sqrt{\lambda_{jk}} \xi_{ijk} \phi_{jk}$, where $\mu_k$ is the group mean, $\lambda_{jk}$ is the $j$th eigenvalue in group $k$ corresponding to eigenfunction $\phi_{jk}$, and $\lambda_{1k} > \dots > \lambda_{Jk}$. The variables $\xi_{ijk}$ are distributed with $E(\xi_{ijk})=0$, var$(\xi_{ijk})=1$, and cov$(\xi_{i j k}, \xi_{i j' k})=0$, for $\forall j \ne j'$. The compact interval $\mathcal{T}$ is $\left[0, 1\right]$, and the functions $X_{i\cdot k}$ are observed at the equally spaced grid $t_1=0, t_2=1/50, \ldots, t_{51}=1$, with i.i.d.\ Gaussian noise $\epsilon_{ik}(t)$ centered at zero and with standard deviation $0.5$.  The classifiers are implemented both with and without pre-smoothing the data. Because they have similar performance, we report only the results using pre-smoothing. The total sample size is $n=250$, with $100$ training and $150$ test cases. The number of eigenfunctions for curve generation is $J=201$, double the size of the training data set, to imitate the infinite dimensions of the functional data. For each $j$, the bandwidth $h_{jk}$ for KDE is selected by the direct plug-in method (\cite{S1991}). Simulations are repeated $N=1000$ times. The Supplementary Material \ref{sp:v} includes additional results with increased training size.

The distribution of $(X,Y)$ is determined by four factors: the eigenfunctions (whether common or group-specific), difference between group means, eigenvalues, and score distributions. 
The factors are varied according to a $2 \times 2 \times 2 \times 3$ full factorial design, described below. We adopt a four-letter system to label the 24 factor-level combinations, which we call  ``scenarios.''

\bigskip\noindent
{\bf Factor 1: Eigenfunctions $\phi_{1k}, \ldots, \phi_{Jk}$ of group $k$:} \label{section3.1.1}
The first factor specifies the eigenfunctions of the covariance operators $G_1$ and $G_0$. When the two sets $\phi_{1k}, \ldots, \phi_{Jk}$, for $k=0, 1$, are the same, let the common eigenfunctions be the Fourier basis on $\mathcal{T}=\left[0, 1\right]$, where $\phi_{1k}(t)=1, \phi_{jk}(t)=\sqrt{2} \cos(j\pi t)$ or $\sqrt{2} \sin \left(\left(j-1\right)\pi t\right)$, for $1< j \le 201$ even or odd.

When the two groups have unequal eigenfunctions, the group $k=0$  uses the Fourier basis $\phi_{10}, \ldots, \phi_{J0}$ as above, but the group $k=1$ has a Fourier basis rotated by iterative updating: 
\begin{enumerate}[i)]
\item
let the starting value of $\phi_{11}, \ldots, \phi_{J1}$ be the original Fourier basis functions, as above;
\item
at step $(j, j')$, where $1 \le j \le J-1$, $j'=j+1, \ldots, J$, the pair of functions $(\phi^*_{j1},  \phi^*_{j'1})$ is generated by a Givens rotation of angle $\theta_{jj'}$ of the current pair $\left(\phi_{j1}, \phi_{j'1}\right)$ such that $\phi^*_{j1}(t)=\cos \left(\theta_{j j'}\right) \phi_{j1}(t)-\sin \left(\theta_{j j'}\right) \phi_{j'1}(t)$, $\phi^*_{j'1}(t)=\sin\left(\theta_{j j'}\right) \phi_{j1}(t)+\cos\left(\theta_{j j'}\right) \phi_{j'1}(t)$.

\item
the rotation angle for each pair of $(j, j')$ is $\theta_{j j'}=\dfrac{\pi}{3} \left(\lambda_{j0}+\lambda_{j'0}\right)$, with $\lambda_{j0},\lambda_{j'0}$ the $j$th and $j'$th eigenvalues, respectively, of group $k=0$. Hence, the major eigenfunctions receive greater rotations, with the angles proportional to their eigenvalues;
\item
then, we update $\phi_{j1}, \phi_{j'1}$ with the new $\phi^*_{j1}, \phi^*_{j'1}$ and continue the rotations until each pair of $(j, j')$, with $1 \le j \le J-1$, $j'=j+1, \ldots, J$, is rotated.
\end{enumerate}

The rotated Fourier basis of group $k=1$ guarantees that both groups $\Pi_1$ and $\Pi_0$ span the same eigenspace and satisfy the null hypothesis of the test of equal eigenspaces developed by \cite{BHK2009}.   This test was used by \cite{DMY2017} to check whether the two groups have the same eigenfunctions.    However, having equal eigenspaces is a necessary, but not sufficient condition for having equal sets of eigenfunctions, as proved by the rotated basis.
Because of the unequal eigenfunctions of the operators $G_1$ and $G_0$, the scores $X_{ijk}$ are correlated, which can be modeled by the new copula-based classifiers.  

We also tested other choices of the second set of eigenfunctions, including the Haar wavelet system on $\mathcal{L}^2([0,1])$. However, the results are similar, and so are omitted. We denote the scenario where $\Pi_1$ and $ \Pi_0$ have equal eigenfunctions as S (same), and otherwise as R (rotated).

\bigskip\noindent
{\bf Factor 2: Difference, $\mu_d$, Between the Group Means:}
The second factor, which is at two levels, S (same) and D (different), is the difference between the group means, $\mu_d=\mu_1-\mu_0$. For simplicity, we let $\mu_0=0$, $\mu_1=\mu_d$. Here, $\mu_{d}(t)=t$.

\bigskip\noindent
{\bf Factor 3: Eigenvalues $\lambda_{1k}, \ldots, \lambda_{Jk}$ of Group $k$:}
The third factor, at two levels labeled S and D, is whether the eigenvalues $\lambda_{1k}, \ldots, \lambda_{Jk}$ depend on $k$. 
We label the level where $\lambda_{j1}=\lambda_{j0}=1/j^2$ as S,  and that when $\lambda_{j1}=1/j^3$ and $ \lambda_{j0}=1/j^2$ as D, for $1 \le j \le J$.

\bigskip\noindent
{\bf Factor 4: Distribution of the standardized scores $\xi_{ijk}$:}
The fourth factor, at three levels N (normal), T (tail dependence and skewness), and V (varied), is the distribution of $\xi_{ijk}$. 

{\it N}: $\xi_{i1k}, \ldots, \xi_{iJk}$ have a Gaussian distribution $N\left(0, 1\right)$ for both $k=0$ and $1$.  

{\it T:} This level includes tail dependency by setting $\xi_{ijk}=\left(\delta_{ijk}-b\right)/\eta_{ik}$, where $\delta_{ijk} \sim \text{Exp}(\lambda^*), \lambda^*=5\sqrt{3}/3, b=1/\lambda^*$, and $\eta_{ik} \sim \chi^2 (5)/5$, for all $j=1, \ldots, J$. All $\delta_{ijk}$ and $\eta_{ik}$ are mutually independent, whereas the scores $\xi_{ijk}$ on each basis $j$ are  uncorrelated, but dependent, because they share the same denominator, $\eta_{ik}$.   The scores are skewed in both groups.

{\it V:}  In this level, the scores in the two groups have different types of distributions, with $\xi_{ij1} \sim N\left(0, 1\right)$, and $\xi_{ij0} \sim \text{Exp}(1)-1$. Simulation results of a different choice of the varied distributions of $\xi_{ij1}$ and $\xi_{ij0}$ are included in Supplementary Material Section \ref{sp:v} Table \ref{sp:binaryV500}.

Table \ref{table:1} lists all $24$ scenarios used in the simulations:
\begin{table}[h!]
\centering
\begin{tabular}{|P{4cm}|P{2cm}|P{2cm}|P{2cm}|} \hline 
   & $\xi_{ijk} \sim$ N  & $\xi_{ijk} \sim$ T & $\xi_{ijk} \sim$ V \\ 
 \hline
 $ \mu_d=0, \  \lambda_{j1} = \lambda_{j0}$ &  (R/S)SSN & (R/S)SST & (R/S)SSV \\
 \hline
 $ \mu_d=0, \ \lambda_{j1} \ne \lambda_{j0}$ &  (R/S)SDN & (R/S)SDT & (R/S)SDV \\ 
 \hline
 $\mu_d \ne 0,\ \lambda_{j1}=\lambda_{j0}$ &  (R/S)DSN & (R/S)DST  & (R/S)DSV\\ 
 \hline
 $ \mu_d \ne 0, \ \lambda_{j1} \ne \lambda_{j0}$ &  (R/S)DDN & (R/S)DDT & (R/S)DDV \\
 \hline
\end{tabular}
\caption{Simulation scenarios. The labels are ordered: eigenfunctions (R/S), group mean (S, D), eigenvalues (S, D), and $\xi_{ijk}$ distributions (N, T, V). Note that in SSSN and SSST, functions from both groups have the same distribution. We simply include them to have a full factorial design.}
\label{table:1}
\end{table}

\subsection{Functional Classifiers} \label{cltypes}
The classifiers used in this study are listed below. Five of them are  Bayes classifiers, and the last three are non-Bayes.  The methods proposed in this paper are described in (ii) - (iii).
\begin{enumerate}[(i)]
\item
BC: the original Bayes classifier of \cite{DMY2017}, with the log density ratio given by Eq.(\ref{eq:2}). The scores are by projection onto PCs;
\item
BCG, BCG-PLS: Bayes classifiers with a Gaussian copula to model correlation, using PC and PLS scores, respectively. The rank correlation used is Kendall's $\tau$. Both the Gaussian copula and the t-copula densities can be implemented using the R package {\tt copula} (\cite{copula});

\item
BCt, BCt-PLS: Bayes classifiers similar to (ii), but using a t-copula instead;

\item
CEN: functional centroid classifier in \cite{DH2012}, where observation $x$ is classified to group $k=1$ if $T(x)=\left(\langle x, \psi \rangle-\langle \mu_1, \psi \rangle \right)^2-\left(\langle x, \psi \rangle-\langle \mu_0, \psi \rangle \right)^2 \le 0$, with $\mu_1$ and $\mu_0$ the group means. Here, $\psi=\sum_{j=1}^{J^*} \lambda_j^{-1}\mu_j \phi_j$ is a function of the first $J^*$ joint eigenfunctions $\phi_j$, the corresponding eigenvalues $\lambda_j$, and  $\mu_j=\langle \mu_1-\mu_0, \phi_j \rangle$;
\item
PLSDA (PLS discriminant analysis): binary classifier using Fisher's linear discriminant rule, with FPLS as a dimension-reduction method. It is implemented in the R package {\tt pls} (\cite{plspkg}); 
\item
Logistic regression: logistic regression on functional PCs, implemented by the R function {\tt glm}. It is one of the functional generalized regressions discussed in \cite{muller2005generalized}.
\end{enumerate}  

In each simulation, $J^*$  is selected using $10$-fold cross validation on the training data. The candidate $J$ values range from $1$ to $30$ ($2$ to $30$ for classifiers using copulas). The estimation of the joint eigenfunctions $\phi_j$ follows the discretization approach of the fPCA, as described in Chapter 8.4 of \cite{FDA2005}. A similar discretization strategy is used for the PLS basis.

\subsection{Classifier Performance} \label{simsection}
\begin{table}[h!] 
\centering
\resizebox{\columnwidth}{!}{
\begin{tabular}{r|r|rrrr|rrr|rr}
  \hline
 & BC & BCG & BCGPLS & BCt & BCtPLS & CEN & PLSDA & logistic & CV & Ratio (CV)\\ 
  \hline
  \rowcolor{Gray0}
SSSN & 0.502 & 0.502 & 0.500 & 0.500 & 0.501 & 0.502 & 0.501 & 0.500 & 0.501 & 0.23\%\\ 
  SSDN &\textbf{ 0.227} & 0.244 & 0.345 & 0.258 & 0.443 & 0.464 & 0.495 & 0.466 & 0.232 & 2.43\% \\ 
  SDSN & 0.347 & 0.351 & 0.361 & 0.351 & 0.363 &\textbf{ 0.275} & 0.304 & 0.279 & 0.291 & 5.88\% \\ 
  SDDN &\textbf{ 0.169} & 0.173 & 0.303 & 0.175 & 0.327 & 0.231 & 0.262 & 0.234 & 0.173 & 2.64\%\\ 
  \hline
  \rowcolor{Gray0}
  SSST & 0.507 & 0.502 & 0.500 & 0.505 & 0.499 & 0.499 & 0.499 & 0.499 & 0.502 & 0.69\%\\ 
  SSDT &\textbf{ 0.438} &\textit{ 0.441} & 0.454 & 0.456 & 0.471 & 0.488 & 0.497 & 0.490 & 0.452 & 3.19\%\\ 
  SDST & 0.188 & 0.183 & 0.270 & 0.184 & 0.311 &\textbf{ 0.167} & 0.234 &\textit{ 0.169} & 0.170 & 1.96\% \\ 
  SDDT & 0.166 & 0.161 & 0.237 & 0.160 & 0.296 &\textbf{ 0.148} & 0.233 &\textit{ 0.150} & 0.152 & 2.59\% \\ 
  \hline
  SSSV &\textbf{ 0.355} & 0.361 & 0.484 & 0.363 & 0.493 & 0.476 & 0.481 & 0.489 & 0.363 & 2.20\%\\ 
  SSDV &\textbf{ 0.253} & 0.270 & 0.373 & 0.276 & 0.430 & 0.455 & 0.477 & 0.462 & 0.257 & 1.78\%\\ 
  SDSV &\textbf{ 0.264} & 0.275 & 0.401 & 0.276 & 0.408 & 0.279 & 0.315 & 0.283 & 0.273 & 3.27\%\\ 
  SDDV &\textbf{ 0.202} & 0.209 & 0.309 & 0.207 & 0.313 & 0.236 & 0.280 & 0.238 & 0.210 & 3.95\%\\ 
  \hline
  \hline
  RSSN & 0.327 &\textbf{ 0.147} & 0.183 & \textbf{0.147} & 0.180 & 0.494 & 0.497 & 0.485 & 0.151 & 2.67\%\\ 
  RSDN & 0.252 &\textbf{ 0.090} & 0.140 & 0.093 & 0.164 & 0.489 & 0.500 & 0.482 & 0.093 & 2.93\%\\ 
  RDSN & 0.287 &\textbf{ 0.128} & 0.154 & \textbf{0.128} & 0.152 & 0.327 & 0.333 & 0.329 & 0.131 & 2.71\%\\ 
  RDDN & 0.208 &\textbf{ 0.077} & 0.112 &\textit{ 0.079} & 0.128 & 0.287 & 0.300 & 0.288 & 0.080 & 3.44\%\\ 
  \hline
  RSST & 0.435 &\textbf{ 0.354} & 0.373 &\textit{ 0.357} & 0.372 & 0.486 & 0.490 & 0.489 & 0.361 & 1.95\%\\ 
  RSDT & 0.400 &\textbf{ 0.326} & 0.348 & 0.336 & 0.365 & 0.486 & 0.491 & 0.485 & 0.339 & 3.87\%\\ 
  RDST & 0.178 &\textbf{ 0.148} & 0.248 & 0.154 & 0.261 & 0.174 & 0.252 & 0.175 & 0.156 & 5.80\%\\ 
  RDDT & 0.166 &\textbf{ 0.137} & 0.217 & 0.142 & 0.255 & 0.159 & 0.249 & 0.158 & 0.147 & 7.68\%\\ 
  \hline
  RSSV & 0.266 &\textbf{ 0.147} & 0.202 &\textit{ 0.149} & 0.204 & 0.472 & 0.481 & 0.475 & 0.150 & 1.71\%\\ 
  RSDV & 0.233 &\textbf{ 0.100} & 0.143 & 0.105 & 0.157 & 0.465 & 0.475 & 0.469 & 0.104 & 3.85\%\\ 
  RDSV & 0.241 &\textbf{ 0.145} & 0.183 &\textit{ 0.146} & 0.191 & 0.332 & 0.349 & 0.337 & 0.148 & 2.28\%\\ 
  RDDV & 0.238 &\textbf{ 0.116} & 0.157 & 0.120 & 0.167 & 0.299 & 0.325 & 0.300 & 0.121 & 3.97\%\\ 
   \hline
\end{tabular}}
\caption{Misclassification rates of eight classifiers on $24$ scenarios, each an average  from $1000$ simulations. Lowest rates of each data case are in bold, and cases within margin of error (see text) of the lowest are in italics. The column labeled CV contains error rates of the classifier selected by cross-validation. Ratio(CV) is the percent difference from the best of the eight classifiers for that scenario. CV error rates are not included in the rankings that determine coloring. SSSN and SSST are in gray because there is actually no difference between groups in these scenarios, and, because $\pi_0 = \pi_1 = 1/2$, the true misclassification rate is 0.5.}
\label{rates}
\end{table}

Table \ref{rates} contains the average misclassification rates over $1000$ simulations by each method on each scenario.  In addition, for each simulation, we use $10$-fold cross-validation to select the classifier with the best performance on the training data among the eight classifiers in Section~\ref{cltypes}. The average misclassification rates of the CV-selected classifier are listed in the CV column.  The column Ratio(CV) contains the percentage  difference between the CV-selected (CV) and the best (opt) classifier: $\text{Ratio(CV)} =\left \{\text{err(CV)}-\text{err(opt)} \right \} / \text{err(opt)} \times 100\%$. For each scenario, the lowest error rates of the eight classifiers are in bold. We label those within the optimal case's margin of error (MOE) for each data scenario $\gamma$ in italics: $\text{MOE}_{\gamma}=1.96\times\sigma^*_{\gamma}/\sqrt{1000}$, where $\sigma^*_{\gamma}$ is the sample standard deviation of the best classifier's (at scenario $\gamma$) error rates from $1000$ simulations. The simulations enable a comprehensive understanding of the classifiers' behaviors, which we now discuss.

\begin{itemize}
\item[\textendash]
\textit{Equal versus Unequal Eigenfunctions}. A comparison between the top and bottom half of Table \ref{rates} demonstrates the strength of our copula-based classifiers, especially on unequal eigenfunctions (bottom half). By its nature, BC has strong performance when the two groups have the same set of eigenfunctions and the scores $\xi_{ijk}$ are mutually independent, for example, in SSDN and SSDV. However, when the data have a more complicated structure, such as score tail dependency and location difference, CEN and logistic obtain better results (SDST, SDDT). Note that in every case with equal eigenfunctions, BCG/BCt are always the ones with rates closest to those of BC.  

\begin{figure}[h!] 
\centering
      \includegraphics[scale=0.43]{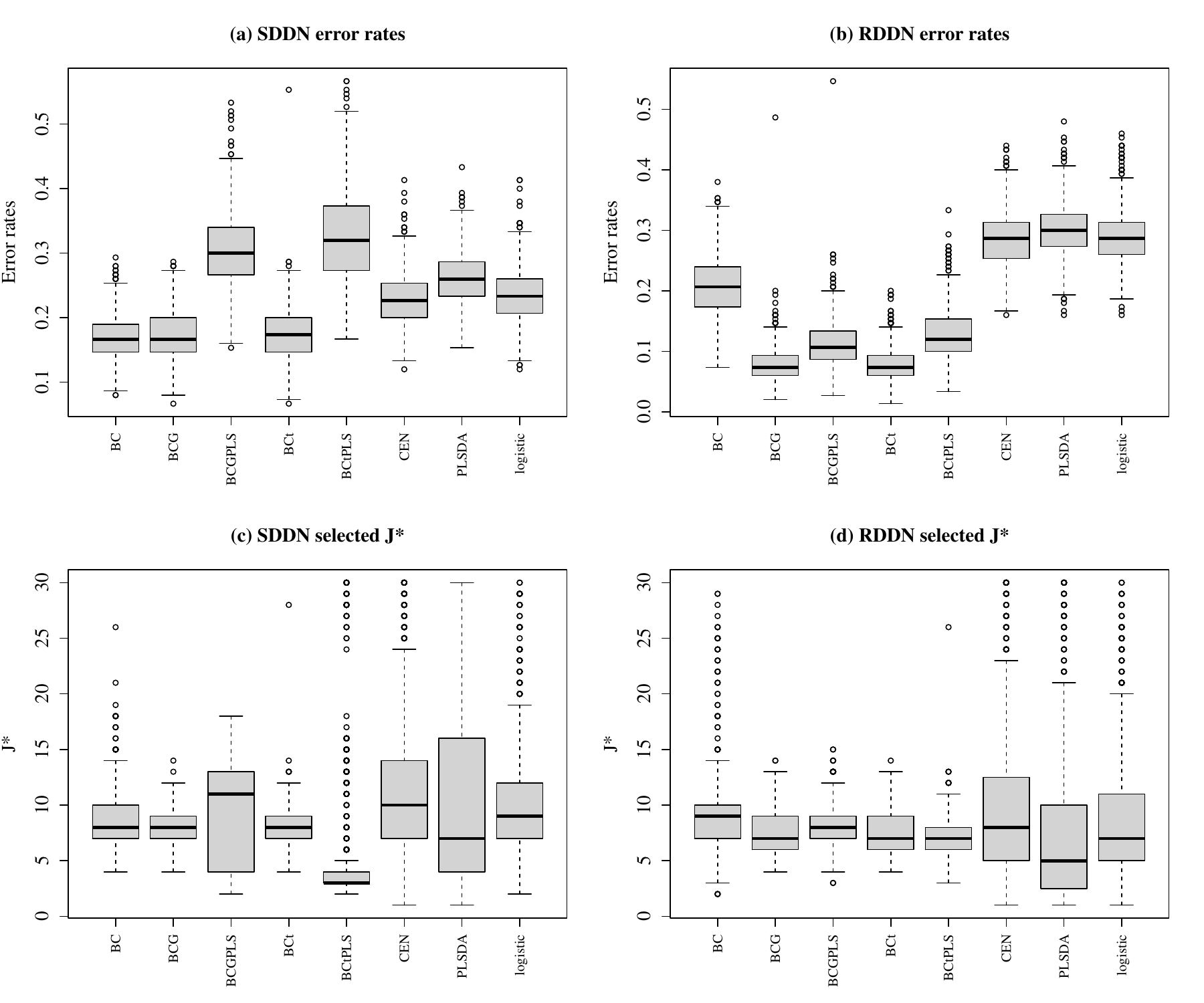}
  \caption{Part (a) and (b) are box plots of the error rates by the eight classifiers in scenarios SDDN and RDDN. The bottom two plots (c) and (d) are box plots of cross-validated $J^*$ in each simulation. \label{simbox}}
\end{figure}

On the other hand, when the group eigenfunctions are different, BC and the three non-Bayes classifiers fail to outperform BCG/BCt in any scenario, even though the group eigenspaces remain equal. BCG maintains its robust performance of lowest error rates throughout all cases. BCt is not far behind, and falls into BCG's MOE $50\%$ of the time as labeled. 

Fig.\ \ref{simbox} compares the misclassification rates and the corresponding $J^*$ selected in each of the $1000$ simulations at two scenarios, SDDN and RDDN.  These two scenarios differ only in their eigenfunction setting. In Plot (a), where the groups have equal eigenfunctions, BC, BCG, and BCt show similar behaviors in classification. In Plot (b), where the group eigenfunctions differ, BCG and BCt have the lowest error rates and variation, followed by BCG-PLS and BCt-PLS. In Plots (c) and (d), we find that BCG and BCt are the only classifiers that have a stable choice of optimal $J^*$: both methods choose $J^* < 10$ more than $75\%$ of the time with few outliers, regardless of whether the group eigenfunctions are equal or not.

\item[\textendash]
\textit{Difference between the group means}.
Under the equal eigenfunction setting, non-Bayes classifiers such as CEN and the logistic regression are naturally sensitive to a location difference, especially when other factors are kept the same; see for example, SDSN, SDST. 
However, in the bottom half of Table \ref{rates}, where the group eigenfunctions differ, BCG shows the strongest performance in all cases, with BCt a close second. 

In this table, the PC-based methods BCG and BCt show an advantage over their PLS counterparts in scenarios with a location difference. That is because $\mu_d$ is effectively captured by PCs. In Section \ref{sec:multiclass}, when the new $\mu_d$ has nonzero projections only on the last several bases, PLS-based classifiers can do a better job than other methods in distinguishing such a difference, as mentioned in \cite{DH2012}. This phenomenon is also discussed in Section \ref{realdata}.

\item[\textendash]
\textit{Difference in group eigenvalues and score distributions}. 
In general, we find that the marginal densities of the scores and their eigenvalues have similar effects on the classifiers' performance. They contribute to the difference of the functional distributions in each group, which the three non-Bayes methods (CEN, PLSDA, logistic) fail to detect. For all scenarios in Table \ref{rates} without a location difference, CEN, PLSDA, and the logistic regression all show very poor performance, with error rates close to $50\%$.

\end{itemize}

The two right-most columns in Table \ref{rates} show that the CV-selected method achieves comparable performance to the optimal result of each scenario. This demonstrates the stability and strength of our copula-based Bayes classifiers, especially under the unequal eigenfunction setting. Sections \ref{sec:corrRSDN} and \ref{sec:corrRSDT} in the Supplementary Material report the correlations between the first $10$ scores in the scenarios RSDN and RSDT, respectively.  These high correlations are consistent with the strong performance of the copula-based classifiers in the scenarios where the two groups have different eigenfunctions.

\subsection{Multiclass Classification Performance} \label{sec:multiclass}

We also investigate the performance of the aforementioned methods in terms of classifying data into more than two labels, because the group eigenfunctions from multiple different classes are more likely to be unequal, making it increasingly necessary to consider the dependency of the scores on the joint basis.

We now denote the group labels as $Y=k$, for $k=0, 1 , 2$, and set up the multiclass scenarios following the design in Section \ref{design}. The first column in Table \ref{table:multiclass} lists the $12$ scenarios considered. The first letter $M$ labels unequal group eigenfunctions: when $Y=0$ and $1$, the group eigenfunctions are the Fourier basis and its rotated counterpart, respectively, as described in type R of Factor 1 for binary data; when $Y=2$, the group basis is again the rotated Fourier functions on $\mathcal{T}=[0,1]$, but the rotation angle factor used in iii) of Factor 1 in Section \ref{design} is now $\pi/4$ instead of $\pi/3$. We omit cases of equal group eigenfunctions, because similar results can be found in the binary setup, and the likelihood of an unequal basis increases as the levels of $Y$ increase.

The second letter S or D again denotes equal group means or not, respectively. When the group means $\mu_k$ are unequal (labeled D), we set $\mu_0=0$, $\mu_1$ is the identity function used previously, and $\mu_2 = \sum_{j=192}^{201}\phi_{j0}$. The function $\mu_2$ follows a similar design to that of \cite{DH2012}, where the group mean only has nonzero weights on the last three of $40$ eigenfunctions. We assign the nonzero weights to the last $10$ of the $201$ bases.

Similarly, S or D in the third position represents the same or different group eigenvalues, respectively. When the group eigenvalues are equal, $\lambda_{jk}=10/j^2$ for all $k$; otherwise, $\lambda_{jk}=10/j^2, 10/j^3, 10/j$, respectively, for $k=0, 1, 2$, for $j\ge 1$. Finally,  the last letter inherits the design from Factor 4 of Section \ref{design} to describe the standardized score distribution patterns: similarly to the binary case, N and T denote the Gaussian and skewed distributions, respectively, for all three levels, while for V, we define the scores $\epsilon_{ijk}$ to follow a standard Gaussian, centered exponential with rate one, or skewed distribution in T, for $k=0, 1, 2$ respectively.

The other setup details of the noise, data pre-smoothing, and bandwidth selection are all similar to Section~\ref{design} for binary data. For each simulation, we have $100$ training and $150$ test cases. The optimal cut-off $J^*$ is selected using cross-validation from $J\le 10$. Table~\ref{table:multiclass} presents the misclassification rates from $1000$ Monte Carlo repetitions by seven of the eight classifiers in Section \ref{cltypes}. Note that functional centroid classifier is not applicable to multiclass data, and thus is excluded here. As in the binary case, the Supplementary Material Table \ref{sp:multilevelV500} includes additional results with an increased training size and a different set of score distributions (V).

\begin{table}[h!]
\centering
\resizebox{\columnwidth}{!}{
\begin{tabular}{r|r|rrrr|rr|rr}
  \hline
 & BC & BCG & BCGPLS & BCt & BCtPLS & PLSDA & logistic & CV & Ratio(CV) \\ 
  \hline
MSSN & 0.520 &\textbf{ 0.325} & 0.392 &\textit{ 0.327} & 0.392 & 0.641 & 0.637 & 0.328 & 0.89\%\\ 
  MDSN & 0.356 & 0.247 &\textit{ 0.237} & 0.245 &\textbf{ 0.235} & 0.446 & 0.427 & 0.226 & -3.88\%\\ 
  MSDN & 0.213 &\textit{ 0.169} & 0.281 &\textbf{ 0.168} & 0.310 & 0.636 & 0.618 & 0.173 & 3.00\%\\ 
  MDDN & 0.194 &\textit{ 0.156} & 0.272 &\textbf{ 0.156} & 0.295 & 0.540 & 0.509 & 0.157 & 1.11\%\\ 
  \hline
  MSST & 0.560 &\textbf{ 0.450} & 0.503 &\textbf{ 0.450} & 0.492 & 0.635 & 0.638 & 0.456 & 1.25\%\\ 
  MDST & 0.343 &\textbf{ 0.286} & 0.303 &\textbf{ 0.286} & 0.333 & 0.424 & 0.364 & 0.284 & -0.72\%\\ 
  MSDT & 0.449 &\textit{ 0.399} & 0.444 &\textbf{ 0.397} & 0.467 & 0.624 & 0.616 & 0.401 & 0.95\%\\ 
  MDDT & 0.342 & 0.297 & 0.355 &\textbf{ 0.287} & 0.403 & 0.483 & 0.401 & 0.293 & 2.38\%\\ 
  \hline
  MSSV & 0.325 &\textbf{ 0.259} & 0.394 &\textit{ 0.261} & 0.475 & 0.633 & 0.615 & 0.264 & 2.23\%\\ 
  MDSV & 0.288 &\textit{ 0.237} & 0.356 &\textbf{ 0.234} & 0.433 & 0.436 & 0.399 & 0.241 & 2.93\%\\ 
  MSDV & 0.385 & 0.314 & 0.427 &\textbf{ 0.302} & 0.435 & 0.631 & 0.627 & 0.311 & 3.00\%\\ 
  MDDV & 0.272 &\textit{ 0.223} & 0.322 &\textbf{ 0.219} & 0.340 & 0.475 & 0.434 & 0.224 & 2.18\%\\ 
   \hline
\end{tabular}}
\caption{Misclassification rates averaged over $1000$ simulations of the seven classifiers on $12$ multiclass data scenarios. Best case in each scenario is in bold, and cases within margin of error of the lowest are in italic. $P(Y=k)=1/3$, for $k=0,1,2$, so the true misclassification rate of any method is approximately $0.667$.}
\label{table:multiclass}
\end{table}

Table \ref{table:multiclass} indicates that for data of multiple labels, the behaviors of the seven classifiers follow a similar pattern to that of the binary case when the group eigenfunctions are unequal. In particular, BCt shows strength under increased data complexity, followed closely by BCG. BCG-PLS/BCt-PLS also prove their advantage in detecting location differences on minor basis functions in MDSN. Although they fail to outperform their PC-based counterparts under more complicated scenarios such as MDST and MDSV, we believe this is because the group means are not the only dominant difference in these two data cases.

Tables \ref{rates} and \ref{table:multiclass} give us clear guidelines that deciding whether or not to use copulas in a classification makes a more significant impact on the outcome than the type of copulas, because both BCG and BCt present competitive performance. The tables also reveal the strength of copula-based methods in dimension reduction. Classifiers using copulas are able to achieve high accuracy with small cut-off $J^*$, which indicates their advantage in small samples. 
In addition, in general, PCs are preferable to PLS, owing to their robustness and simplicity of implementation. BCG-PLS and BCt-PLS should be considered when the group mean difference is significant and located at minor eigenfunctions, which we discuss further in the real-data examples.

\setcounter{equation}{0}
\section{Real-Data Examples} \label{realdata}
In this section, we use two real-data examples to illustrate the strength of our new method in terms of classification and dimension reduction with respect to the data size $n$. 

\subsection{Classification of Multiple Sclerosis Patients}\label{DTIandMS}
Our first example explores the classification of multiple sclerosis (MS) cases based on FA profiles of the cca tract.  FA is the degree of anisotropy of water diffusion along a tract, and is measured by diffusion tensor imaging
(DTI).  Outside the brain, water diffusion is isotropic ( \cite{G2012}).
MS is an autoimmune disease  leading to lesions in white matter tracts such as the cca.  These lesions decrease \mbox{FA}.

The DTI data set in the R package {\tt refund} (\cite{refund}) contains FA profiles at $93$ locations on the cca of $142$ subjects. 
The data  were collected at Johns Hopkins University and the Kennedy\textendash Krieger Institute. 
The numbers of visits per subject range from one to eight, but we used the $142$ FA curves from the first visits only.  One subject with partially missing FA data was removed. Among the $141$ subjects, $42$ are healthy ($k=0$) and $99$ were diagnosed with MS ($k=1$).   We use local linear regression for data pre-smoothing. To determine the optimal number of dimensions $J^*$ for each method, we use cross-validation with maximal $J=30$. The misclassification rates from using 10-fold cross-validation were recorded for 1000 repetitions. 


As discussed in Section~\ref{sec:intro}, Panel (a) in Fig.\ \ref{ccamusd} plots $5$ FA profiles from each group, and panels (b) and (c) display the group means and standard deviations of the cases and controls, using raw and pre-smoothed data.  Compared with the controls, MS patients have lower mean FA values and greater variability. We see that smoothing removes some noise.

\begin{table}[h!]
\centering
\begin{tabular}{r|rrrrrrrr}
  \hline
 Method & BC & BCG & BCGPLS & BCt & BCtPLS & CEN & PLSDA & logistic \\ 
  \hline
Error Rate & 0.228 & 0.199 & 0.211 &\textbf{ 0.192} & 0.211 & 0.264 & 0.219 & 0.216 \\ 
\hline
\end{tabular}
\caption{ \label{tab:ccatable} Average misclassification rates of eight functional classifiers by 1000 repetitions of 10-fold CV. BCt has the best performance. The best case is in bold.}
\end{table}

\begin{figure}[h!]
\centering
      \includegraphics[scale=.43]{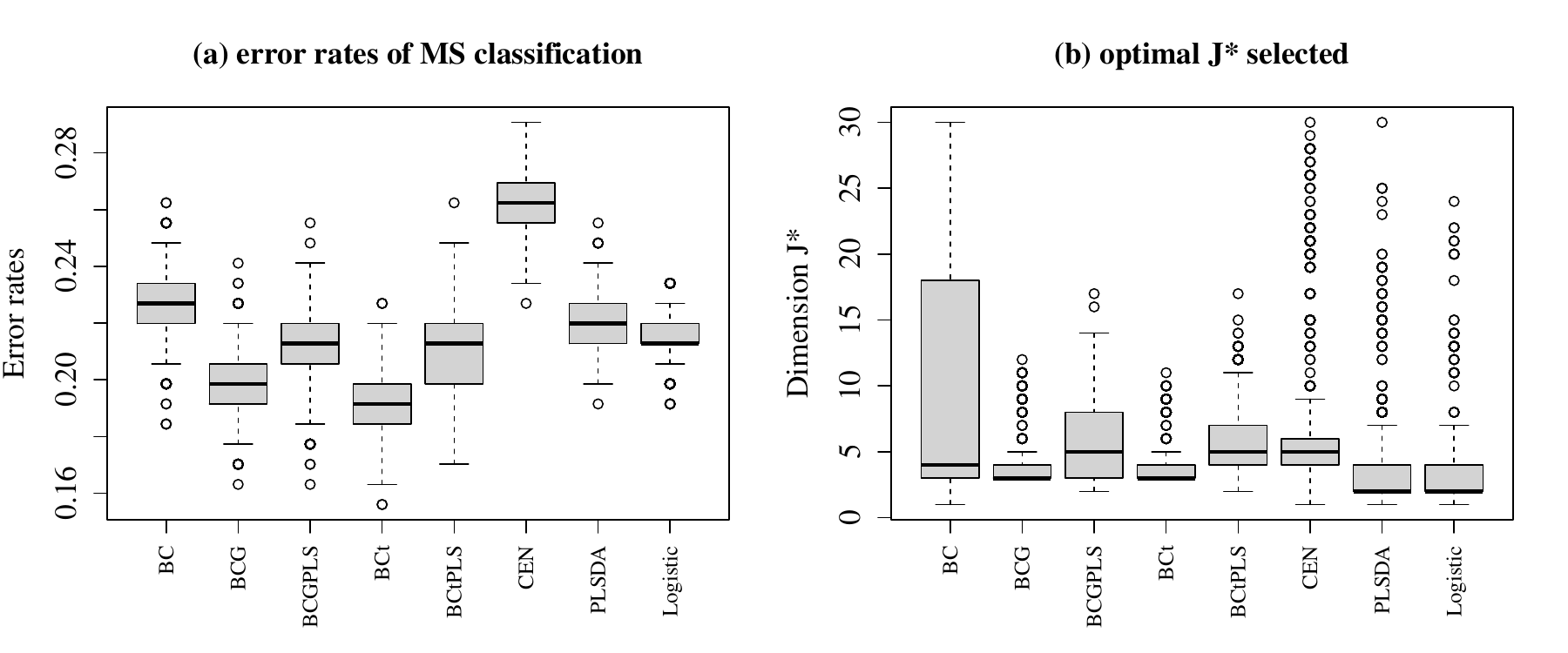}
  \caption{Box plots of misclassification rates and optimal number of components $J^*$ in the MS study over 1000 repetitions of 10-fold cross-validation. BCt achieves the lowest average error rate, while requiring a very small number of components ($J^* < 5$) with lowest variation.}
  \label{ccabox}
\end{figure}

As shown in Table \ref{tab:ccatable} and Part (a) of Fig.\ \ref{ccabox}, BCt achieves the lowest error rate at $0.192$, with a margin of error $0.0007$. The rates of the other methods fail to fall into this range, and are all significantly higher than that of BCt.
In fact, the third quartile for BCt is below the first quartile of all other methods, except BCG. Part (b) is a box plot of cross-validated $J^*$ during each simulation for all classifiers. Here, BCt and BCG achieve the lowest error rates, with a minimal number of dimensions. In addition, compared with methods such as CEN, PLSDA, or logistic regression, their choice of optimal $J^*$ is very stable, with the smallest variation and few outliers. In contrast, BC is prone to employing a large number of components in classification. This tendency can be found in other examples too.

In the Supplementary Material, we compare the loadings (\ref{ccadim}), score distributions (\ref{ccadensity}, and group eigenfunctions (\ref{ccagroup}) between using PC and PLS. The difference explains why PC is a better choice for this example. 
Note that it is not our intent to develop DTI as a technique for diagnosing \mbox{MS}.  
DTI is too expensive and time-consuming for that purpose.
Instead, we are looking for differences in FA between cases and controls, because these  could inform researchers about the nature of the disease.
We have found clear differences between cases and controls in the mean and variance of FA.  The strong positive correlation between the second and the third PC scores in the healthy cases (Spearman's $\rho$ at $0.525$ and an adjusted $p$-value $2\times 10^{-2}$) is diminished in the MS group. BCt and BCG are best able to use a compact model to capture subtle differences, such as correlations.

\subsection{Particulate Matter (PM) Emission of Heavy-Duty Trucks} \label{PMex}

As a second example, we investigate the relationship between the movement patterns of heavy-duty trucks and particulate matter (PM) emissions.  We use the data in \cite{M2015}, originally extracted from the Coordinating Research Council E55/59 emissions inventory program documentary (\cite{C2007}). The data set contains $108$ records of truck speed in miles/hour over $90$-second intervals, and the logarithms of their PM emission in grams (log PM), captured by $70$ mm filters. 

We dichotomize \mbox{log PM}.  The 41 of 108 cases with log PM above average are called high emission ($k=1$), and the other cases are low emission ($k=0$). We classify log PM level using the  $90$-second velocity profiles. The misclassification rates are estimated using $10$-fold cross-validation, repeated $1000$ times.

\begin{figure}[h!] 
\centering
      \includegraphics[scale=0.35]{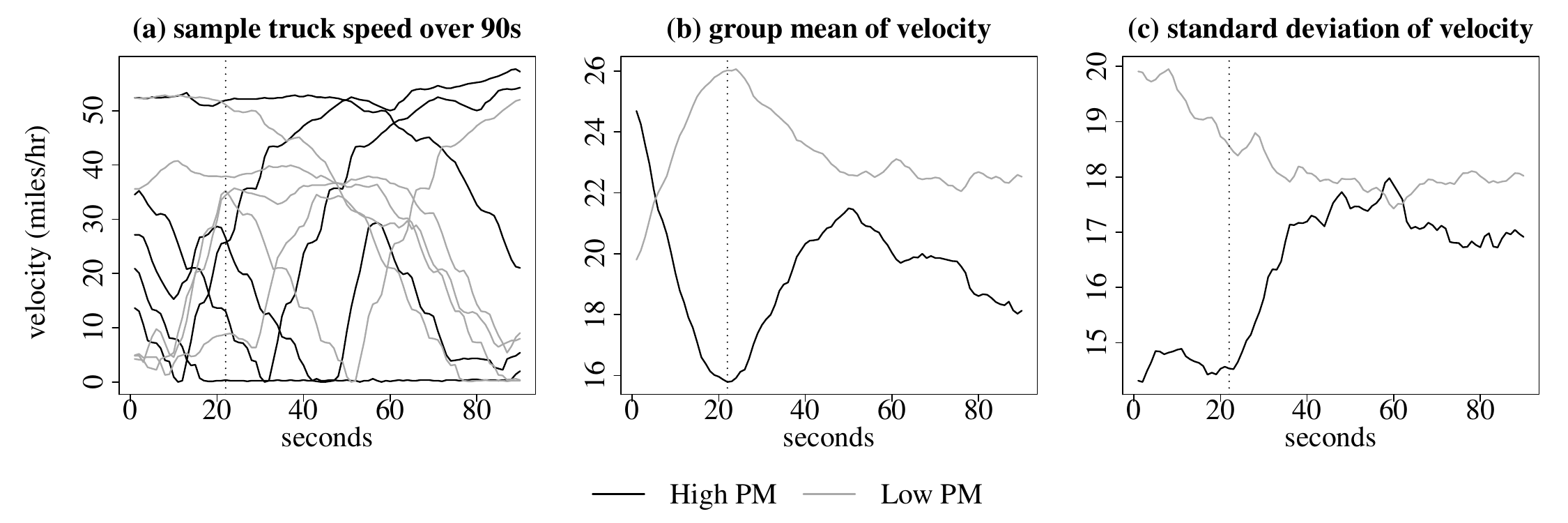}
  \caption{Plots of five sample paths in each PM group, as well as group mean and standard deviation of truck velocity data. On average, trucks in high PM group have lowest speed at 22 seconds,  marked with a dashed line on each plot.}
  \label{truckmusd}
\end{figure}

As Fig.\ \ref{truckmusd} shows, during the first $20$ seconds, vehicles in the high PM group, on average, decelerate to a minimum speed, whereas the low PM group tends to speed up. The high PM group also has much lower variation than the low PM group. 

\begin{table}[ht]
\centering
\begin{tabular}{r|rrrrrrrr}
  \hline
 & BC & BCG & BCGPLS & BCt & BCtPLS & CEN & PLSDA & logistic \\ 
  \hline
Error rate & 0.285 & 0.280 &\textbf{ 0.207} & 0.280 & \textbf{ 0.207} & 0.278 & 0.256 & 0.228 \\ 
   \hline
\end{tabular} 
\caption{ \label{tab:trucktable} Average misclassification rates of eight functional classifiers by 1000 repetitions of 10-fold cross-validation. BCt-PLS and BCG-PLS have the best performance. The best cases are in bold.}
\end{table}

\begin{figure}[h!]
\centering
      \includegraphics[scale=.43]{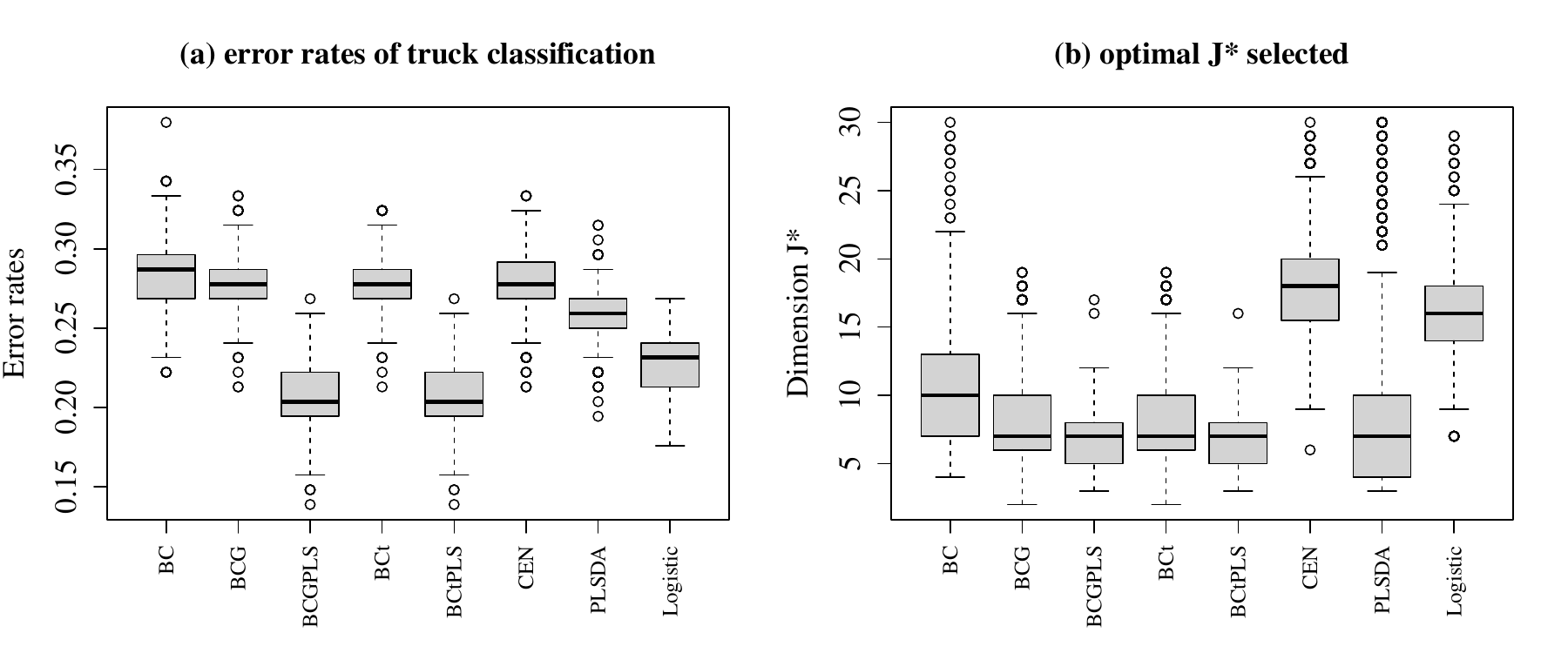}
  \caption{Box plots of misclassification rates and optimal number of components $J^*$ in the truck emission case over 1000 repetitions of 10-fold cross-validation. BCt-PLS and BCG-PLS achieve the lowest average error rate with $J^*$ concentrated around $7$.}
\label{truckboxplot}
\end{figure}

From Fig.\ \ref{truckboxplot} and Table \ref{tab:trucktable}, BCG-PLS and BCt-PLS have the lowest misclassification rates.  The third quartiles of their error rates are below the first quartiles of the other classifiers, except for the logistic regression. In addition, both methods keep the classification model compact by requiring small $J^*$ with low variation. BC and the three methods on the right of plot (b) of Fig.\ \ref{truckboxplot} again demand more components with bigger variation in classification. In Section \ref{sup:dataexample} of the Supplementary Material, we include additional results for both data examples to validate their different choices of PC- and PLS-based classifiers.

\setcounter{equation}{0}
\section{Theoretical Asymptotic Properties} \label{theory}
An interesting feature of functional classifiers is  asymptotic perfect classification. That is, under certain conditions, the error rate goes to zero as $J \to \infty$, owing to the infinite-dimensional nature of functional data (\cite{DH2012}). \cite{DMY2017} discussed the perfect classification by BC under equal group eigenfunctions. In this section, we prove that when the group eigenfunctions differ, perfect classification is retained by our classifier $\mathds{1}\{\log Q^*_J(X)>0\}$ for both Gaussian and non-Gaussian processes. The  scores $X_{\cdot j k}$, for $1 \le j \le J$, in this section are all projected onto joint eigenfunctions $\phi_1, \ldots, \phi_J$.

We first show that $\log Q^*_J\left(X\right)$ and the estimated $\log \hat{Q}^*_J\left(X\right)$ are asymptotically equivalent under mild conditions. Then, the behavior of the Bayes classifier $\mathds{1}\{\log Q^*_J(X)>0\}$ is studied in two settings: first, when the random function $X_{\cdot \cdot k}$ is a Gaussian process for both $k=0, 1$; and second, the more general case, when $X$ is non-Gaussian, but its projected scores are meta-Gaussian distributed in each group. For simplicity, we assume here that $\pi_1=\pi_0$. 

\subsection{Asymptotic equivalence of $\log \hat{Q}^*_J\left(X\right)$ and $\log Q^*_J\left(X\right)$}
We first list several assumptions, which help establish the asymptotic equivalence of both the marginal and the copula density components of $\log \hat{Q}^*_J\left(X\right)$ and $\log Q^*_J\left(X\right)$. 

\begin{assump} \label{kdeA1} For all 
$\, C >0$ and some $\delta > 0$: $\sup_{t \in \mathcal{T}} E\{|X(t)|^C\} < \infty$, 
\newline $\sup_{s,t \in \mathcal{T}: s \ne t} E[\{|s-t|^{-\delta} |X(s)-X(t)|\}^C] < \infty$.
\end{assump}

\begin{assump} \label{kdeA2}
For integers $r \ge 1$, $\lambda_j^{-r}E[\int_{\mathcal{T}}\{X-E(X)\}\phi_j]^{2r}$ is bounded uniformly in $j$.
\end{assump}

\begin{assump} \label{kdeA3}
There are no ties among the eigenvalues $\{\lambda_j\}_{j=1}^{\infty}$.
\end{assump}

\begin{assump} \label{kdeA4}
The density $g_j$ of the $j$th standardized score $\langle X-E(X), \phi_j \rangle/\sqrt{\lambda_j}$ is bounded and has a bounded derivative; for some $\delta>0$, $h=h(n)=O(n^{-\delta})$ and $n^{1-\delta}h^3$ is bounded away from zero as $n \to \infty$. The ratio $f_{j1}(X_{\cdot j \cdot})/f_{j0}(X_{\cdot j \cdot})$ is atomless for all $j \ge 1$.
\end{assump}

For all $c>0$, let $\mathcal{S}(c)=\{x \in \mathcal{L}^2(\mathcal{T}): \|x\| \le c\}$. Assumptions \ref{kdeA1}\textendash\ref{kdeA4} are from \cite{DH2010}, adapted here to bound the difference $D_{jk}\left(x_j\right)=\hat{g}_{jk}\left(\hat{x}_j\right)-\bar{g}_{jk}\left(x_j\right)$ s.t.\ $\sup_{x \in \mathcal{S}(c)} |D_{jk}\left(x_j\right)|=op\{(nh)^{-1/2}\}$.
We let $\hat{g}_{jk}\left(\hat{x}_j\right)=1/\left(n_kh\right)\sum_{i=1}^{n_k} K\left \{\langle X_{i \cdot k}-x, \hat{\phi}_j\rangle/\left(\hat{\sigma}_{jk}h\right)\right\}$ be the estimated density of the standardized scores of group $k$ on basis $\hat{\phi}_j$, with $\bar{g}_{jk}\left(x_j\right)$ using $\phi_j$ and $\sigma_{jk}$. In addition, the following assumption is added for $D_{jk}\left(x_j\right)$, for both $k=0, 1$:
\begin{assump} \label{kdeA5}
$\sup_{x \in \mathcal{S}(c)} \left|\hat{\pi}_k D_{jk}\left(x_j\right)/\left(\hat{\pi}_0 D_{j0}\left(x_j\right)+\hat{\pi}_1 D_{j1}\left(x_j\right)\right)\right|=Op\left(1+\sqrt{\dfrac{\log n}{nh^3}}\right).$
\end{assump}

We use \ref{kdeA5} to give a mild bound simply to avoid the case where the magnitudes of both $D_{jk}\left(x_j\right)$, for $k=0, 1$, are too large and close, but with opposite signs. \ref{kdeA5} guarantees that the difference between the estimated marginal density $\hat{f}_{jk}\left(\hat{x}_j\right)$ and $f_{jk}\left(x_j\right)$ is able to be bounded by the same rate as when the group eigenfunctions are equal. However, this is not a necessary condition for the asymptotic equivalence of $\log \hat{Q}^*_J (X)$ and $\log Q_J^*(X)$, and we can certainly relax its bound for Theorem \ref{SJ} below.

Then, $\hat{f}_{jk}\left(\hat{x}_j\right)=\left(1/\hat{\sigma}_{jk}\right) \hat{g}_{jk}\left(\hat{x}_j\right)$, and we have Proposition \ref{prop1} (see the Supplementary Material for the proof):
\begin{proposition} \label{prop1}
Under Assumptions \ref{kdeA1}\textendash\ref{kdeA5}, when the group eigenfunctions are unequal, the estimated marginal density $\hat{f}_{jk}$ using scores $\langle X_{i \cdot k}, \hat{\phi}_j \rangle$ achieves an asymptotic error bound:
$
\sup_{x \in \mathcal{S}(c)} |\hat{f}_{jk}(\hat{x}_j)-f_{jk}(x_j)| = Op\left \{h+\sqrt{\dfrac{\log n}{nh}} \right \},
$
where the rate is the same as in \cite{DMY2017}, where the group eigenfunctions are equal.
\end{proposition}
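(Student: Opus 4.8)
The plan is to isolate, inside $\hat f_{jk}(\hat x_j) - f_{jk}(x_j)$, the part that behaves like an ordinary kernel density estimator from the part that is attributable solely to estimating the joint eigenfunction $\hat\phi_j$ and the score scale $\hat\sigma_{jk}$. Writing $\hat f_{jk}(\hat x_j)=\hat\sigma_{jk}^{-1}\hat g_{jk}(\hat x_j)$ and $\hat g_{jk}(\hat x_j)=\bar g_{jk}(x_j)+D_{jk}(x_j)$, and setting $\bar f_{jk}(x_j)=\sigma_{jk}^{-1}\bar g_{jk}(x_j)$ for the infeasible estimator that uses the true $\phi_j$ and $\sigma_{jk}$, I would start from the exact decomposition
\begin{equation*}
\hat f_{jk}(\hat x_j)-f_{jk}(x_j)=\bigl\{\bar f_{jk}(x_j)-f_{jk}(x_j)\bigr\}+\Bigl(\tfrac{1}{\hat\sigma_{jk}}-\tfrac{1}{\sigma_{jk}}\Bigr)\bar g_{jk}(x_j)+\tfrac{1}{\hat\sigma_{jk}}D_{jk}(x_j).
\end{equation*}
The first term is a genuine kernel density error; the remaining two terms quantify the extra price of estimating $\sigma_{jk}$ and $\phi_j$, and the goal is to show that only the first term contributes at the stated order.

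For the first term I would apply standard uniform kernel density convergence over the compact image of $\mathcal{S}(c)$ under $x\mapsto x_j=\langle x,\phi_j\rangle$, which is compact because $|x_j|\le\|x\|\,\|\phi_j\|\le c$. The bounded-derivative hypothesis in Assumption~\ref{kdeA4} delivers a bias of order $O(h)$, while the fluctuation of the Gaussian-kernel estimate is $O_p\{\sqrt{(\log n)/(nh)}\}$ uniformly over a compact set; the moment and continuity conditions in Assumptions~\ref{kdeA1}--\ref{kdeA2} supply the finite moments this uniform control requires. Together these give exactly the asserted rate $h+\sqrt{(\log n)/(nh)}$, which is the rate of Dai et al.\ (2017 \cite{DMY2017}) in the equal-eigenfunction case.

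It then remains to show the other two terms are $o_p\{\sqrt{(\log n)/(nh)}\}$. Root-$n$ consistency of the sample standard deviation together with the perturbation bound $\|\hat\phi_j-\phi_j\|=O_p(n^{-1/2})$, which follows from the spectral-gap (no ties) Assumption~\ref{kdeA3}, yields $|\hat\sigma_{jk}-\sigma_{jk}|=O_p(n^{-1/2})$; since $\sigma_{jk}$ is bounded away from zero and $\bar g_{jk}$ is bounded, the middle term is $O_p(n^{-1/2})$, which is negligible relative to $\sqrt{(\log n)/(nh)}$ as $h\to0$. The last term is $\hat\sigma_{jk}^{-1}D_{jk}(x_j)=O_p\{\sup_{x\in\mathcal{S}(c)}|D_{jk}(x_j)|\}=o_p\{(nh)^{-1/2}\}$ by the bound on $D_{jk}$ obtained by adapting Delaigle and Hall (2010 \cite{DH2010}) under Assumptions~\ref{kdeA1}--\ref{kdeA4}, and is again dominated by the target. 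Assumption~\ref{kdeA5} enters precisely at this last step: because the projection is now onto the joint eigenfunction rather than the group-specific one, \ref{kdeA5} excludes the degenerate configuration described in the remark, in which $D_{j0}$ and $D_{j1}$ are both large with nearly cancelling signs, and thereby keeps the $D_{jk}$ contribution at the same order as in the equal-eigenfunction setting.

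The step I expect to be the main obstacle is the uniform control of $D_{jk}$ when the group eigenfunctions differ. Unlike the equal-eigenfunction analysis, here $\hat\phi_j$ estimates an eigenfunction of the pooled operator $G$, so the substitution error $\langle X_{i\cdot k},\hat\phi_j-\phi_j\rangle$ must be bounded inside the kernel argument simultaneously over all sample curves and uniformly over $x\in\mathcal{S}(c)$; this is where the H\"older-type moment condition of Assumption~\ref{kdeA1}, the standardized-score moment bounds of Assumption~\ref{kdeA2}, and the eigenvalue-separation condition of Assumption~\ref{kdeA3} do the heavy lifting, supplemented by Assumption~\ref{kdeA5} to preserve the rate.
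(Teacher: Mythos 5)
Your outer decomposition is the same one the paper uses, and you have correctly intuited what Assumption~\ref{kdeA5} is for (ruling out large, nearly cancelling $D_{j0}$ and $D_{j1}$). The genuine gap is in the step you yourself flag as the main obstacle: you assert that $\sup_{x\in\mathcal{S}(c)}|D_{jk}(x_j)|=o_p\{(nh)^{-1/2}\}$ can be obtained ``by adapting Delaigle and Hall (2010) under Assumptions~\ref{kdeA1}--\ref{kdeA4},'' but that adaptation is exactly what fails in the unequal-eigenfunction setting. Delaigle and Hall's Theorem~3.1 controls the \emph{pooled} estimator $\hat g_j$, built from all $n$ curves and projected onto $\hat\phi_j$, which is estimated from that same pooled sample; it does not apply to the group-$k$ estimator $\hat g_{jk}$, because $\hat\phi_j$ targets an eigenfunction of the marginal operator $G$, not of $G_k$, and is computed from data outside group $k$. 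The paper's proof supplies the missing bridge: it Taylor-expands the group-specific, $\hat\sigma_{jk}$-standardized kernels around the pooled $\sqrt{\hat\lambda_j}$-standardization, shows each remainder term $A_{ijk}=\tfrac{1}{h}\langle X_{i\cdot k}-x,\hat\phi_j\rangle K'(\gamma_{ijk})$ is $o_p(h^2)$ using the explicit Gaussian-kernel form (the argument of the exponential diverges in probability), and thereby proves $\hat\pi_1 D_{j1}(x_j)+\hat\pi_0 D_{j0}(x_j)=\hat g_j(\hat x_j)-\bar g_j(x_j)+o_p(h)=o_p(h)$, to which Delaigle--Hall \emph{does} apply. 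Without this pooled-sum detour there is no bound on any $D_{jk}$ at all.

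This also means your account of where \ref{kdeA5} enters is inverted. You present \ref{kdeA5} as a safeguard layered on top of an already-established individual bound; in the paper it is the mechanism that \emph{produces} the individual bound, converting the $o_p(h)$ control of the weighted sum $\hat\pi_0 D_{j0}+\hat\pi_1 D_{j1}$ into $\sup_{x\in\mathcal{S}(c)}|D_{jk}(x_j)|=o_p\{h+\sqrt{(\log n)/(nh)}\}$ via the multiplicative factor $O_p\bigl(1+\sqrt{(\log n)/(nh^3)}\bigr)$. Note in particular that the resulting individual rate is $o_p\{h+\sqrt{(\log n)/(nh)}\}$, not the $o_p\{(nh)^{-1/2}\}$ you claim, which is strictly stronger and is not delivered by the paper's argument (it is still sufficient for the proposition, since the target is $O_p\{h+\sqrt{(\log n)/(nh)}\}$, but the overstatement signals that the de-aggregation step has not actually been carried out). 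The remaining pieces of your plan --- the uniform bound on $\bar f_{jk}-f_{jk}$ from standard KDE theory and the $O_p(n^{-1/2})$ treatment of $\hat\sigma_{jk}^{-1}-\sigma_{jk}^{-1}$ using $\|\hat\phi_j-\phi_j\|=O_p(n^{-1/2})$ --- match the paper and are fine.
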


\begin{assump} \label{A2}
The CDFs $F_{jk}$ of scores $X_{\cdot j k}$ are continuous and strictly increasing, with correspondent marginal densities $f_{jk}$ continuous and bounded. In addition, $f_{jk}$ are bounded away from zero on any compact interval within their supports.
\end{assump}
\ref{A2} ensures that the scores $X_{\cdot j k}$ and their monotonic transformations are atomless; this also follows Condition 5 in \cite{DMY2017}.

Then, in addition to the marginal densities, we establish the equivalence of $\mathbf{\Omega}_k^{-1}$ and $\hat{\mathbf{\Omega}}_k^{-1}$ in $\log Q^*_J\left(X\right)$ and $\log \hat{Q}^*_J\left(X\right)$, respectively, as $n \to \infty$. As mentioned in Section \ref{section:copula}, we calculate $\hat{\mathbf{\Omega}}_k$ using rank correlations. In addition, when $J$ is large, the inverse of $\hat{\mathbf{\Omega}}_k$ can be estimated using the graphical Dantzig selector (\cite{yuan10}), which solves the matrix inverse by connecting the entries of the inverse correlation matrix to a multivariate linear regression, and exploits the sparsity of the inverse matrices (\cite{yuan10}). \cite{Liu} provided a $q$-norm $Op$ bound of the difference between the inverse Gaussian copula matrix and its estimation by the Dantzig estimator for high-dimensional problems, and is extended here for the difference between  $\mathbf{\Omega}_k^{-1}$ and $\hat{\mathbf{\Omega}}_k^{-1}$. 

Our sparsity assumptions on the inverse correlation matrices  follow the design of \cite{yuan10} and \cite{Liu}: let $\mathbf{\Omega}_k$ belong to the class of matrices $\mathcal{C}\left(\kappa, \tau, M, J\right):=\{\mathbf{\Omega}^{J \times J}: \mathbf{\Omega} \succ \mathbf{0}, \text{diag} (\mathbf{\Omega})=\mathbf{1}, \|\mathbf{\Omega}^{-1}\|_1 \le \kappa, \dfrac{1}{\tau} \le \lambda_{\min} (\mathbf{\Omega}) \le \lambda_{\max} (\mathbf{\Omega}) \le \tau, \text{deg} (\mathbf{\Omega}^{-1}) \le M \}$, where $\kappa, \tau \ge 1$ are constants determining the tuning parameter in the graphical Dantzig selector, and the parameter $M$ bounding deg$(\mathbf{\Omega}^{-1})=\max_{1\le j \le J} \sum_{j'=1}^{J} I(\mathbf{\Omega}^{-1}_{jj'} \ne 0)$ is dependent on $J$.  Assuming these sparsity conditions, we have the following theorem.

\begin{theorem} \label{SJ}
Under \ref{kdeA1}\textendash\ref{A2}, $\forall \epsilon > 0$, as $n \to \infty$, there exists a sequence $J\left(n, \epsilon, M\right) \to \infty$, and a set $S$ dependent on $J\left(n, \epsilon, M\right)$, $P\left(S\right) \ge 1-\epsilon$, such that 
\[
P\left(S \cap \left\{\mathds{1}\left\{\log \hat{Q}^*_J\left(X\right) \ge 0 \right\} \ne \mathds{1}\left\{\log Q^*_J\left(X\right) \ge 0\right\}\right\}\right) \to 0,
\] 
provided that $MJ\sqrt{\log J}=o\left(\sqrt{n}\right)$. 
\end{theorem}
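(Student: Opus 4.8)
The plan is to prove the two classifiers agree by showing their decision statistics differ by a uniformly negligible amount on a high-probability set, and then ruling out sign changes near the boundary $\log Q^*_J(X)=0$. First I would write $D(X)=\log\hat{Q}^*_J(X)-\log Q^*_J(X)$ and split it along the three additive blocks of Eq.(\ref{eq:5}): the prior-ratio term $\log(\hat{\pi}_1/\hat{\pi}_0)-\log(\pi_1/\pi_0)$, the marginal block $\sum_{j=1}^J\{(\log\hat{f}_{j1}-\log f_{j1})-(\log\hat{f}_{j0}-\log f_{j0})\}$, and the copula block $\log(\hat{c}_1/\hat{c}_0)-\log(c_1/c_0)$. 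The prior term is $O_p(n^{-1/2})$ because $\hat{\pi}_k=n_k/n$. For the marginal block, Assumption \ref{A2} keeps each $f_{jk}$ bounded away from zero on the range relevant to $x\in\mathcal{S}(c)$, so $\log$ is Lipschitz there and Proposition \ref{prop1} gives $\sup_{x\in\mathcal{S}(c)}|\log\hat{f}_{jk}-\log f_{jk}|=O_p\{h+\sqrt{(\log n)/(nh)}\}$ per coordinate; summing the $J$ coordinates bounds this block by $O_p\{J(h+\sqrt{(\log n)/(nh)})\}$, which under the bandwidth conditions of \ref{kdeA4} is of smaller order than the copula term below.

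The heart of the argument is the copula block, where I would specialize to the Gaussian copula (the $t$-copula is analogous once the estimated tail index is handled). With $z_k=(\Phi^{-1}(F_{1k}(x_1)),\ldots,\Phi^{-1}(F_{Jk}(x_J)))^{\top}$, the log-density is $\log c_k=-\tfrac12\log\det\mathbf{\Omega}_k-\tfrac12 z_k^{\top}(\mathbf{\Omega}_k^{-1}-I)z_k$, so the block splits, for each $k$, into a log-determinant difference and a quadratic-form difference. The log-determinant difference is $O_p(J\sqrt{(\log J)/n})$ by the result quoted from the Supplementary Materials. For the quadratic form $\hat{z}_k^{\top}(\hat{\mathbf{\Omega}}_k^{-1}-I)\hat{z}_k-z_k^{\top}(\mathbf{\Omega}_k^{-1}-I)z_k$ I would separate (i) the precision-matrix error $z_k^{\top}(\hat{\mathbf{\Omega}}_k^{-1}-\mathbf{\Omega}_k^{-1})z_k$ from (ii) the score-vector error incurred by replacing $z_k$ with $\hat{z}_k$, the latter absorbing both the empirical-CDF error in $\hat{F}_{jk}$ and the eigenfunction error in $\hat{\phi}_j$. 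Under the sparsity class $\mathcal{C}(\kappa,\tau,M,J)$, combining the Kendall's-$\tau$ plug-in with the graphical Dantzig selector yields the Liu et al.\ (2012 \cite{Liu}) operator-norm bound $\|\hat{\mathbf{\Omega}}_k^{-1}-\mathbf{\Omega}_k^{-1}\|=O_p(M\sqrt{(\log J)/n})$, while the meta-Gaussian structure gives $\|z_k\|^2=O_p(J)$ at the random evaluation point. Hence term (i) is $O_p(MJ\sqrt{(\log J)/n})$; this is the dominant contribution and is exactly what forces the rate condition $MJ\sqrt{\log J}=o(\sqrt{n})$, and I would control term (ii) by a uniform delta-method argument using $\sup_j\|\hat{F}_{jk}-F_{jk}\|_\infty=O_p(\sqrt{(\log J)/n})$ and eigenfunction consistency from \ref{kdeA2}--\ref{kdeA3}, showing it is of no larger order.

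Collecting the blocks gives $|D(X)|=O_p(MJ\sqrt{(\log J)/n})=o_p(1)$ on the event $S_0=\{X\in\mathcal{S}(c)\}\cap\{\|z_k(X)\|^2\le CJ,\ k=0,1\}$, where $c=c(\epsilon)$ and $C=C(\epsilon)$ are chosen so that $P(S_0)\ge 1-\epsilon/2$ using $\|X\|<\infty$ a.s.\ and the marginal standard-normality of the $z_{kj}(X)$. To convert this into a statement about sign agreement I would intersect with a margin event $S_1=\{|\log Q^*_J(X)|>\delta_0\}$: on $S=S_0\cap S_1$ a disagreement between $\mathbbm{1}\{\log\hat{Q}^*_J(X)\ge0\}$ and $\mathbbm{1}\{\log Q^*_J(X)\ge0\}$ forces $|D(X)|>\delta_0$, so once $\Delta_n<\delta_0$ the disagreement probability on $S$ is dominated by $P(|D(X)|>\delta_0)\to0$. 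The main obstacle is the anti-concentration needed to guarantee $P(S_1)\ge 1-\epsilon/2$ for a fixed $\delta_0$ while $J\to\infty$: the atomlessness in \ref{kdeA4} and continuity in \ref{A2} give a continuous law of $\log Q^*_J$ for each fixed $J$, but uniformity as $J$ grows is delicate, and I would secure it by coupling the admissible growth $J=J(n,\epsilon,M)$ to the regime in which $|\log Q^*_J(X)|$ diverges, so that $P(|\log Q^*_J(X)|>\delta_0)$ can be pushed above $1-\epsilon/2$ uniformly.
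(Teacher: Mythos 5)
Your overall architecture is the same as the paper's: decompose $\log\hat{Q}^*_J-\log Q^*_J$ into a marginal block and a copula block, bound the marginals via Proposition~\ref{prop1}, bound the copula block by a log-determinant piece plus a quadratic-form piece using the Liu et al.\ rank-correlation/Dantzig-selector bound over the sparsity class $\mathcal{C}(\kappa,\tau,M,J)$, obtain the dominant rate $O_p(MJ\sqrt{(\log J)/n})$, and finish with a high-probability set and an atomlessness/margin argument. However, there are two concrete gaps. First, your claim that the marginal block is $O_p\{J(h+\sqrt{(\log n)/(nh)})\}$ and ``of smaller order than the copula term'' does not hold: passing from $|\hat f_{jk}-f_{jk}|$ to $|\log\hat f_{jk}-\log f_{jk}|$ costs a factor $1/\inf f_{jk}$ that grows without bound in $j$ (the scores concentrate as the eigenvalues decay), and even ignoring that, $J(h+\sqrt{(\log n)/(nh)})$ is generically \emph{larger} than $MJ\sqrt{(\log J)/n}$ for admissible bandwidths. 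The paper does not dominate this block by the copula block; it controls it separately by choosing $J_0(n)=\sup\{J':\sum_{j\le J',k}C_{jk}/d_{jk}\le a_n\}$ with $a_n(h+\sqrt{(\log n)/(nh)})=o(1)$, exactly as in Dai et al.\ (2017 \cite{DMY2017}), and then takes $J$ to be a minimum of several slowly growing sequences. Since the theorem only asserts existence of \emph{some} sequence $J(n,\epsilon,M)\to\infty$, this is repairable, but your proposal as written asserts a domination that fails.

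Second, you apply the Liu et al.\ (2012 \cite{Liu}) operator-norm bound directly to $\hat{\mathbf{\Omega}}_k^{-1}$, but that bound is for rank correlations computed from the \emph{true} scores $\langle X_{i\cdot k},\phi_j\rangle$; the classifier uses Kendall's $\tau$ computed from $\langle X_{i\cdot k},\hat\phi_j\rangle$, so the correlation-matrix estimate itself (not just the evaluation point $\hat z_k$) is contaminated by eigenfunction error. The paper inserts an intermediate estimator $\check{\mathbf{\Omega}}_k$ built from true eigenfunctions, proves $|\hat{\mathbf{\Omega}}_k^{jj'}-\check{\mathbf{\Omega}}_k^{jj'}|=O_p(n^{-1/2})$ entrywise (its Lemma \ref{lemmatau}, via a sign-flip counting argument), and pays an extra $O_p(J^2/\sqrt{n})$ in Eq.(\ref{Qdiff4}) that is \emph{not} dominated by $MJ\sqrt{(\log J)/n}$ when $M$ is small and therefore needs its own restriction on $J$ (the paper's $J_2(n)$). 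Folding this into your ``term (ii) is of no larger order'' claim skips a step that genuinely needs proof. On the final step, your margin-event idea with a fixed $\delta_0$ is roughly what the paper does, except the paper invokes the atomlessness of $f_{j1}(X_{\cdot j\cdot})/f_{j0}(X_{\cdot j\cdot})$ from \ref{kdeA4} and inherits the fixed-$J$-then-slow-growth argument from Dai et al.; you correctly flag the uniformity-in-$J$ subtlety, which the paper itself treats tersely.
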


Theorem \ref{SJ} proves that under unequal group eigenfunctions, $\log \hat{Q}_J^*\left(X\right)$ using copulas retains the property in Theorem A1 of \cite{DMY2017} for the estimated Bayes classifiers with equal group eigenfunctions and independent scores: as $n \to \infty$, $\log \hat{Q}_J^*\left(X\right)$ gets arbitrarily close to the true Bayes classifier $\log Q_J^*\left(X\right)$, which enables us to discuss the performance of our method using the properties of the true Bayes classifier.

\subsection{Perfect classification when $X$ is a Gaussian process in both groups}

Let $X_{\cdot \cdot k}$ be a centered Gaussian process such that $X_{\cdot \cdot k}=\sum_{q=1}^{\infty} \sqrt{\lambda_{qk}} \xi_{qk} \phi_{qk}$, with $\xi_{qk} \sim N(0,1)$, for $k=0, 1$. We denote the $J \times J$ covariance matrix of scores $X_{\cdot j k}$, for $1 \le j \le J$, as $\mathbf{R}_k$, where its $\left(j, j'\right)$th entry is equal to
$
\text{cov} \left(X_{\cdot j k}, X_{\cdot j' k}\right)=\sum_{q=1}^{\infty} \lambda_{qk} \langle \phi_{qk}, \phi_j\rangle \langle \phi_{qk}, \phi_{j'}\rangle, 
$
and its eigenvalues are $d_{1k}, \ldots, d_{Jk}$. Let $\vec{\mu}_J$ be a length-$J$ vector $\left(\mu_1, \ldots, \mu_J\right)^T$ by projecting $\mu_d$ on first $J$ bases, $\mu_j=\langle \mu_d, \phi_j\rangle$. By the law of total covariance and the result that the trace of a matrix is equal to the sum of its eigenvalues, we derive the following relationship between the two sets of eigenvalues  (i.e.\, $\lambda_j$, $\lambda_{jk}$, and $d_{jk})$:
$
\sum_{j=1}^J \lambda_j=\pi_1 \sum_{j=1}^J d_{j1}+\pi_0 \sum_{j=1}^J d_{j0}+\pi_1\pi_0\sum_{j=1}^J \mu_j^2,
$
and
$
\sum_{j=1}^J d_{jk}=\sum_{j=1}^J \sum_{q=1}^{\infty} \lambda_{qk}\langle \phi_{qk}, \phi_j \rangle^2.
$
The following assumption is standard in functional data for the distribution of $X$, and ensures that $d_{jk} >0$, for $1 \le j \le J$, $k=0, 1$:
\begin{assump}\label{A1}
Both the group covariance operators, $G_1$, $G_0$, and the covariance matrices $\mathbf{R}_0$, $\mathbf{R}_1$ are bounded and positive definite, and $\mu_d \in \mathcal{L}^2(\mathcal{T})$. 
\end{assump}

When $X$ is Gaussian in both groups, $\log Q_J^*(X)$ is a quadratic form in $\mathbf{X}_J$ ($\mathbf{X}_J$ is a length-$J$ vector with $j$th entry $\langle X, \phi_j\rangle$):
\begin{equation} \label{quad}
\log Q^*_J(X)=-\dfrac{1}{2}\left(\mathbf{X}_J-\vec{\mu}_J\right)^T \mathbf{R}_1^{-1}\left(\mathbf{X}_J-\vec{\mu}_J\right)+\dfrac{1}{2}\mathbf{X}_J^T \mathbf{R}_0^{-1}\mathbf{X}_J+\log \sqrt{\dfrac{|\mathbf{R}_0|}{|\mathbf{R}_1|}}.
\end{equation}
With potentially unequal group eigenfunctions, entries in $\mathbf{X}_J$ at $Y=k$ can be correlated, which complicates the distribution of $\log Q^*_J(X)$ in each group. 

Therefore, we implement a linear transformation of $\mathbf{X}_J$ in Steps i)\textendash iii):

\begin{enumerate}[i)]
\item
The eigendecomposition of the matrix product gives $\mathbf{R}_0^{1/2} \mathbf{R}_1^{-1} \mathbf{R}_0^{1/2}=\mathbf{P}^T \mathbf{\Delta} \mathbf{P}$, where $\mathbf{\Delta}=$ diag$\{\Delta_1, \ldots, \Delta_J\}$, $\Delta_j$ as eigenvalues of $\mathbf{R}_0^{1/2} \mathbf{R}_1^{-1} \mathbf{R}_0^{1/2}$. By the equivalence of the determinants, $\prod_{j=1}^J \dfrac{d_{j0}}{d_{j1}}=\prod_{j=1}^J\Delta_j$. In addition, $\Delta_j > 0$, for all $j$, under \ref{A1};
\item
Let $\mathbf{Z}=\mathbf{R}_0^{-1/2}\mathbf{X}_J$, $\mathbf{U}=\mathbf{P}\mathbf{Z}$;
\item
When $k=0$, the $j$th entry $U_j$ of the vector $\mathbf{U}$ has a standard Gaussian distribution; at $k=1$, $U_j \sim N(-b_j, 1/\Delta_j)$, with $b_j$ the $j$th entry of $\mathbf{b}=-\mathbf{P}\mathbf{R}_0^{-1/2}\vec{\mu}_J$. 
\end{enumerate}

Consequently, the entries of $\mathbf{U}$ are uncorrelated for both $k=0$ and $1$, Eq.(\ref{quad}) becomes
\begin{equation*} 
\log Q^*_J(X)=-\dfrac{1}{2}\sum_{j=1}^J \Delta_j \left(U_j+b_j\right)^2+\dfrac{1}{2}\sum_{j=1}^J U_j^2 + \dfrac{1}{2}\sum_{j=1}^J \log \Delta_j,
\end{equation*} 
and the asymptotic behaviors of the Bayes classifier for Gaussian processes are concluded.

\begin{theorem} \label{theorem1}
With \ref{A1},  when the random function $X$ is a Gaussian process at both $Y=0$ and $1$ and the group eigenfunctions of $G_0$, $G_1$ are unequal, the functional Bayes classifier $\mathds{1}\{\log Q^*_J(X)>0\}$ achieves perfect classification when either $\|\mathbf{R}_0^{-1/2}\vec{\mu}_J\|^2 \to \infty$, or $\sum_{j=1}^J (\Delta_j-1)^2 \to \infty$, as $J \to \infty$. Otherwise, its error rate err$\left(\mathds{1}\{\log Q^*_J(X)>0\}\right) \not \to 0$.
\end{theorem}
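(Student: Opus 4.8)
The plan is to reduce everything to a single scalar. Write $W:=\log Q^*_J(X)$, and set $A:=\|\mathbf{R}_0^{-1/2}\vec{\mu}_J\|^2=\sum_{j=1}^J b_j^2$ and $B:=\sum_{j=1}^J(\Delta_j-1)^2$; these are exactly the two quantities in the statement, noting that $\|\mathbf{b}\|=\|\mathbf{R}_0^{-1/2}\vec{\mu}_J\|$ since $\mathbf{P}$ is orthogonal. I would first record the consequence of Assumption \ref{A1} used repeatedly below: since $\mathbf{R}_0,\mathbf{R}_1$ are bounded and positive definite uniformly in $J$, the eigenvalues $\Delta_j$ of $\mathbf{R}_0^{1/2}\mathbf{R}_1^{-1}\mathbf{R}_0^{1/2}$ lie in a fixed compact interval bounded away from $0$ and $\infty$, uniformly in $j$ and $J$. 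This uniform control is what lets every later Taylor comparison carry constants independent of $J$.

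Working from the displayed sum for $W$ and the fact that the $U_j$ are independent and distributed $N(0,1)$ under $Y=0$ and $N(-b_j,1/\Delta_j)$ under $Y=1$, I would compute the conditional means and variances term by term. After combining the $U_j$ cross-terms, the means are $-E_0W=\tfrac12\sum_j(\Delta_j-1-\log\Delta_j)+\tfrac12\sum_j\Delta_j b_j^2$ and $E_1W=\tfrac12\sum_j(1/\Delta_j-1+\log\Delta_j)+\tfrac12\sum_j b_j^2$, while the variances are sums of $\chi^2_1$ and linear-Gaussian contributions of order $\sum_j(\Delta_j-1)^2+\sum_j b_j^2$. Using the boundedness of $\Delta_j$ together with the uniform Taylor estimates $\Delta_j-1-\log\Delta_j\asymp(\Delta_j-1)^2$ and $1/\Delta_j-1+\log\Delta_j\asymp(\Delta_j-1)^2$, I would conclude $|E_kW|\asymp A+B$ and $\mathrm{Var}_kW\asymp A+B$ for $k=0,1$.

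For the forward direction a one-sided Chebyshev inequality suffices. Since $E_0W<0$, $P(W>0\mid Y=0)\le \mathrm{Var}_0W/(E_0W)^2\asymp 1/(A+B)$, and symmetrically $P(W\le0\mid Y=1)\lesssim 1/(A+B)$; hence the error tends to $0$ whenever $A+B\to\infty$, i.e.\ whenever $A\to\infty$ or $B\to\infty$.

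The hard part is the converse, because Chebyshev only bounds the tails from above and cannot keep the error positive. Here I would use that, with $X$ Gaussian in both groups and $\pi_0=\pi_1$, $W$ is \emph{exactly} the log-likelihood ratio of the two $J$-dimensional Gaussians, so $\mathbbm{1}\{W>0\}$ is the Bayes rule and its error equals $\tfrac12\int\min(p_0,p_1)$. I would bound this below by the Hellinger affinity $\rho$ between the two Gaussians via $\int\min(p_0,p_1)\ge 1-\sqrt{1-\rho^2}$, and evaluate $\rho$ in closed form in the transformed coordinates ($\Sigma_0=I$, $\Sigma_1=\mathrm{diag}(1/\Delta_j)$, mean gap $\mathbf{b}$): the determinant factor yields $-\log\rho\asymp\sum_j(1/\Delta_j-1)^2\asymp B$ and the exponential factor yields $\asymp\sum_j b_j^2/(1+1/\Delta_j)\asymp A$, so $-\log\rho\asymp A+B$. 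Thus when $A+B$ fails to diverge it is bounded along a subsequence, $\rho$ is bounded away from $0$, and the error is bounded away from $0$, giving $\mathrm{err}\not\to0$. The same affinity estimate re-proves the forward direction through $\mathrm{err}\le\tfrac12\rho$, so both directions can be routed through $\rho$ if one prefers; I would keep Chebyshev for the easy direction and reserve the affinity bound for the delicate lower bound. The one point that must be watched throughout is the uniform-in-$J$ control of $\Delta_j$ and of $\mathbf{R}_0^{-1}$, which is precisely where Assumption \ref{A1} enters.
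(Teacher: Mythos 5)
Your proposal reaches the right conclusion and begins exactly as the paper does — the whitening $\mathbf{Z}=\mathbf{R}_0^{-1/2}\mathbf{X}_J$, $\mathbf{U}=\mathbf{P}\mathbf{Z}$ reducing $\log Q^*_J(X)$ to $-\tfrac12\sum_j\Delta_j(U_j+b_j)^2+\tfrac12\sum_jU_j^2+\tfrac12\sum_j\log\Delta_j$ with independent coordinates — but from that point on it takes a genuinely different route. The paper simply observes that this expression "fits into" Lemma 3 of the supplement of Dai et al.\ (2017) and imports both directions (divergence giving perfect classification; otherwise almost-sure convergence of $\log Q^*_J(X)$ to a finite-variance limit, so the error does not vanish). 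You replace that external lemma with a self-contained argument: conditional first and second moments plus Chebyshev for the forward direction, and a Hellinger-affinity lower bound on $\int\min(p_0,p_1)$ via Le Cam's inequalities for the converse. Your moment computations and the closed-form affinity for two Gaussians are correct, and this route buys an explicit error rate of order $1/(A+B)$ and makes transparent why $A+B=\|\mathbf{R}_0^{-1/2}\vec{\mu}_J\|^2+\sum_j(\Delta_j-1)^2$ is the governing quantity in both directions.

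One caveat deserves flagging. Your opening claim that \ref{A1} places the $\Delta_j$ in a compact interval bounded away from $0$ and $\infty$ uniformly in $j$ and $J$ is stronger than what the paper extracts from that assumption (it only uses $d_{jk}>0$); for trace-class covariance operators the eigenvalues of $\mathbf{R}_k$ decay to zero, and in the equal-eigenfunction special case $\Delta_j=\lambda_{j0}/\lambda_{j1}$, which is unbounded already in the paper's own simulation setting $\lambda_{j0}=1/j^2$, $\lambda_{j1}=1/j^3$. Without that uniformity your two-sided Taylor comparisons and, more importantly, the Chebyshev step can fail: if $B\to\infty$ is driven by a single $\Delta_j\to\infty$, then $\mathrm{Var}_0 W/(E_0W)^2$ stays of order one even though the true error still vanishes. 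Your own hedge — running the forward direction through $\mathrm{err}\le\tfrac12\rho$ as well — repairs this, since $-\log\rho$ still diverges whenever $A\to\infty$ or $B\to\infty$ once the two-sided $\asymp$ is replaced by the one-sided bounds $\log\{(1+s)/(2\sqrt{s})\}\gtrsim\min\{(s-1)^2,1\}$ together with its divergence as $s\to0$ or $s\to\infty$; and for the converse the issue is harmless because boundedness of $B$ pins all but finitely many $\Delta_j$ near $1$. So the argument stands provided you drop the uniform-boundedness claim and route the forward direction through the affinity bound rather than Chebyshev.
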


Theorem \ref{theorem1} is a natural extension of Theorem 2 in  \cite{DMY2017}. It again reveals that the error rate of the Bayes classifier approaches zero asymptotically when $\Pi_1$ and $\Pi_0$ are sufficiently different in terms of either the group means or the scores' variances. In addition, recognizing the different correlation patterns between group scores helps improve the classification accuracy. Instead of adopting $\mu_j/\sqrt{\lambda_{j0}}$ and $\lambda_{j0}/\lambda_{j1}$ to build conditions for perfect classification, as in \cite{DMY2017}, we use the transformed $\mathbf{R}_0^{-1/2}\vec{\mu}_J$ and $\Delta_j$ to accommodate the potentially unequal group eigenfunctions and the dependent scores. For the special case when the eigenfunctions are actually equal, the covariance matrices $\mathbf{R}_k=\text{diag}\{\lambda_{1k}, \ldots, \lambda_{Jk}\}$ with $\Delta_j=\lambda_{j0}/\lambda_{j1}$, and consequently the two conditions in Theorem \ref{theorem1} become the same as those proposed in \cite{DMY2017}. The proof of Theorem \ref{theorem1} is in Section~\ref{sec:suppTheorem2Proof} of the Supplementary Material.

\subsection{When $X$ is a non-Gaussian process}
For non-Gaussian processes, when the projected scores  $X_{\cdot j k}$, for $1 \le j \le J$, fit a Gaussian copula model, that is, they are meta-Gaussian distributed, we derive sufficient conditions in terms of the marginal densities $f_{jk}$ and the score correlations in order to achieve an asymptotically zero misclassification rate. 

First, we let $\mathbf{u}_k=\left(u_{1k}, \ldots, u_{Jk}\right)^T$ be a length-$J$ random vector with $u_{jk}=\Phi^{-1}\left(F_{jk}\left(X_{\cdot j \cdot}\right)\right)$, where $\Phi\left(\cdot\right)$ is the CDF of $N(0, 1)$. When $Y=k$, $\left (u_{jk}|Y=k \right) \sim N(0, 1)$, and var$\left(\mathbf{u}_k|Y=k\right)=\mathbf{\Omega}_k$, as denoted before. Let the eigendecomposition be $\mathbf{\Omega}_k=\mathbf{V}_k \mathbf{D}_k \mathbf{V}_k^T$, with $\mathbf{D}_k$ the diagonal matrix with eigenvalues $\omega_{jk}$, for $j=1, \ldots, J$. On the other hand, $u_{jk}|Y=k'$ follows a more complicated distribution when $k' \ne k$. We denote var$\left(\mathbf{u_k}|Y=k'\right)=\mathbf{M}_k$ with the eigendecomposition $\mathbf{M}_k=\mathbf{U}_k \tilde{\mathbf{D}}_k \mathbf{U}_k^T$, and the eigenvalues of $\mathbf{M}_k$ are $\upsilon_{jk}$, for $j=1, \ldots, J$. 

Therefore, the log density ratio $\log Q^*_J(X)$ in the Bayes classifier with a Gaussian copula can be represented as
\begin{align} 
\log Q^*_J(X)&=\sum_{j=1}^J \log \dfrac{f_{j1}\left(X_{\cdot j \cdot}\right)}{f_{j0}\left(X_{\cdot j \cdot}\right)}+\dfrac{1}{2}\log \dfrac{|\mathbf{\Omega}_0|}{|\mathbf{\Omega}_1|}-\dfrac{1}{2}\mathbf{u}_1^T\left(\mathbf{\Omega}_1^{-1}-\mathbf{I}\right)\mathbf{u}_1+\dfrac{1}{2}\mathbf{u}_0^T\left(\mathbf{\Omega}_0^{-1}-\mathbf{I}\right)\mathbf{u}_0 \nonumber\\
&=\sum_{j=1}^J \log \dfrac{f_{j1}\left(X_{\cdot j \cdot}\right)}{f_{j0}\left(X_{\cdot j \cdot}\right)}\Big /\dfrac{\sqrt{\omega_{j1}}}{\sqrt{\omega_{j0}}}-\dfrac{1}{2}\mathbf{u}_1^T\left(\mathbf{\Omega}_1^{-1}-\mathbf{I}\right)\mathbf{u}_1+\dfrac{1}{2}\mathbf{u}_0^T\left(\mathbf{\Omega}_0^{-1}-\mathbf{I}\right)\mathbf{u}_0. \label{newsum}
\end{align} 
Similarly to \ref{A1}, we make an assumption on the covariances of $\mathbf{u}_k$, conditional on  $Y$:
\begin{assump} \label{A4}
The matrices $\mathbf{\Omega}_k$ and $\mathbf{M}_k$, for $k=0, 1$, are bounded and positive definite.
\end{assump}

Next, we define a sequence of ratios $g_j$, for $j=1, 2, \ldots$, by
$
g_j=\dfrac{f_{j1}\left(X_{\cdot j \cdot}\right)}{f_{j0}\left(X_{\cdot j \cdot}\right)}\Big /\dfrac{\sqrt{\omega_{j1}}}{\sqrt{\omega_{j0}}},
$
where $g_j$ compares the ratio of the marginal densities to the ratio of the eigenvalues of the correlation matrices. In addition, let 
\begin{equation*} 
s_{jk}=\dfrac{\text{var}\left(\langle V_{jk}, \mathbf{u}_k \rangle|Y=k\right)}{\text{var}\left(\langle V_{jk}, \mathbf{u}_k \rangle|Y=k'\right)}=\dfrac{\mathbf{V}_{jk}^T \mathbf{\Omega}_k \mathbf{V}_{jk}}{\mathbf{V}_{jk}^T\mathbf{M}_k \mathbf{V}_{jk}}
=\dfrac{\omega_{jk}}{\sum_{q=1}^J C_{(j, q)k}^2 \upsilon_{qk}},
\end{equation*}
where $C_{(j, q)k}=\langle \mathbf{U}_{qk}, \mathbf{V}_{jk}\rangle$, $\sum_{q=1}^J C_{(j, q)k} =1$, and $\mathbf{U}_{qk}$ and $\mathbf{V}_{jk}$ are the $q$th and $j$th columns, respectively, of the eigenvector matrices $\mathbf{U}_{k}$ and $\mathbf{V}_{k}$. As a result, $s_{jk}$ compares the $j$th eigenvalue of $\mathbf{\Omega}_k$ against a convex combination of the eigenvalues of $\mathbf{M}_k$, the individual weights of which are determined by projecting $\mathbf{V}_{jk}$ onto the eigenvalues of $\mathbf{M}_k$, $\mathbf{U}_{qk}$.

In terms of the sequences $g_j$ and $s_{jk}$,
 for $j=1,2, \ldots$, we derive the following theorem for non-Gaussian processes; the proof is in Section~\ref{sec:proofTheorem3} of the Supplementary Material.

\begin{theorem} \label{theorem2} 
With Assumptions \ref{A2}, \ref{A1}, and \ref{A4}, when the projected scores $X_{\cdot j k}$, for $j=1, \ldots, J$, are meta-Gaussian distributed at each group $\Pi_k$, perfect classification by the Bayes classifier $\mathds{1}\{\log Q^*_J(X)>0\}$ is achieved asymptotically if a subsequence $g^*_r=g_{j_r}$ of $g_j$ exists, with corresponding $s_{j_rk}$, such that one of the following conditions is satisfied as $r \to \infty$:  
\begin{enumerate}[a)]
\item
$g_{j_r}=op (1)$, and $s_{j_r0} \to 0$;
\item
$1/g_{j_r}=op (1)$, and $s_{j_r1} \to 0$; 

or when $g_{j_r}$ has distinct behaviors in subgroups: 
\item 
$g_{j_r}=op (1)$ at $Y=1$, $1/g_{j_r}=op (1)$ at $Y=0$, with both $s_{j_r0}$ and $s_{j_r1}$ $\to 0$;
\item
$1/g_{j_r}=op (1)$ at $Y=1$, and $g_{j_r}=op (1)$ at $Y=0$.
\end{enumerate}
\end{theorem}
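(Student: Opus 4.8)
The plan is to exploit that $\mathbbm{1}\{\log Q^*_J(X)>0\}$ is exactly the Bayes rule for discriminating $\Pi_1$ from $\Pi_0$ on the basis of the first $J$ scores $X_{\cdot 1\cdot},\ldots,X_{\cdot J\cdot}$, so its misclassification rate is the smallest among all rules that are measurable functions of those scores, and it is non-increasing as more scores are added. Consequently it suffices to exhibit, for each of the four conditions, some (much simpler) statistic $T_J$ built from the same scores together with a threshold whose combined error tends to $0$; perfect classification of $\log Q^*_J$ then follows by optimality. This reduction is what lets us discard all coordinates outside the distinguished subsequence $\{j_r\}$ and keep only the ones carrying the separating signal, which is why the hypotheses need only be imposed along a subsequence.

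First I would fix the true label $Y=k$ and use the decomposition in Eq.(\ref{newsum}), diagonalizing each quadratic form through the eigenvectors $\mathbf{V}_k$ of $\mathbf{\Omega}_k$. Writing $W_{jk}=\langle \mathbf{V}_{jk},\mathbf{u}_k\rangle$, the native group $Y=k$ gives $\mathbf{u}_k\mid Y=k\sim N(0,\mathbf{\Omega}_k)$, so the $W_{jk}$ are independent $N(0,\omega_{jk})$, whereas in the foreign group $Y=k'$ they have second moment $\mathbf{V}_{jk}^T\mathbf{M}_k\mathbf{V}_{jk}=\omega_{jk}/s_{jk}$. The key bookkeeping is the pair of conditional means of the three pieces of $\log Q^*_J$: the marginal part contributes $\pm$ Kullback--Leibler terms governed by $g_j$, while the quadratic part built from $\mathbf{u}_0$ (resp. $\mathbf{u}_1$) contributes, in the foreign group, a term of order $(1-\omega_{j0})/s_{j0}$ (resp. $(1-\omega_{j1})/s_{j1}$) that diverges precisely when $s_{j0}\to 0$ (resp. $s_{j1}\to 0$); note that under the boundedness of $\mathbf{M}_k$ in \ref{A4} one has $s_{jk}\to 0$ if and only if $\omega_{jk}\to 0$.

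With these moments in hand I would handle the cases separately. For (d) the separating statistic is simply $T_J=\sum_r \log g_{j_r}$: the hypotheses $1/g_{j_r}=op(1)$ at $Y=1$ and $g_{j_r}=op(1)$ at $Y=0$ drive $T_J\to+\infty$ and $-\infty$ respectively. For (a) and (b) I would instead use the quadratic statistic $\sum_r W_{j_r0}^2$ (resp. $\sum_r W_{j_r1}^2$): by the mean computation its two conditional expectations separate at rate $R$, where $R=R(J)\to\infty$ is the number of retained indices at most $J$, since the native-group mean is $\sum_r\omega_{j_rk}=o(R)$ while the foreign-group mean is bounded below by $R\,\lambda_{\min}(\mathbf{M}_k)>0$. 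Case (c) combines a marginal and a quadratic contribution from both groups. In every case I would finish with Chebyshev's inequality: once the gap between the two conditional means of $T_J$ grows while the conditional variances grow strictly slower, a midpoint threshold makes both error probabilities vanish, and the four conditions are exactly what force the squared gap (of order $R^2$) to dominate.

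The hard part will be controlling the conditional variance of the quadratic statistic in the \emph{foreign} group, e.g. the variance of $\sum_r W_{j_r0}^2$ given $Y=1$, where the coordinates of $\mathbf{u}_0$ are non-Gaussian and dependent and only their second-moment structure $\mathbf{M}_0$ is available. The point I would use is that under the meta-Gaussian model each $u_{j0}$ is a fixed monotone transform of the Gaussian coordinate $u_{j1}$, so $\sum_r W_{j_r0}^2$ is a nonlinear functional of a Gaussian vector whose variance can be bounded by a Hermite/Poincar\'e-type estimate in terms of the operator-norm bounds on $\mathbf{\Omega}_1$ and $\mathbf{M}_0$ supplied by \ref{A1} and \ref{A4}, keeping the variance at order $R$ against a squared gap of order $R^2$. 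Transferring this domination from the surrogate $T_J$ to the genuine likelihood ratio $\log Q^*_J$ is then immediate from the Bayes-optimality reduction of the first paragraph.
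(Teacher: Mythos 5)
Your opening reduction is exactly the paper's: establish Bayes-optimality of $\mathbbm{1}\{\log Q^*_J(X)>0\}$ among all rules based on the first $J$ scores, then exhibit a cruder surrogate rule built from pieces of the summand of $\log Q^*_J$ whose error vanishes along the subsequence $\{j_r\}$. Where you diverge is in the surrogate itself, and that is where the argument breaks. The paper uses a \emph{single-coordinate} rule for each $r$, e.g.\ $T^a_{j_r}(X)=\log g_{j_r}+(\mathbf{V}_{j_r0}^T\mathbf{u}_0)^2/\omega_{j_r0}$, and needs only two elementary facts: under $Y=0$ the quadratic term is exactly $\chi^2_1$, and under $Y=1$ the rescaled variable $\sqrt{s_{j_r0}/\omega_{j_r0}}\,\mathbf{V}_{j_r0}^T\mathbf{u}_0$ has unit variance by the very definition of $s_{j_r0}$ --- no fourth moments, no joint distribution across coordinates. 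You instead aggregate, taking $T_J=\sum_r W_{j_r0}^2$, and close with Chebyshev, which forces you to control $\mathrm{Var}\bigl(\sum_r W_{j_r0}^2\mid Y=1\bigr)$. That quantity is not controlled by \ref{A1}, \ref{A2}, \ref{A4}: those assumptions fix only the second-moment structure $\mathbf{M}_0$ of $\mathbf{u}_0$ given $Y=1$, while your sum has $R^2$ cross-covariance terms between dependent, non-Gaussian coordinates, so generically the variance is of order $R^2$ --- the same order as your squared mean gap --- and Chebyshev gives nothing. Your proposed repair via a Gaussian Poincar\'e inequality would work only if the coordinatewise meta-Gaussian transforms $u_{j0}=\Phi^{-1}\circ F_{j0}\circ F_{j1}^{-1}\circ\Phi(u_{j1})$ had uniformly bounded derivatives, i.e.\ uniformly bounded density ratios $f_{j0}/f_{j1}$ along the normal quantile map; nothing in \ref{A2} gives that (it only bounds densities away from zero on compacts within their supports), so this is an unstated extra hypothesis, not a consequence of the theorem's assumptions.

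Two further symptoms that the intended argument is not the right one. First, for cases a) and b) you discard the marginal term entirely and use only the quadratic statistic, so the hypothesis $g_{j_r}=op(1)$ never enters your proof of a); the paper's surrogate keeps $\log g_{j_r}$ precisely because the $Y=0$ error is killed by $g_{j_r}\overset{p}{\to}0$ and the $Y=1$ error by $s_{j_r0}\to 0$ --- both hypotheses are load-bearing. Second, your mean-gap computation ($\sum_r\omega_{j_rk}=o(R)$ versus $R\,\lambda_{\min}(\mathbf{M}_k)$) silently upgrades ``$\mathbf{M}_k$ bounded and positive definite'' to uniform-in-$J$ two-sided spectral bounds and infers $\omega_{j_rk}\to 0$ from $s_{j_rk}\to 0$; the paper never needs this inference because it works with the ratio $s_{j_rk}$ directly as a variance normalization. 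The single-coordinate route is both simpler and the only one that closes under the stated assumptions; case d) in your write-up (the sum $\sum_r\log g_{j_r}$) is fine but again unnecessary, since one coordinate already suffices there.
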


Based on the structure of the log density ratio described in Eq.(\ref{newsum}), Theorem \ref{theorem2} discusses the occurrence of perfect classification in two aspects: $g_j$, which mainly depicts the relative magnitude of the score marginal densities at each $k=0, 1$; and $s_{jk}$, which compares the correlation between the scores conditioned at each group. Either part showing enough disparity between groups results in perfect classification. 

For example, in Theorem \ref{theorem2} a), when there exists a subsequence $g_{j_r} \to 0$ in probability, indicating the dominance of the marginal densities by the group $Y=0$, the misclassification tends to occur at $Y=1$. However, as $s_{j_r0} \to 0$, the covariance of $\mathbf{u}_0$ conditioned at $Y=1$ becomes much larger than at $Y=0$. As a result, the nonnegative $\mathbf{u}_0^T \mathbf{\Omega}_0^{-1} \mathbf{u}_0^T$ in Eq.(\ref{newsum}) with large variation when $Y=1$ compensates to eventually avoid misclassifying $X$ to group $0$. When $g_{j_r}$ behaves perfectly, as in case d), where the corresponding group marginal densities are dominant in each subgroup $Y=k$, we do not need to impose requirements on the copula correlation to achieve perfect classification.

\begin{remark}
Theorem \ref{theorem2} provides sufficient, but not necessary conditions for the Bayes classifier to achieve asymptotic perfect classification under unequal group eigenfunctions. Owing to the optimality of the Bayes classifier in minimizing the zero-one loss, various conditions from other functional classifiers to achieve an asymptotically zero error also work here. For example, \cite{DH2012} proposed conditions in terms of group eigenvalues and the mean difference for the functional centroid classifier to reach perfect classification. These also work as sufficient conditions for $\mathds{1}\{\log Q_J^*(X)>0\}$ in our case. With a copula model, which is not found in previous work, Theorem \ref{theorem2} uses the relation between the scores' marginal densities and correlations to reduce the error rate to zero asymptotically.
\end{remark}

\section{Discussion} \label{discussion}
\subsection{Remarks}
Our copula-based Bayes classifiers remove the assumptions of equal group eigenfunctions and independent scores. As our two examples show, it is not uncommon  to have unequal group eigenfunctions (see Fig.\ \ref{ccagroup} and Fig.\ \ref{truckgroup}). The new methods also prove to have stronger performance in terms of dimension reduction than that of the original BC.
Our simulation results prove the strength of our method in distinguishing groups by the differences in their functional means and their covariance functions. 
We examined the two choices of projection directions, PC and \mbox{PLS}.  PLS can detect location differences on eigenfunctions corresponding to smaller eigenvalues. 
We discussed new conditions for the estimated classifier to be asymptotically equivalent to the true Bayes classifier, and for perfect classification to occur. These differ from those of previous works, owing to the unequal group eigenfunction setting. We also imposed sparsity conditions on the inverse of the copula correlations.

\subsection{Future Work}
In future work, we would like to extend the copula-based classification to the problem with multiple functional covariates. Some previous works discuss this situation in the framework of functional generalized models: \cite{crainiceanu2009generalized} proposed a generalized multilevel regression model where there are repeated curve measurements for each subject; \cite{zhu2010bayesian} discussed an FGLM approach for the classification of multilevel functions with Bayesian variable selection; and \cite{li2010generalized} present a generalized functional linear model where there are both functional and multivariate covariates, and use a semiparametric single-index function to model the interaction between them. We plan to approach the problem from a different angle, using functional Bayes classification again, owing to its strong performance in the single functional predictor case. Furthermore, because it is natural to assume that the response depends on the covariates and their interactions, it becomes more important for our method to model the dependency between the projected scores. Another aspect we would like to consider is how to choose a proper functional basis for multiple functional predictors.

\vskip 14pt
\noindent {\large\bf Supplementary Materials}

The Supplementary Materials for this document contain additional results for the simulations, for the fractional anisotropy (FA) example, and for the example using truck emissions.  They also contain proofs of Theorems 1, 2, and 3.

\par
\vskip 14pt
\noindent {\large\bf Acknowledgements}

The authors gratefully acknowledge the helpful feedback from the associate editor and referees.
The MRI/DTI data in the refund package were collected at Johns Hopkins University and the Kennedy\textendash Krieger Institute.

\par


\lhead[]{}\rhead[\fancyplain{}\leftmark\footnotesize]{\fancyplain{}\rightmark\footnotesize{} }
\bibhang=1.7pc
\bibsep=2pt
\fontsize{9}{14pt plus.8pt minus .6pt}\selectfont
\renewcommand\bibname{\large \bf References}
\expandafter\ifx\csname
natexlab\endcsname\relax\def\natexlab#1{#1}\fi
\expandafter\ifx\csname url\endcsname\relax
  \def\url#1{\texttt{#1}}\fi
\expandafter\ifx\csname urlprefix\endcsname\relax\def\urlprefix{URL}\fi

\bibliographystyle{apalike}
\bibliography{references} 

\vskip .65cm
\noindent
Department of Statistics and Data Science, Cornell University
\vskip 2pt
\noindent
E-mail: wh365@cornell.edu
\vskip 2pt

\noindent
School of Operations Research and Information Engineering, and Department of Statistics and Data Science, Cornell University
\vskip 2pt
\noindent
E-mail: dr24@cornell.edu

\setcounter{page}{1}
\setcounter{section}{0}
\setcounter{figure}{0}
\setcounter{table}{0}

\clearpage

\renewcommand{\baselinestretch}{2}
\renewcommand{\sectionmark}[1]{}
\renewcommand{\subsectionmark}[1]{}

\markright{ \hbox{\footnotesize\rm Statistica Sinica: Supplement
}\hfill\\[-13pt]
\hbox{\footnotesize\rm
}\hfill }

\markboth{\hfill{\footnotesize\rm WENTIAN HUANG AND DAVID RUPPERT} \hfill}
{\hfill {\footnotesize\rm COPULA-BASED FUNCTIONAL BAYES CLASSIFICATION} \hfill}

\renewcommand{\thefootnote}{}
$\ $\par \fontsize{12}{14pt plus.8pt minus .6pt}\selectfont


 \centerline{\large\bf Supplementary Materials for ``Copula-Based Functional}
\vspace{2pt}
 \centerline{\large\bf Bayes Classification with Principal Components}
\vspace{2pt}
 \centerline{\large\bf and Partial Least Squares''}
\vspace{.25cm}
 \centerline{WENTIAN HUANG AND DAVID RUPPERT}
\vspace{.4cm}
 \centerline{\it Department of Statistics and Data Science, Cornell University}
\vspace{.55cm}

\def\theequation{S\arabic{section}.\arabic{equation}}
\renewcommand{\thesection}{S\arabic{section}}   
\renewcommand{\thetable}{S\arabic{table}}   
\renewcommand{\thefigure}{S\arabic{figure}}

\fontsize{12}{14pt plus.8pt minus .6pt}\selectfont

\section{Algorithm of Functional Partial Least Squares} \label{sup:pls}
\setcounter{equation}{0}

FPLS consists of these steps:

\begin{enumerate}[(i)]
\item
Begin $\mathbf{X}^0=\left(X_{1\cdot \cdot}^0, \ldots, X_{n\cdot \cdot}^0 \right)^T$, $\mathbf{Y}^0=\left(Y_1^0, \ldots, Y_n^0 \right)^T$ centered at their marginal means; 
\item
At step $j$, $1 \le j \le J$, the $j$-th weight function $w_j$ solves \\$\max_{w_j \in \mathcal{L}^2(\mathcal{T})} \text{cov}^2 \left\{\mathbf{Y}^{j-1}, \langle \mathbf{X}^{j-1}, w_j \rangle \right\}$, such that $\|w_j\|=1$ and $\langle w_j, G(w_{j'}) \rangle=0$  for all $ 1 \le j' \le j-1$. Note that we use $\langle \mathbf{X}^{j-1}, w_j \rangle$ to represent an $n$-dimensional vector with elements $\langle X^{j-1}_{i\cdot \cdot}, w_j \rangle$, $i=1,\ldots, n$.
Optimal weight function $w_j$ here has the closed form $w_j=\dfrac{\sum_i Y_i^{j-1} X_{i \cdot \cdot}^{j-1}}{\|\sum_i Y_i^{j-1} X_{i \cdot \cdot}^{j-1}\|}$. It is a sample estimation of the theoretical weight function used in algorithms like \cite{aguilera2010};
\item
The $n$-vector $\mathbf{S}_{j}=\left(s_{1j}, \ldots, s_{nj}\right)^T$ contains  the  $j$-th scores: $\mathbf{S}_{j}=\langle \mathbf{X}^{j-1}, w_j \rangle$;

\item
The loading function $P_j \in \mathcal{L}^2(\mathcal{T})$ is generated by ordinary linear regression of $\mathbf{X}^{j-1}$ on scores $\mathbf{S}_j$: $P_j(t)=\mathbf{S}_j^T \mathbf{X}^{j-1}\left(t\right)/\|\mathbf{S}_j\|^2$, $t \in \mathcal{T}$. 
Similarly, ${\mathcal{D}}_j=\mathbf{S}_j^T\mathbf{Y}^{j-1}/\|\mathbf{S}_j\|^2$;
\item
Update $\mathbf{X}^{j}(t)=\mathbf{X}^{j-1}(t)-P_j(t) \mathbf{S}_j$, $t \in \mathcal{T}$ and $\mathbf{Y}^j=\mathbf{Y}^{j-1}-{\mathcal{D}}_j\mathbf{S}_j$;
\item
Return to (ii) and iterate for a total of $J$ steps.
\end{enumerate}

\section{A more general procedure for multiclass classification} \label{sp:multiclass}
\setcounter{equation}{0}
We describe a detailed procedure of using the copula-based Bayes classification on data with more than $2$ classes, which is complementary to Section \ref{section: 2.2}. 

Assume the response $Y$ has $K$ potential classes ($K > 2$), and the group mean for each subgroup $k$ is $E\left(X|Y=k)\right) = \mu_k$. $P(Y = k) = \pi_k$ for $k = 0, \ldots, K-1$. Then joint covariance operator $G$ has the kernel $G\left(s, t\right) = \sum_k \pi_k G_k + \sum_k \pi_k \mu_k(s)\mu_k(t) - \mu(s)\mu(t)$, where $\mu = E\left(X\right) = \sum_k\pi_k\mu_k$ is the overall mean. Let the truncated joint eigenfunctions again be $\phi_1, \ldots, \phi_J$. The copula densities $c_k$ and score marginal densities $f_{jk}$ are built similar to the binary case, for each class $k = 0, \ldots, K-1$. Then for a test curve $x$ with $x_j=\langle x, \phi_j\rangle$ as the $j$th projected score on the joint basis, we predict $x$'s class to be $k^*$ where 
\begin{equation}
    k^* = \text{argmax}_k f_k \left (x_1, \ldots, x_J \right ) \pi_k = \text{argmax}_k \pi_k c_k\left \{F_{1k}(x_1), \ldots, F_{Jk}(x_J) \right \} \Pi_{j=1}^J f_{jk}(x_j).
\end{equation}

\section{Additional Details and Outputs of Numerical Study in Section \ref{sim}}
\setcounter{equation}{0}

\subsection{Results with Different Score Distributions (V) and Increased Training Size} \label{sp:v}

\begin{table}[h!]
\centering
\resizebox{\columnwidth}{!}{
\begin{tabular}{r|r|rrrr|rrr|rr}
  \hline
 & BC & BCG & BCGPLS & BCt & BCtPLS & CEN & PLSDA & logistic & CV & Ratio (CV) \\ 
  \hline
  \rowcolor{Gray0}
SSSN & 0.495 & 0.500 & 0.503 & 0.492 & 0.504 & 0.502 & 0.500 & 0.500 & 0.505 & 2.49\% \\ 
  SSDN &\textbf{0.200} & 0.208 & 0.304 & 0.214 & 0.400 & 0.474 & 0.495 & 0.473 & 0.202 & 1.10\% \\ 
  SDSN & 0.276 & 0.272 & 0.274 & 0.273 & 0.275 &\textbf{ 0.237} & 0.279 &0.240 & 0.239 & 0.96\% \\ 
  SDDN & 0.142 &\textbf{ 0.137} & 0.270 &\textbf{ 0.137} & 0.272 & 0.202 & 0.245 & 0.206 & 0.138 & 0.88\% \\
  \hline
  \rowcolor{Gray0}
  SSST & 0.508 & 0.504 & 0.498 & 0.511 & 0.509 & 0.500 & 0.496 & 0.495 & 0.504 & 1.80\%\\ 
  SSDT &\textbf{ 0.414} &\textbf{ 0.414} & 0.426 & 0.421 & 0.454 & 0.492 & 0.498 & 0.496 & 0.415 & 0.24\% \\ 
  SDST & 0.161 &0.158 & 0.183 &\textbf{ 0.153} & 0.205 &\textit{ 0.155} & 0.221 &\textbf{ 0.153} & 0.150 & -1.66\%\\ 
  SDDT & 0.137 & 0.134 & 0.161 &\textbf{ 0.129} & 0.188 & 0.136 & 0.224 & 0.132 & 0.132 & 2.48\%\\ \hline
  SSSV &\textit{ 0.383} &\textbf{ 0.382} & 0.484 &\textbf{ 0.382} & 0.482 & 0.489 & 0.495 & 0.494 & 0.385 & 0.96\% \\ 
  SSDV &\textbf{ 0.187} & 0.195 & 0.326 & 0.199 & 0.402 & 0.468 & 0.498 & 0.476 & 0.189 & 0.71\% \\ 
  SDSV &\textbf{ 0.190} & 0.194 & 0.333 & \textit{0.192} & 0.309 & 0.234 & 0.281 & 0.233 & 0.191 & 0.60\% \\ 
  SDDV &\textbf{ 0.136} & 0.142 & 0.306 & 0.140 & 0.329 & 0.197 & 0.256 & 0.198 & 0.140 & 2.35\% \\ \hline
  \hline
  RSSN & 0.284 &\textbf{ 0.110} & 0.128 &\textbf{ 0.110} & 0.120 & 0.498 & 0.503 & 0.482 & 0.111 & 1.22\% \\ 
  RSDN & 0.251 &\textbf{ 0.050} & 0.097 & 0.053 & 0.123 & 0.490 & 0.494 & 0.474 & 0.051 & 3.08\% \\ 
  RDSN & 0.248 &\textit{ 0.090} & 0.099 &\textbf{ 0.089} & 0.096 & 0.292 & 0.298 & 0.291 & 0.092 & 2.92\% \\ 
  RDDN & 0.195 &\textbf{ 0.041} & 0.072 &\textbf{ 0.041} & 0.084 & 0.267 & 0.285 & 0.269 & 0.042 & 2.29\%\\ \hline
  RSST & 0.401 & 0.295 & 0.314 &\textbf{ 0.289} & 0.302 & 0.497 & 0.495 & 0.486 & 0.290 & 0.58\%\\ 
  RSDT & 0.358 &\textbf{ 0.260} & 0.296 & 0.271 & 0.291 & 0.490 & 0.487 & 0.477 & 0.265 & 1.95\%\\ 
  RDST & 0.156 &\textbf{ 0.113} & 0.177 & 0.117 & 0.176 & 0.152 & 0.239 & 0.153 & 0.114 & 1.54\%\\ 
  RDDT & 0.134 &\textbf{ 0.095} & 0.152 & 0.099 & 0.171 & 0.135 & 0.236 & 0.128 & 0.096 & 0.77\% \\ \hline
  RSSV & 0.215 & 0.125 & 0.174 &\textbf{ 0.120} & 0.173 & 0.480 & 0.479 & 0.478 & 0.122 & 1.83\% \\ 
  RSDV & 0.217 & \textbf{ 0.095} & 0.172 & 0.102 & 0.215 & 0.475 & 0.474 & 0.474 & 0.097 & 2.32\%\\ 
  RDSV & 0.159 &\textbf{ 0.086} & 0.141 &\textit{ 0.087} & 0.148 & 0.270 & 0.304 & 0.272 & 0.086 & -0.39\% \\ 
  RDDV & 0.181 & 0.084 & 0.188 & \textbf{0.081} & 0.221 & 0.231 & 0.289 & 0.231 & 0.081 & 0.50\% \\ 
   \hline
\end{tabular}}
\caption{Misclassification rates of eight classifiers on $24$ scenarios, each an average  from $100$ simulations. Training size 500, test size 150.}
\label{sp:binaryV500}
\end{table}

To check classification performance in the varied score (V) setup when distributions are non-normal and non-tail-dependent, we include simulation results Table \ref{sp:binaryV500} here with a different choice of V: when $k = 1$, scores are distributed as standardized $\chi^2(1)$; when $k = 0$, it is standardized gamma distribution with both rate and scale parameters to as 1.

Also, in Table \ref{sp:binaryV500} we increased the training size to 500 for classification performance check. The major findings are consistent with Section \ref{simsection}.

Similar process is applied to the multiclass classification and the results are included in Table \ref{sp:multilevelV500}. We again increased the training size for each data scenario to $500$, and used a different set of score distributions for the varied distribution setup (V): when $k = 0$, scores distribution is standardized $\chi^2(1)$; when $k = 1$, it is standardized gamma distribution with both rate and scale parameters as 1; when $k = 2$, scores have log-normal distribution with parameters $\mu = 0$ and $\sigma^2 = 1$.

\begin{table}[h!]
\centering
\begin{tabular}{r|r|rrrr|rr|rr}
  \hline
 & BC & BCG & BCGPLS & BCt & BCtPLS & PLSDA & logistic & CV.mean & ratio.cv \\ 
  \hline
MSSN & 0.469 &\textbf{ 0.199} & 0.223 & 0.200 & 0.223 & 0.636 & 0.632 & 0.200 & 0.43\% \\ 
  MDSN & 0.247 &\textbf{ 0.066} & 0.072 &\textbf{ 0.066} & 0.073 & 0.451 & 0.390 & 0.068 & 3.32\% \\ 
  MSDN & 0.167 &\textbf{ 0.052} & 0.108 &\textit{ 0.053} & 0.160 & 0.630 & 0.621 & 0.051 & -3.05\% \\ 
  MDDN & 0.147 &\textbf{ 0.047} & 0.097 &\textbf{ 0.047} & 0.127 & 0.506 & 0.475 & 0.047 & 0.27\% \\ 
  \hline
  MSST & 0.505 & 0.304 & 0.340 &\textbf{ 0.296} & 0.315 & 0.629 & 0.637 & 0.296 & 0.08\% \\ 
  MDST & 0.278 & 0.128 & 0.143 &\textbf{ 0.126} & 0.148 & 0.421 & 0.344 & 0.122 & -3.79\% \\ 
  MSDT & 0.409 & 0.247 & 0.288 &\textbf{ 0.214} & 0.335 & 0.622 & 0.623 & 0.207 & -2.91\% \\ 
  MDDT & 0.296 & 0.164 & 0.202 &\textbf{ 0.130} & 0.263 & 0.468 & 0.382 & 0.131 & 0.40\% \\ 
  \hline
  MSSV & 0.303 &\textbf{ 0.187} & 0.275 & 0.197 & 0.285 & 0.625 & 0.618 & 0.185 & -0.67\% \\ 
  MDSV & 0.196 &\textbf{ 0.097} & 0.248 &\textbf{ 0.097} & 0.264 & 0.465 & 0.391 & 0.100 & 3.20\% \\ 
  MSDV & 0.252 & 0.149 & 0.205 &\textbf{ 0.140} & 0.295 & 0.622 & 0.615 & 0.142 & 1.28\% \\ 
  MDDV & 0.206 & 0.115 & 0.162 &\textbf{ 0.109} & 0.238 & 0.523 & 0.462 & 0.108 & -0.79\% \\ 
   \hline
\end{tabular}
\caption{Misclassification rates averaged over $100$ simulations of the $7$ classifiers on $12$ multinomial data scenarios. Training sizes are again increased to $500$.}
\label{sp:multilevelV500}
\end{table}

\subsection{Correlation of Scores in RSDN}\label{sec:corrRSDN}
\begin{table}[h!]
\centering
\scalebox{0.8}{\begin{tabular}{r|rrrrrrrrrr}
  \hline
 & 1 & 2 & 3 & 4 & 5 & 6 & 7 & 8 & 9 & 10 \\ 
  \hline
1 & 1.000 &&&&&&&&& \\ 
  2 & -0.283 & 1.000 &&&&&&&& \\ 
  3 & 0.102 & -0.548 & 1.000 &&&&&&& \\ 
  4 & 0.292 & 0.384 & -0.253 & 1.000 & &&&&& \\ 
  5 & -0.119 & -0.346 & 0.210 & -0.668 & 1.000 &&&&& \\ 
  6 & -0.362 & -0.069 & -0.023 & -0.431 & 0.362 & 1.000 &&&& \\ 
  7 & 0.013 & -0.014 & 0.189 & 0.201 & -0.194 & -0.225 & 1.000 & && \\ 
  8 & 0.245 & 0.134 & -0.113 & 0.478 & -0.311 & -0.360 & 0.186 & 1.000 && \\ 
  9 & -0.159 & -0.042 & 0.180 & -0.085 & 0.045 & 0.204 & -0.070 & -0.039 & 1.000 &\\ 
  10 & -0.066 & 0.028 & 0.080 & 0.131 & -0.178 & -0.219 & 0.439 & 0.079 & 0.006 & 1.000 \\ 
   \hline
\end{tabular}}
\caption{Pearson correlations of scores on first $10$ joint basis at group $k=1$ in Scenario RSDN. Correlations are estimated from $500$ samples in total of both groups.}
\label{tab:corrRSDN1}
\end{table}

\begin{table}[h!]
\centering
\scalebox{0.8}{\begin{tabular}{r|rrrrrrrrrr}
  \hline
 & 1 & 2 & 3 & 4 & 5 & 6 & 7 & 8 & 9 & 10 \\ 
  \hline
1 &  & & & & & & & & &  \\ 
  2 & \textbf{0.000} & &&&&&&&& \\ 
  3 & 0.113 &\textbf{ 0.000 }&  &&&&&&& \\ 
  4 &\textbf{ 0.000} &\textbf{ 0.000} &\textbf{ 0.000} &  &&&&&& \\ 
  5 & 0.064 &\textbf{ 0.000} &\textbf{ 0.001} &\textbf{ 0.000} &  &&&&&  \\ 
  6 &\textbf{ 0.000} & 0.283 & 0.722 &\textbf{ 0.000} &\textbf{ 0.000} &  &&&& \\ 
  7 & 0.841 & 0.829 &\textbf{ 0.003} &\textbf{ 0.002} &\textbf{ 0.002} &\textbf{ 0.000} & &&& \\ 
  8 &\textbf{ 0.000} &\textbf{ 0.036} & 0.077 &\textbf{ 0.000} &\textbf{ 0.000} & \textbf{0.000} &\textbf{ 0.003} &  && \\ 
  9 &\textbf{ 0.013} & 0.518 &\textbf{ 0.005} & 0.188 & 0.480 &\textbf{ 0.001} & 0.275 & 0.545 &  & \\ 
  10 & 0.306 & 0.662 & 0.213 &\textbf{ 0.040} &\textbf{ 0.005} &\textbf{ 0.001} &\textbf{ 0.000} & 0.216 & 0.921 &  \\ 
   \hline
\end{tabular}}
\caption{P-values from significance test of correlations for scores in Group $k=1$ in Scenario RSDN. $P < 0.05$ is labeled green.}
\label{tab:pvalueRSDN1}
\end{table}

\begin{table}[h!]
\centering
\scalebox{0.8}{\begin{tabular}{r|rrrrrrrrrr}
  \hline
 & 1 & 2 & 3 & 4 & 5 & 6 & 7 & 8 & 9 & 10 \\ 
  \hline
1 & 1.000 &&&&&&&&& \\ 
  2 & 0.015 & 1.000 &&&&&&&& \\ 
  3 & -0.007 & 0.054 & 1.000 &&&&&&& \\ 
  4 & -0.082 & -0.158 & 0.135 & 1.000 &&&&&& \\ 
  5 & 0.011 & 0.046 & -0.036 & 0.460 & 1.000 &&&&& \\ 
  6 & 0.029 & 0.009 & 0.005 & 0.269 & -0.072 & 1.000 &&&& \\ 
  7 & -0.001 & 0.001 & -0.025 & -0.105 & 0.033 & 0.035 & 1.000 &&& \\ 
  8 & -0.017 & -0.012 & 0.017 & -0.254 & 0.053 & 0.054 & -0.023 & 1.000 && \\ 
  9 & 0.008 & 0.003 & -0.016 & 0.031 & -0.005 & -0.022 & 0.007 & 0.003 & 1.000 & \\ 
  10 & 0.005 & -0.005 & -0.014 & -0.072 & 0.031 & 0.037 & -0.061 & -0.009 & -0.000 & 1.000 \\ 
   \hline
\end{tabular}}
\caption{Pearson correlations of scores on first $10$ joint basis at group $k=0$ in Scenario RSDN. Correlations are estimated from $500$ samples in total of both groups.}
\label{tab:corrRSDN0}
\end{table}

\begin{table}[h!]
\centering
\scalebox{0.8}{\begin{tabular}{r|rrrrrrrrrr}
  \hline
 & 1 & 2 & 3 & 4 & 5 & 6 & 7 & 8 & 9 & 10 \\ 
  \hline
1 &  &&&&&&&&& \\ 
  2 & 0.805 &  &&&&&&&& \\ 
  3 & 0.917 & 0.392 &  &&&&&&& \\ 
  4 & 0.193 & \textbf{0.011} &\textbf{ 0.031} &  &&&&&& \\ 
  5 & 0.866 & 0.467 & 0.572 &\textbf{ 0.000} &  &&&&& \\ 
  6 & 0.642 & 0.884 & 0.940 &\textbf{ 0.000} & 0.249 &  &&&& \\ 
  7 & 0.991 & 0.990 & 0.688 & 0.093 & 0.603 & 0.579 &  &&& \\ 
  8 & 0.785 & 0.846 & 0.789 &\textbf{ 0.000} & 0.401 & 0.386 & 0.710 &  && \\ 
  9 & 0.903 & 0.960 & 0.797 & 0.616 & 0.931 & 0.722 & 0.918 & 0.957 &  & \\ 
  10 & 0.935 & 0.938 & 0.828 & 0.253 & 0.616 & 0.558 & 0.333 & 0.888 & 0.996 &  \\ 
   \hline
\end{tabular}}
\caption{P-values from significance test of correlations for scores in Group $k=0$ in Scenario RSDN. $P < 0.05$ is labeled green.}
\label{tab:pvalueRSDN0}
\end{table}

\newpage
\begin{figure}[h!]
\centering
      \includegraphics[scale=.6]{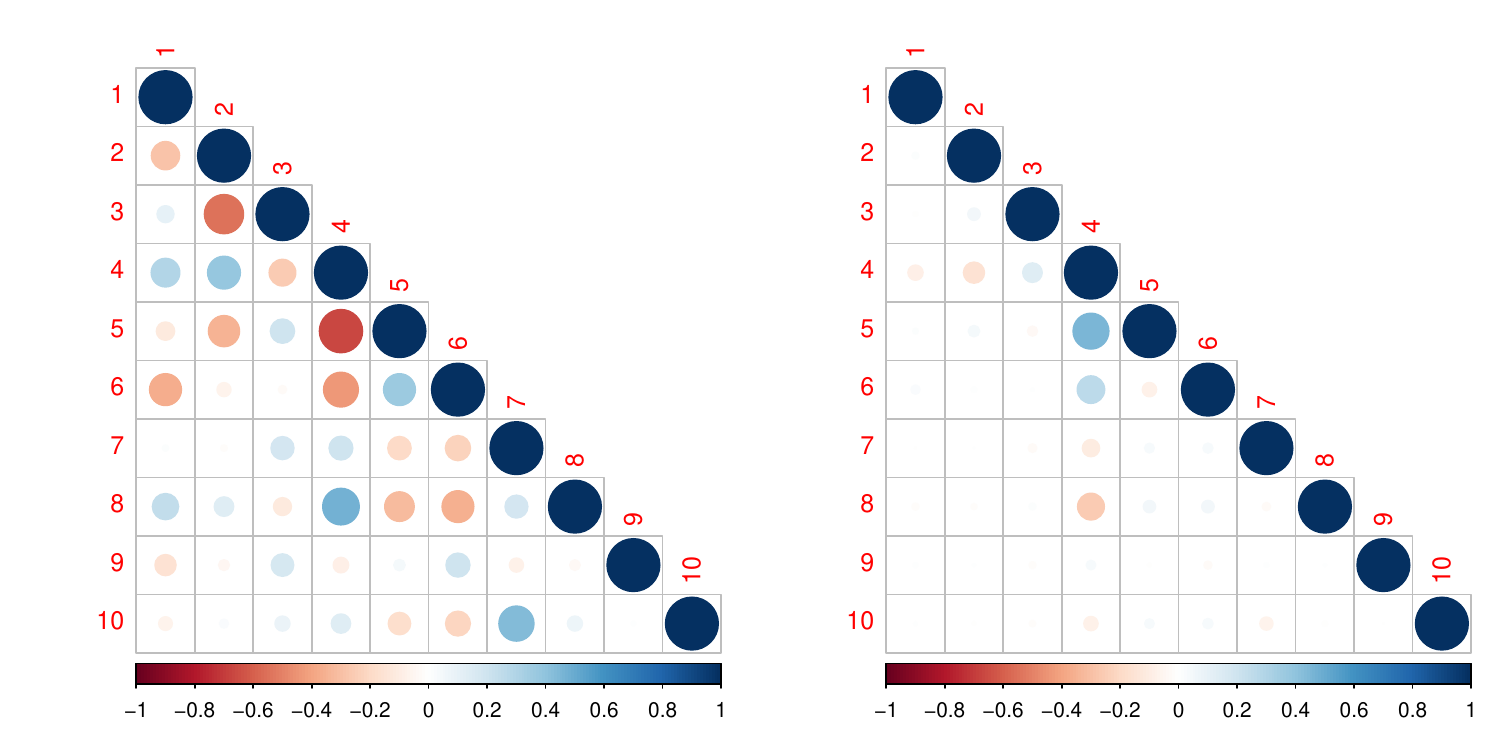}
  \caption{Comparison of correlation plots of first 10 scores at both group of RSDN. Left: $k=1$; Right: $k=0$.}
 \label{fig:corrRSDN}
\end{figure}

\clearpage
\subsection{Correlation of scores in RSDT}\label{sec:corrRSDT}
\begin{table}[h!]
\centering
\scalebox{0.8}{\begin{tabular}{r|rrrrrrrrrr}
  \hline
 & 1 & 2 & 3 & 4 & 5 & 6 & 7 & 8 & 9 & 10 \\ 
  \hline
1 & 1.000 & &&&&&&&& \\ 
  2 & -0.361 & 1.000 & &&&&&&& \\ 
  3 & 0.110 & 0.258 & 1.000 & &&&&&& \\ 
  4 & -0.278 & 0.300 & 0.015 & 1.000 & &&&&& \\ 
  5 & 0.144 & 0.069 & 0.759 & -0.295 & 1.000 & &&&& \\ 
  6 & 0.015 & -0.061 & 0.155 & -0.257 & 0.262 & 1.000 & &&& \\ 
  7 & -0.189 & -0.077 & -0.128 & 0.117 & -0.138 & 0.276 & 1.000 & && \\ 
  8 & 0.094 & -0.079 & 0.307 & -0.099 & 0.367 & 0.036 & -0.158 & 1.000 & & \\ 
  9 & 0.156 & -0.058 & 0.291 & -0.234 & 0.297 & -0.114 & -0.176 & -0.074 & 1.000 &  \\ 
  10 & -0.075 & -0.077 & -0.142 & -0.046 & 0.002 & 0.103 & -0.063 & 0.187 & -0.399 & 1.000 \\ 
   \hline
\end{tabular}}
\caption{Pearson correlations of scores on first $10$ joint basis at group $k=1$ in Scenario RSDT. Correlations are estimated from $500$ samples in total of both groups.}
\label{tab:corrRSDT1}
\end{table}

\begin{table}[h!]
\centering
\scalebox{0.8}{\begin{tabular}{r|rrrrrrrrrr}
  \hline
 & 1 & 2 & 3 & 4 & 5 & 6 & 7 & 8 & 9 & 10 \\ 
  \hline
1 &  & &&&&&&&& \\ 
  2 &\textbf{ 0.000} &  & &&&&&&& \\ 
  3 & 0.102 &\textbf{ 0.000} &  & &&&&&& \\ 
  4 &\textbf{ 0.000} &\textbf{ 0.000} & 0.820 &  & &&&&& \\ 
  5 &\textbf{ 0.032} & 0.302 &\textbf{ 0.000} &\textbf{ 0.000} &  & &&&& \\ 
  6 & 0.820 & 0.360 &\textbf{ 0.020} &\textbf{ 0.000} &\textbf{ 0.000} &  & &&& \\ 
  7 &\textbf{ 0.005} & 0.252 & 0.056 & 0.079 &\textbf{ 0.039} &\textbf{ 0.000} &  & && \\ 
  8 & 0.160 & 0.236 &\textbf{ 0.000} & 0.140 &\textbf{ 0.000} & 0.591 &\textbf{ 0.018} &  && \\ 
  9 &\textbf{ 0.020} & 0.387 &\textbf{ 0.000} &\textbf{ 0.000} &\textbf{ 0.000} & 0.088 &\textbf{ 0.008} & 0.271 &  & \\ 
  10 & 0.263 & 0.253 &\textbf{ 0.034} & 0.495 & 0.976 & 0.124 & 0.345 &\textbf{ 0.005} &\textbf{ 0.000} &  \\ 
   \hline
\end{tabular}}
\caption{P-values from significance test of correlations for scores in Group $k=1$ in Scenario RSDT. $P < 0.05$ is labeled green.}
\label{tab:pvalueRSDT1}
\end{table}

\begin{table}[h!]
\centering
\scalebox{0.8}{\begin{tabular}{r|rrrrrrrrrr}
  \hline
 & 1 & 2 & 3 & 4 & 5 & 6 & 7 & 8 & 9 & 10 \\ 
  \hline
1 & 1.000 & &&&&&&&& \\ 
  2 & 0.022 & 1.000 & &&&&&&& \\ 
  3 & -0.017 & -0.065 & 1.000 & &&&&&& \\ 
  4 & 0.033 & -0.058 & -0.007 & 1.000 & &&&&& \\ 
  5 & -0.026 & -0.019 & -0.562 & 0.170 & 1.000 & &&&& \\ 
  6 & -0.001 & 0.009 & -0.056 & 0.072 & -0.113 & 1.000 & &&&\\ 
  7 & 0.018 & 0.012 & 0.050 & -0.036 & 0.064 & -0.063 & 1.000 & && \\ 
  8 & -0.008 & 0.010 & -0.103 & 0.026 & -0.146 & -0.007 & 0.033 & 1.000 & &\\ 
  9 & -0.012 & 0.010 & -0.091 & 0.057 & -0.111 & 0.021 & 0.035 & 0.013 & 1.000 &  \\ 
  10 & 0.006 & 0.012 & 0.039 & 0.010 & -0.002 & -0.016 & 0.011 & -0.027 & 0.053 & 1.000 \\ 
   \hline
\end{tabular}}
\caption{Pearson correlations of scores on first $10$ joint basis at group $k=0$ in Scenario RSDT. Correlations are estimated from $500$ samples in total of both groups.}
\label{tab:corrRSDT0}
\end{table}

\begin{table}[h!]
\centering
\scalebox{0.8}{\begin{tabular}{r|rrrrrrrrrr}
  \hline
 & 1 & 2 & 3 & 4 & 5 & 6 & 7 & 8 & 9 & 10 \\ 
  \hline
1 &  & &&&&&&&&\\ 
  2 & 0.718 &  & &&&&&&& \\ 
  3 & 0.778 & 0.282 &  & &&&&&& \\ 
  4 & 0.580 & 0.336 & 0.903 &  & &&&&& \\ 
  5 & 0.665 & 0.756 &\textbf{  0.000} &\textbf{  0.005} &  & &&&& \\ 
  6 & 0.982 & 0.881 & 0.351 & 0.230 & 0.060 &  & &&& \\ 
  7 & 0.762 & 0.843 & 0.408 & 0.556 & 0.287 & 0.299 &  & && \\ 
  8 & 0.895 & 0.871 & 0.086 & 0.669 &\textbf{  0.015} & 0.907 & 0.581 &  & & \\ 
  9 & 0.846 & 0.875 & 0.132 & 0.348 & 0.064 & 0.731 & 0.567 & 0.830 &  &  \\ 
  10 & 0.926 & 0.845 & 0.518 & 0.873 & 0.970 & 0.785 & 0.856 & 0.659 & 0.383 &  \\ 
   \hline
\end{tabular}}
\caption{P-values from significance test of correlations for scores in Group $k=0$ in Scenario RSDT. $P < 0.05$ is labeled green.}
\label{tab:pvalueRSDT0}
\end{table}

\begin{figure}[h!]
\centering
      \includegraphics[scale=.6]{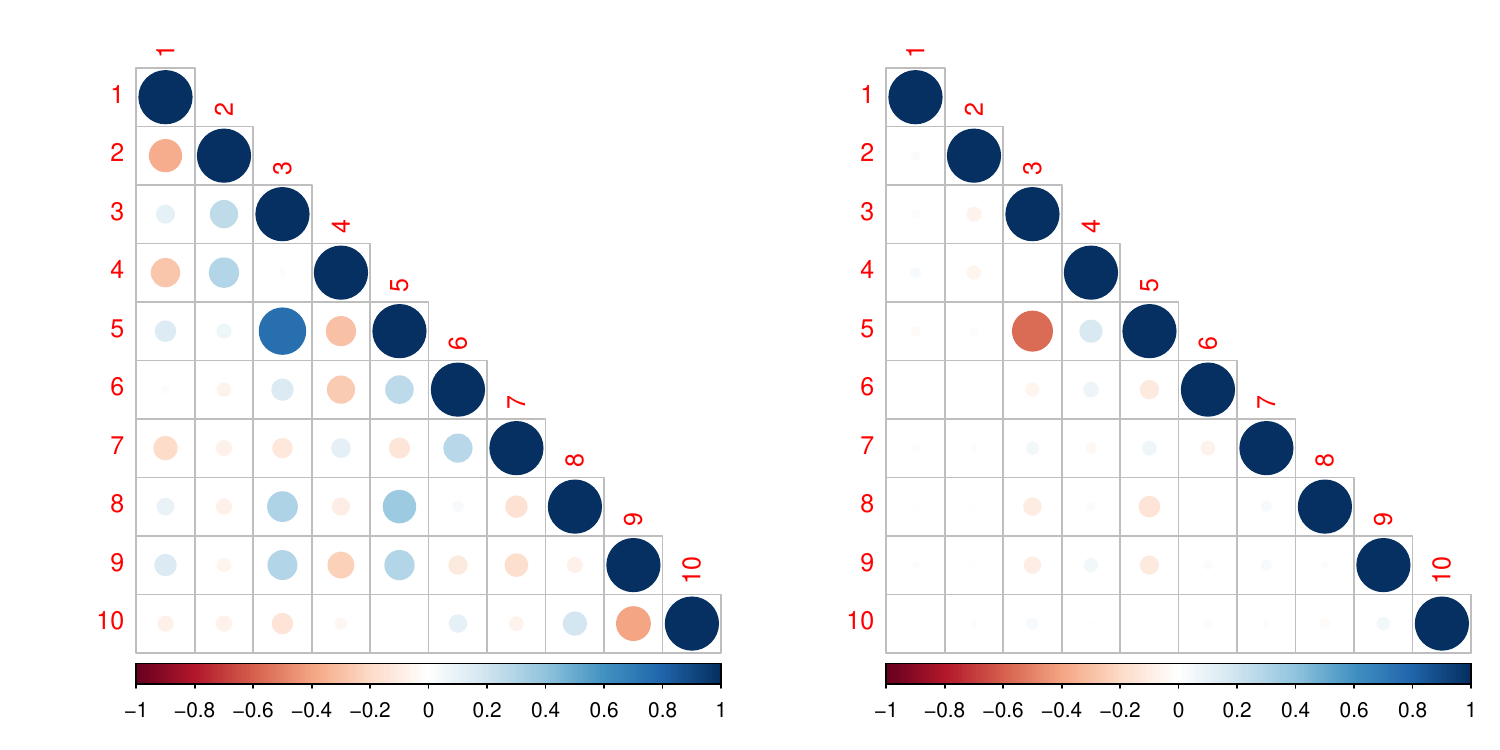}
  \caption{Comparison of correlation plots of first 10 scores at both group of RSDT. Left: $k=1$; Right: $k=0$.}
   \label{fig:corrRSDT}
\end{figure}

\clearpage

\setcounter{equation}{0}
\section{Additional Results for Two Data Examples} \label{sup:dataexample}
\subsection{Fractional Anisotropy Example}

\begin{figure}[h!]
\centering
      \includegraphics[scale=.5]{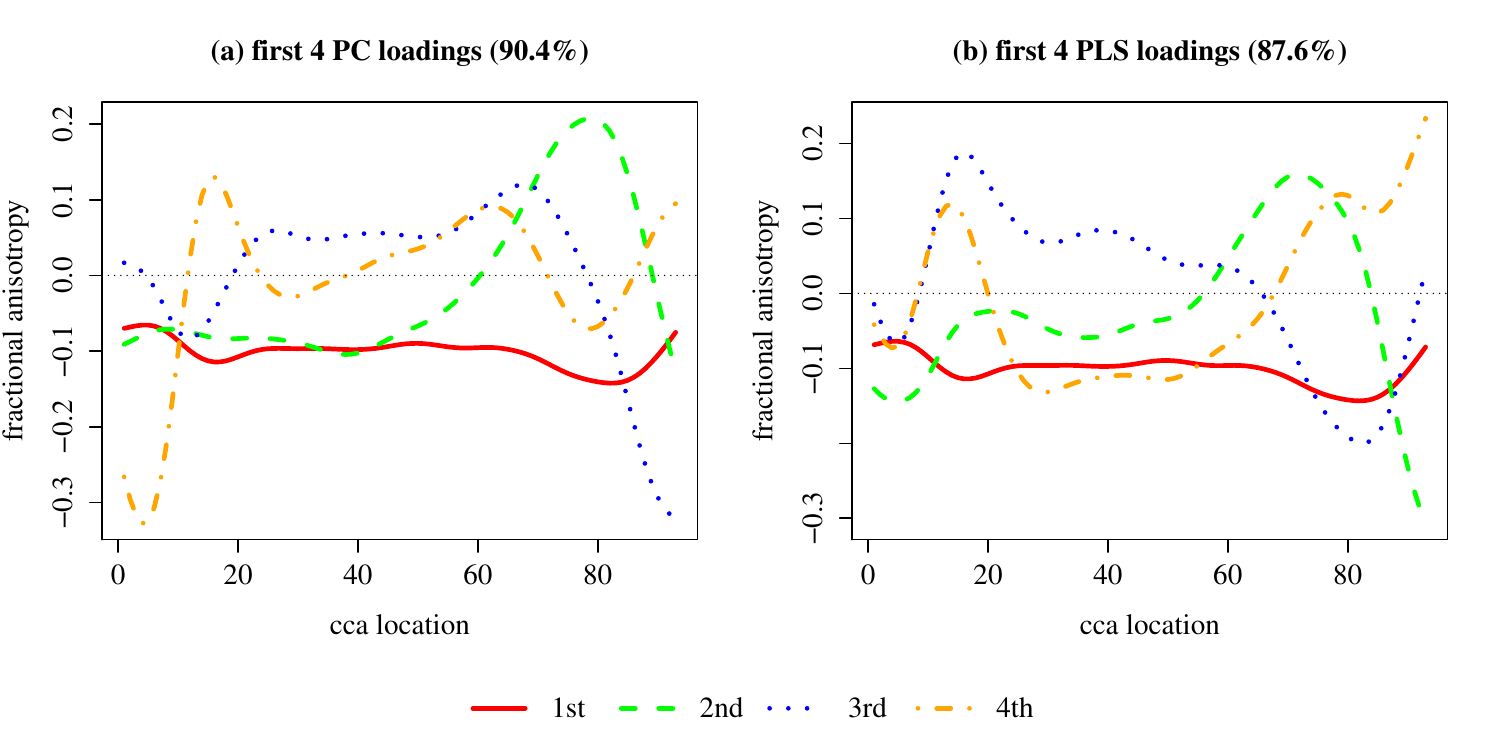}
  \caption{First four loading functions of PC (left) and PLS (right) of the smoothed FA profiles, with percentage of total variation reported in the titles. Both loadings are scaled to unit length for comparison. The first loading functions are red and are roughly horizontal for each method.}
  \label{ccadim}
\end{figure}

\begin{figure}[h!]
\centering
      \includegraphics[scale=.5]{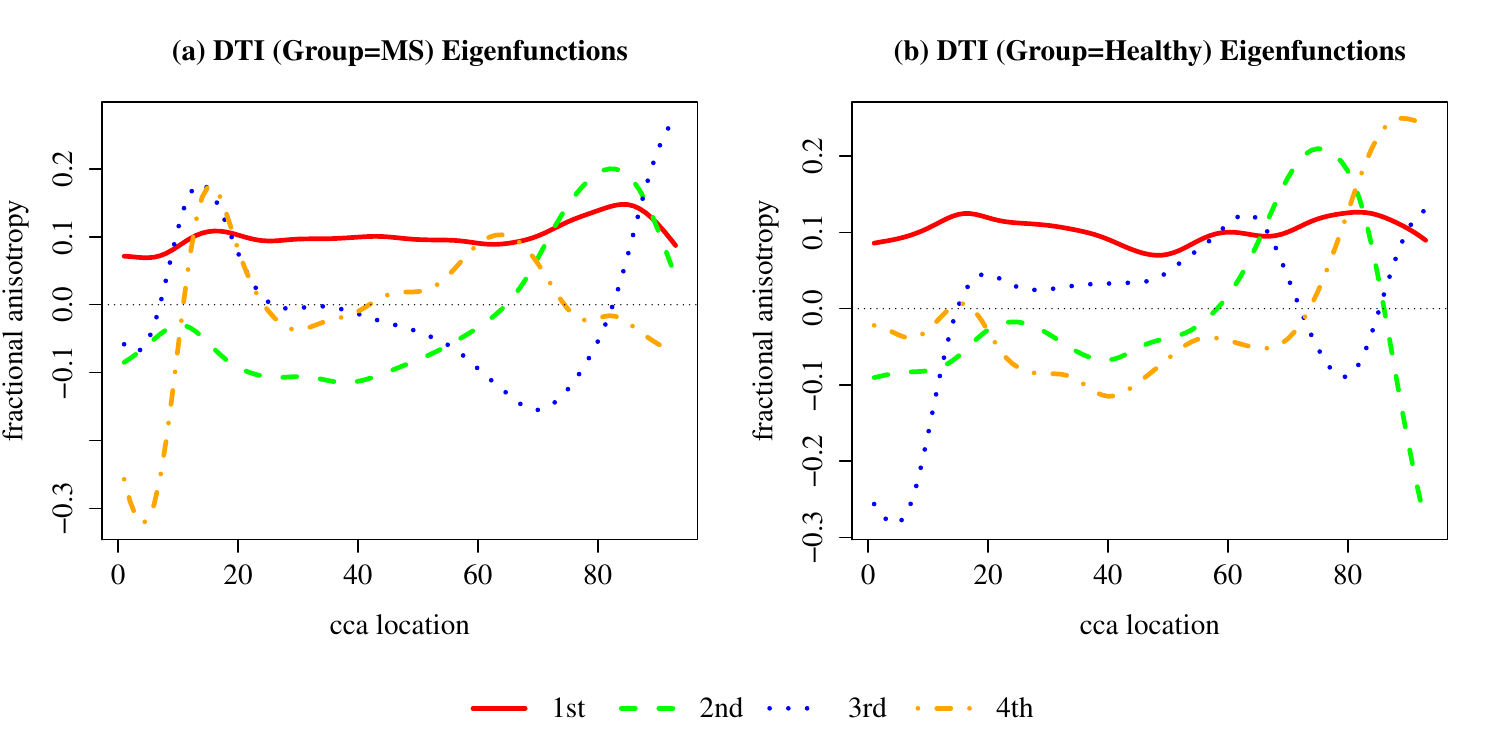}
  \caption{First four group eigenfunctions of smoothed FA profiles in group MS or Healthy.}
  \label{ccagroup}
\end{figure}



\begin{figure}[h!]
\centering
      \includegraphics[scale=.26]{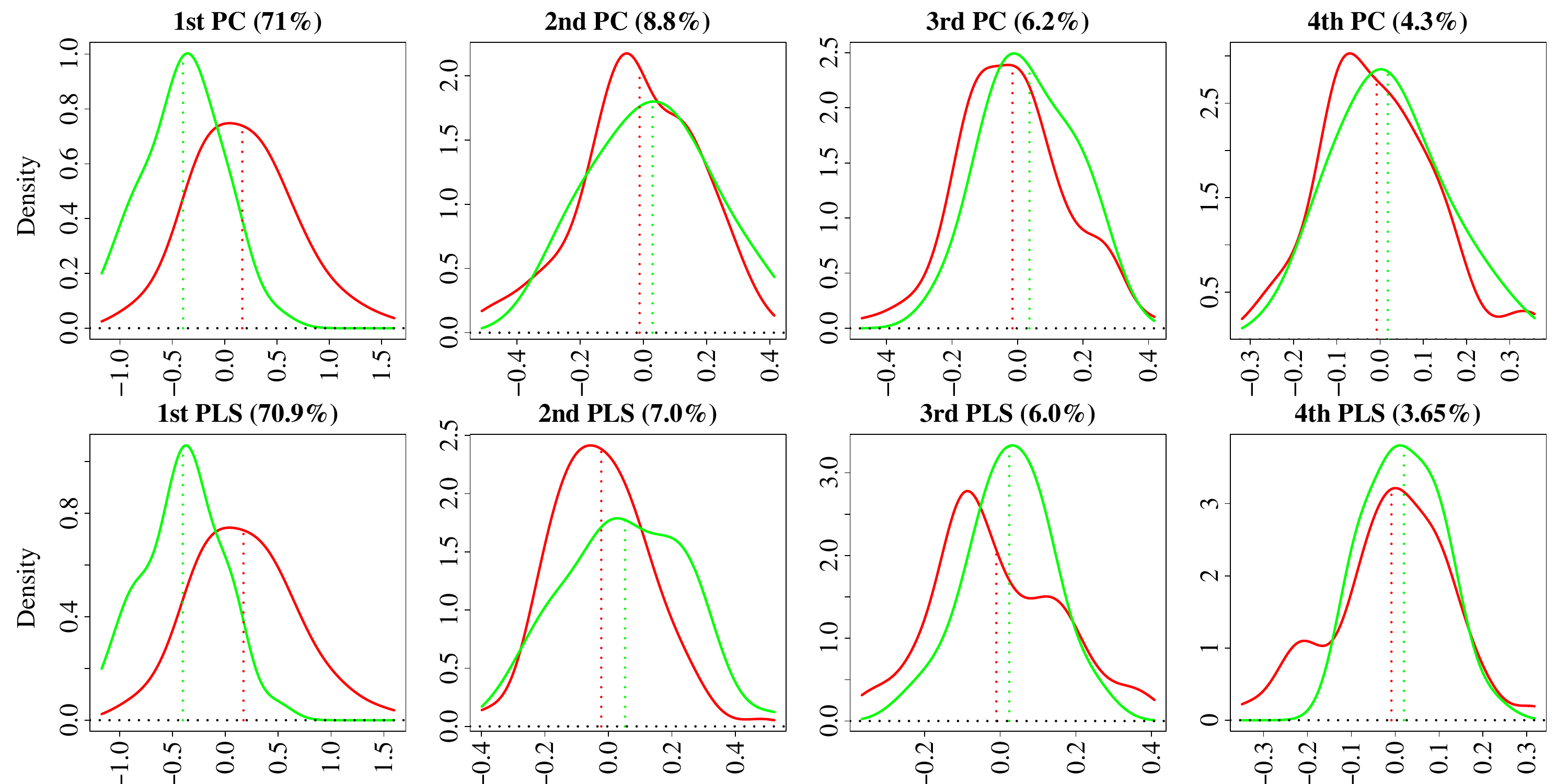} 
  \caption{Estimated densities of scores on first four PC and PLS components in MS (in red) and healthy groups (in green). The proportion of total variation each component explains is included in plot titles. Locations of group score average are labeled with dashed lines.}
  \label{ccadensity}
\end{figure}

In Fig.\ \ref{ccadensity}, we compare the projected score distributions on PC and PLS, with densities estimated by KDE. 
In distinguishing between cases and controls, the first and third PC components are more important than the second one, which captures mostly within-group variation.  Overall,
PLS does not improve over PC, consistent with the results in Table~\ref{tab:ccatable}.

Score correlation tests on first four principal components reveal that, though no significant correlation is found  in MS cases, the 2nd and 3rd components of the control group are positively correlated with Spearman's $\rho$ at $0.525$ and an adjusted $p$-value $2\times 10^{-2}$. Scores on the first four PLS components do not show significance correlations. Therefore, while PC and PLS show almost equal ability in capturing variation with first several components in DTI data, PC exhibits correlation between components in one of the two groups, which may explain the superior performance of PC and of the copula-based classifiers, BCG and BC-t. 

Figure~\ref{ccagroup} show the first four group-specific eigenfunctions.  There are some differences, especially after the first eigenfunctions, which may also contribute to the superior performance of the copula-based classifiers.

\subsection{Additional results of the PM/velocity example} 

\begin{figure}[h!]
\centering
      \includegraphics[scale=.6]{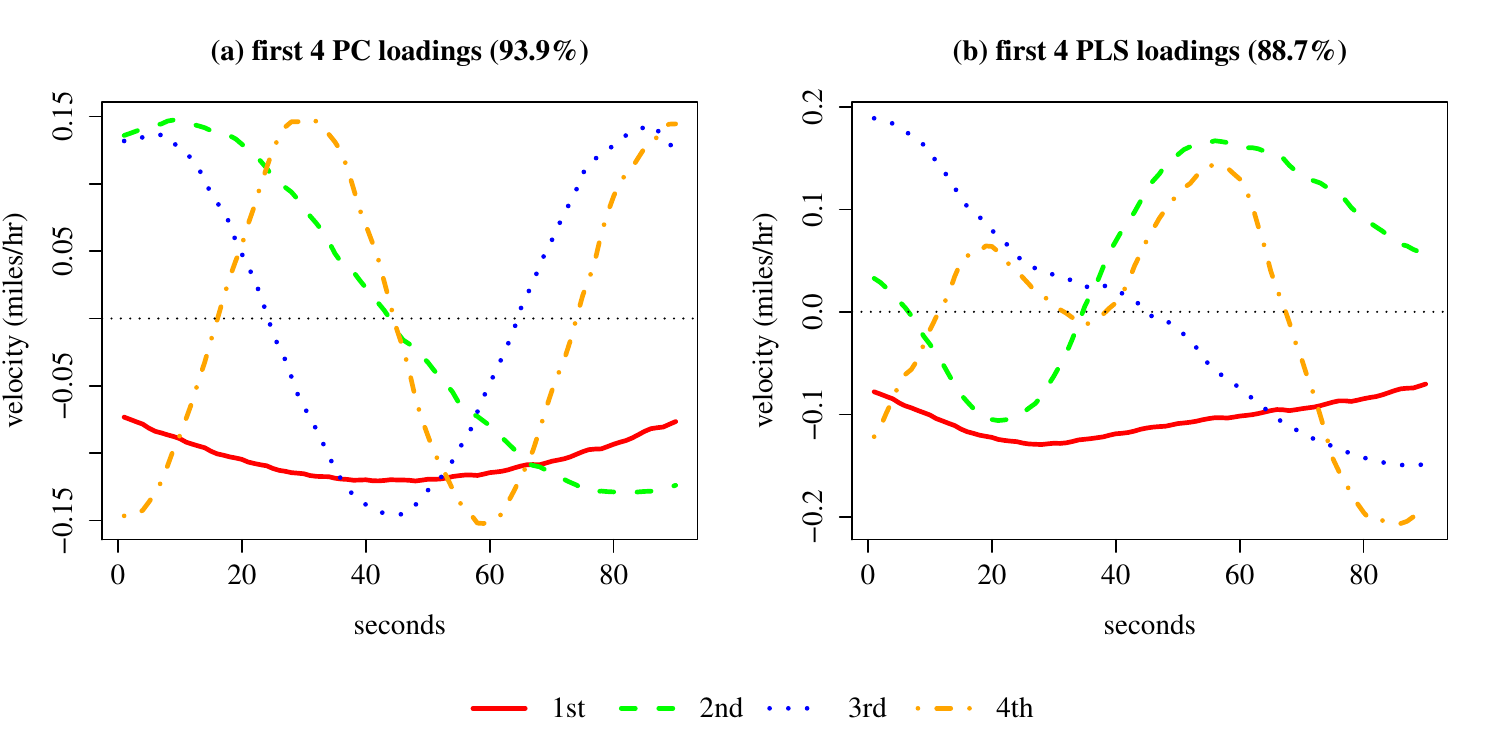}
  \caption{First 4 loading functions on PC (left) and PLS (right) for raw truck velocities, with percentage of total variation reported by first four components in the titles. Both loadings are scaled to unit length.}
  \label{truckdim}
\end{figure}

The first four PC and PLS loading functions are plotted in Fig.\ \ref{truckdim}, with $93.9\%$ of total variation explained by the four PCs, and  $88.7\%$ by PLS components. 
The fractions SSB/SST (between to total sums of squares) of the first four PCs respectively are $2.12\%, 0.37\%, 0.17\%, 6.27\%$, while for PLS they are noticeably larger, $5\%, 13.3\%,  4.71\%, 4.13\%$.  We compare the score distributions  in Fig.\ \ref{truckdensity}, with group means indicated by dashed lines.
The second PLS component with a SSB/SST ratio $13.3\%$ appears  strongest in distinguishing between PM emission groups.

\begin{figure}[h!]
\centering
      \includegraphics[scale=.275]{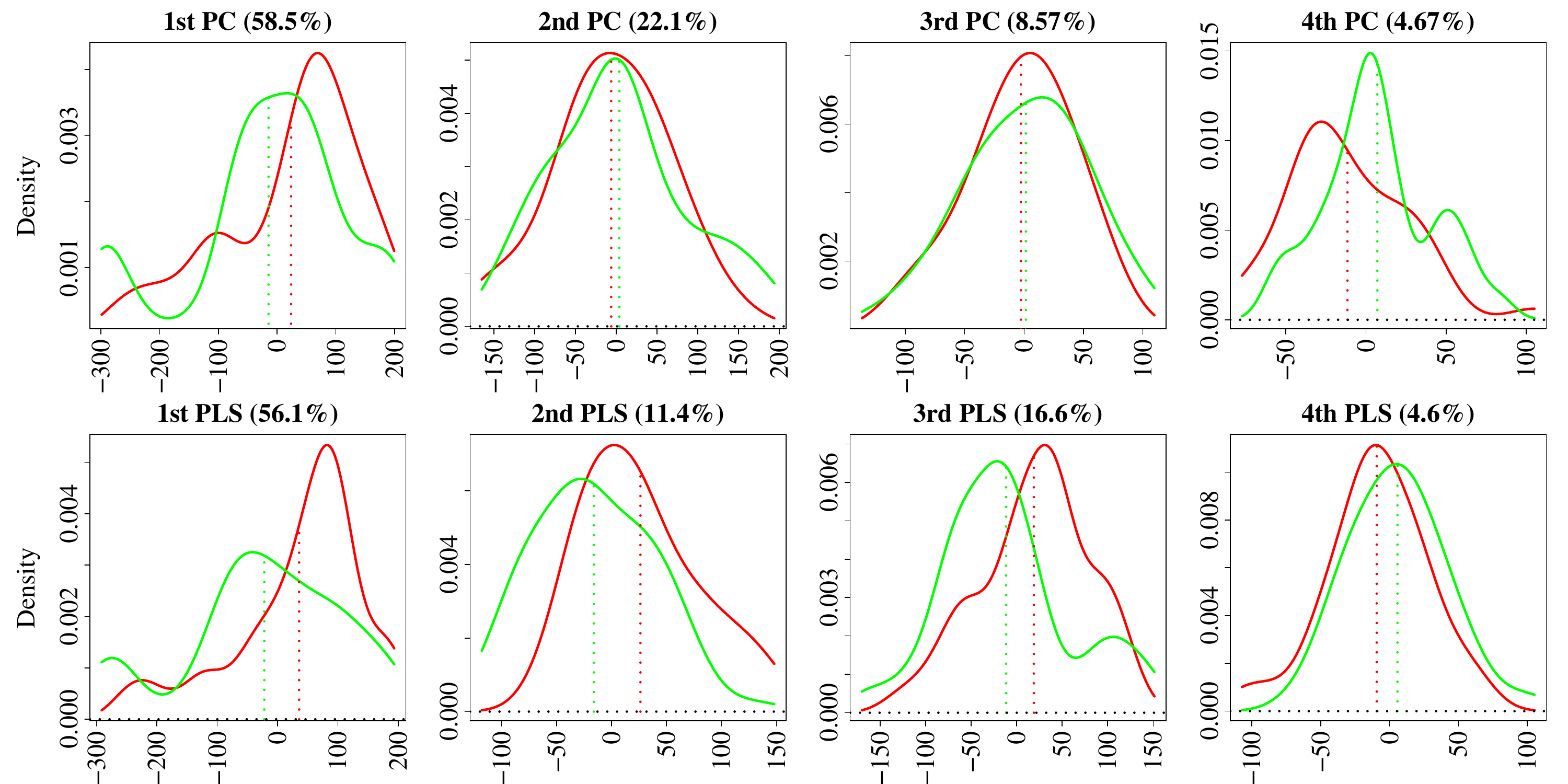}
  \caption{Score densities of first four PC and PLS components in high PM (in \textcolor{red}{red}) and low PM groups (in \textcolor{green}{green}). The proportion of total variation each component explains is included in headlines. The   SSB/SST ratios are $2.12\%, 0.37\%, 0.17\%, 6.27\%$ for PC, and $5\%, 13.3\%,  4.71\%, 4.13\%$ for PLS. The densities are estimated by KDE with direct plug-in bandwidths.  Group means are lindicated by dashed lines.}
  \label{truckdensity}
\end{figure}

PLS components, especially the second one, are able to capture distinctions between the movement patterns causing high and low PM emission. The projected velocity scores of the high PM group on the second PLS component have a positive group mean and a smaller standard deviation, compared to the negative mean and the larger standard deviation of the low PM group. The second PLS loading function, as shown in Fig.\ \ref{truckdim}, starts near~0, and decreases for the first 20 seconds, then is positive for roughly the last 55 seconds. 
(The loading functions are modeling deviations from average values, so a negative value indicates a below-average velocity.)
This pattern is consistent with our earlier finding that while the low PM group has greater variation, the high PM cases have a constant pattern of decelerating over the first $20$ seconds with much lower standard deviation, followed by acceleration with increasing variation. 

\begin{figure}[h!]
\centering
      \includegraphics[scale=.6]{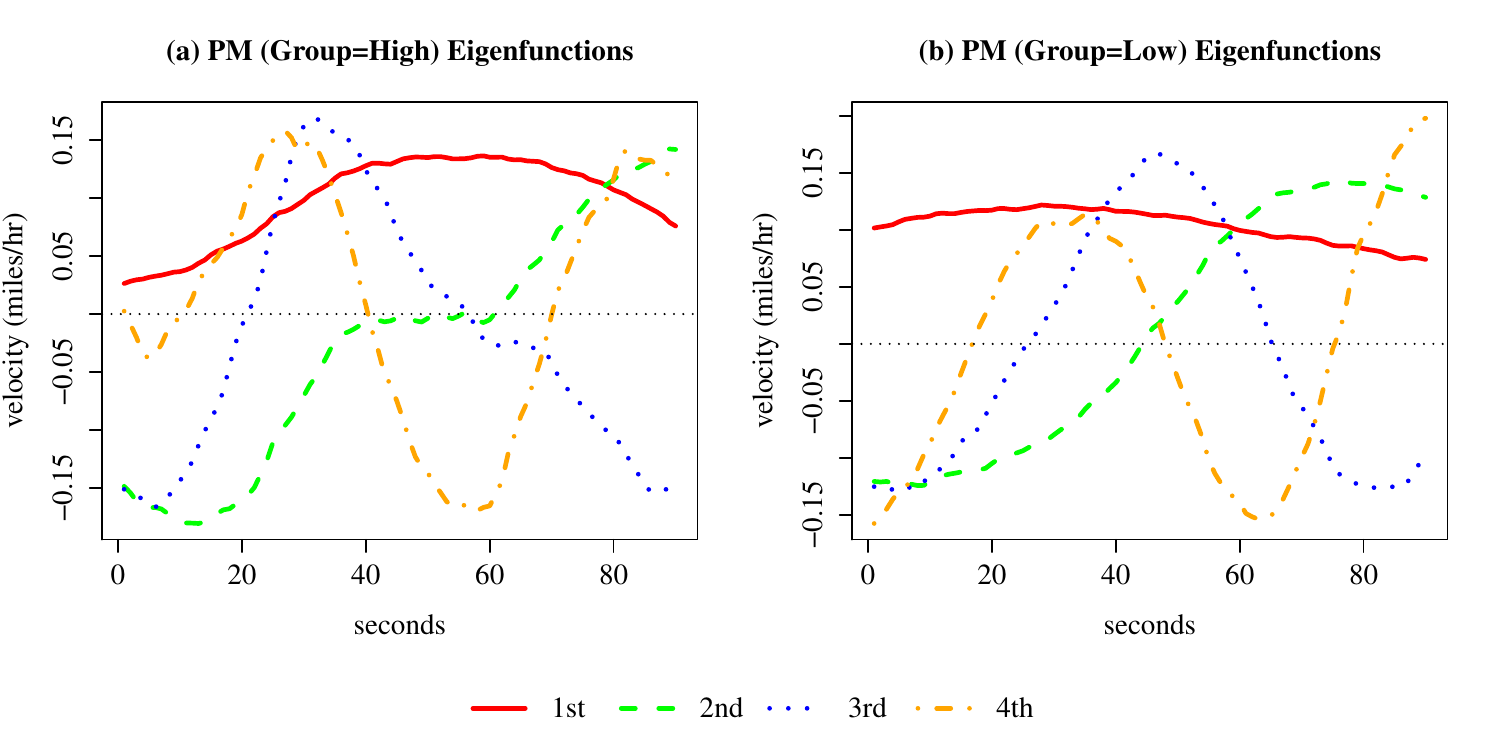}
  \caption{First 4 eigenfunctions of raw truck velocity data in group High or Low.}
  \label{truckgroup}
\end{figure}

\subsection{Group Mean Difference Comparison}
In Fig.\ \ref{sup:realmean}, we compare the projected group mean difference of the two data examples, both on the first $20$ joint eigenfunctions. Apparently, in the first example of DTI data, principal components are able to detect the location difference effectively at about first $5$ basis. On the other hand, in Panel (b), the particulate emission data present a more significant group mean difference, which takes more than $12$ eigenfunctions to fully capture. These two situations validate their different choices of PC and PLS based classifiers.

\begin{figure}[h!]
\centering
      \includegraphics[scale=.43]{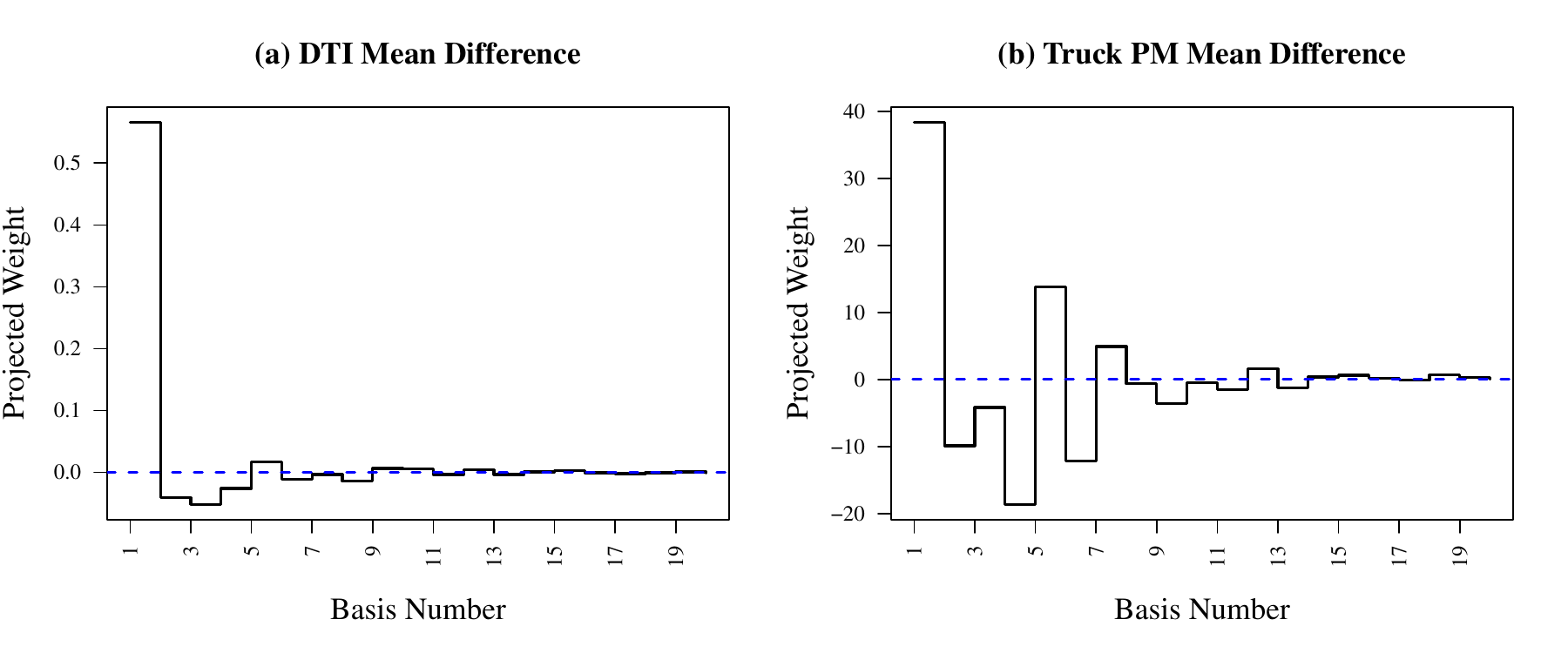}
  \caption{Comparison of projected group mean difference of DTI and PM data, both on the first $20$ joint eigenfunctions. Level $0$ is labeled with a dashed blue line in each plot.}
\label{sup:realmean}
\end{figure}

\clearpage

\section{Proof of Theorem 1}\label{sec:suppTheorem1Proof}
\setcounter{equation}{0}

\subsection{Estimation error of KDE $\hat{f}_{jk}$ on unequal group eigenfunctions} \label{mkde}
Let the class of functions $\mathcal{S}(c)=\{x \in \mathcal{L}^2(\mathcal{T}): \|x\| \le c\}$, $\forall c>0$. We prove Proposition 1 in Section 5.1 of the paper:
\begin{proof}
First let $\hat{g}_{jk} (\hat{x}_j)$ be kernel density estimation (KDE) of standardized scores projected on $\hat{\phi}_j$ at group $k$, and $\hat{g}_j(\hat{x}_j)$ for standardized joint scores, where $\hat{\phi}_j$ and $\hat{\lambda}_j$ are the estimated $j$-th joint eigenfunction and eigenvalue pair from sample eigen-decomposition as illustrated in \cite{DH2011},

\begin{equation} \label{gjk}
\hat{g}_{jk} \left(\hat{x}_j\right)=\dfrac{1}{n_kh}\sum_{i=1}^{n_k} K\left(\dfrac{\langle X_{ik}-x, \hat{\phi}_j\rangle}{\hat{\sigma}_{jk}h}\right), \hat{g}_{j} \left(\hat{x}_j\right)= \dfrac{1}{nh}\sum_{i=1}^{n} K\left(\dfrac{\langle X_{i}-x, \hat{\phi}_j\rangle}{\sqrt{\hat{\lambda}_j}h}\right),
\end{equation}
with $\hat{\sigma}_{jk}$ as sample standard deviation of $\sigma_{jk}=\sqrt{Var \langle X_{ik}, \phi_j\rangle}$, and $h$ is the unit bandwidth for standardized scores. Thus, the estimated marginal density $\hat{f}_{jk}(\hat{x}_j)$ and $\hat{f}_{j}(\hat{x}_j)$ can be correspondingly expressed as
\begin{equation} \label{fjk}
\hat{f}_{jk} \left(\hat{x}_j\right)=\dfrac{1}{\hat{\sigma}_{jk}}\dfrac{1}{n_kh}\sum_{i=1}^{n_k} K\left(\dfrac{\langle X_{ik}-x, \hat{\phi}_j\rangle}{\hat{\sigma}_{jk}h}\right)=\dfrac{1}{\hat{\sigma}_{jk}} \hat{g}_{jk} \left(\hat{x}_j\right), 
\end{equation}
and
\begin{equation} \label{fj}
\hat{f}_{j} \left(\hat{x}_j\right)=\dfrac{1}{\sqrt{\hat{\lambda}_j}} \dfrac{1}{nh}\sum_{i=1}^{n} K\left(\dfrac{\langle X_{i}-x, \hat{\phi}_j\rangle}{\sqrt{\hat{\lambda}_j}h}\right)=\dfrac{1}{\sqrt{\hat{\lambda}_j}}\hat{g}_j\left(\hat{x}_j\right).
\end{equation}
In addition, when $\phi_j$, $\lambda_j$ and $\delta_{jk}$ are known, we use $\bar{f}_{jk}$ and $\bar{f}_{j}$ as below,
\begin{equation} \label{barjk}
\bar{f}_{jk} \left(x_j\right)=\dfrac{1}{\sigma_{jk}}\dfrac{1}{n_kh}\sum_{i=1}^{n_k} K\left(\dfrac{\langle X_{ik}-x, \phi_j\rangle}{\sigma_{jk}h}\right)=\dfrac{1}{\sigma_{jk}} \bar{g}_{jk} \left(x_j\right), 
\end{equation}
and
\begin{equation} \label{barj}
\bar{f}_{j} \left(x_j\right)=\dfrac{1}{\sqrt{\lambda}_j} \dfrac{1}{nh}\sum_{i=1}^{n} K\left(\dfrac{\langle X_{i}-x, \phi_j\rangle}{\sqrt{\lambda_j}h}\right)=\dfrac{1}{\sqrt{\lambda}_j}\bar{g}_j\left(x_j\right).
\end{equation}
 
With Taylor expansion, 
\begin{align}
\hat{\pi}_1 \hat{g}_{j1}\left(\hat{x}_j\right)+\hat{\pi}_0 \hat{g}_{j0}\left(\hat{x}_j\right) &= \dfrac{1}{nh} \sum_{i=1}^{n_1} K\left(\dfrac{\langle X_{i1}-x, \hat{\phi}_j \rangle}{\sqrt{\hat{\lambda}_j}h}\right) \label{I}\\
&+\dfrac{1}{nh} \sum_{i=1}^{n_1}\left(\dfrac{1}{\hat{\sigma}_{j1}}-\dfrac{1}{\sqrt{\hat{\lambda}_j}}\right)\dfrac{1}{h} \langle X_{i1}-x, \hat{\phi}_j \rangle K'\left(\gamma_{ij1}\right) \label{II}\\
&+ \dfrac{1}{nh} \sum_{i=1}^{n_0} K\left(\dfrac{\langle X_{i0}-x, \hat{\phi}_j \rangle}{\sqrt{\hat{\lambda}_j}h}\right) \label{III}\\
&+ \dfrac{1}{nh} \sum_{i=1}^{n_0} \left(\dfrac{1}{\hat{\sigma}_{j0}}-\dfrac{1}{\sqrt{\hat{\lambda}_j}}\right)\dfrac{1}{h} \langle X_{i0}-x, \hat{\phi}_j \rangle K'\left(\gamma_{ij0}\right), \label{IV}
\end{align}
where $\gamma_{ijk}=c_{ijk}\cdot \dfrac{\langle X_{ik}-x, \hat{\phi}_j \rangle}{h}$, with $c_{ijk}$ between $\dfrac{1}{\sqrt{\hat{\lambda}_j}}$ and $\dfrac{1}{\hat{\sigma}_{jk}}$. Since Eq.(\ref{I}) + Eq.(\ref{III}) is $\hat{g}_j\left(\hat{x}_j\right)$,  $\hat{\pi}_1 \hat{g}_{j1}\left(\hat{x}_j\right)+\hat{\pi}_0 \hat{g}_{j0}\left(\hat{x}_j\right) - \hat{g}_j\left(\hat{x}_j\right)$ is sum of the two parts Eq.(\ref{II}) and Eq.(\ref{IV}). 

Then we discuss specifically the case when the kernel function $K$ here is standard Gaussian. We denote the partial term $\dfrac{1}{h} \langle X_{ik}-x, \hat{\phi}_j \rangle K'\left(\gamma_{ijk}\right)$ in Eq.(\ref{II}) and Eq.(\ref{IV}) as $A_{ijk}$. Therefore,
\begin{align} \label{partA}
A_{ijk}&=\dfrac{1}{h} \langle X_{ik}-x, \hat{\phi}_j \rangle K'\left(\gamma_{ijk}\right) \nonumber\\
&=-\dfrac{c_{ijk}}{h^2}\langle X_{ik}-x, \hat{\phi}_j\rangle^2 \exp\left(-\dfrac{1}{2}\dfrac{c_{ijk}^2}{h^2} \langle X_{ik}-x, \hat{\phi}_j\rangle^2\right)\cdot \dfrac{1}{\sqrt{2\pi}}
\end{align}

To show $A_{ijk}=op \left(h^2\right)$, we let
\begin{equation} \label{Ak2}
\left(-\sqrt{2\pi}\right) \cdot A_k \Big / \left(h^2\dfrac{1}{\langle X_{ik}-x, \hat{\phi}_j\rangle^2} \dfrac{1}{c_{ijk}^3}\right)=\left(\dfrac{c_{ijk}}{h}\langle X_{ik}-x, \hat{\phi}_j\rangle \right)^4 \exp\left \{-\dfrac{1}{2}\left(\dfrac{c_{ijk}}{h} \langle X_{ik}-x, \hat{\phi}_j\rangle\right)^2\right \}.
\end{equation}

The term in Eq.(\ref{Ak2}), $|\dfrac{c_{ijk}}{h}\langle X_{ik}-x, \hat{\phi}_j\rangle| \overset{p}{\to} \infty$ by the following steps:
\begin{enumerate}[i)]
\item
$|\langle X_{ik}-x, \hat{\phi}_j\rangle|=|\langle X_{ik}-x, \phi_j\rangle|+Op\left(n^{-1/2}\right)$: from Lemma 3.4 of \cite{hosseini}, $\|\hat{\phi}_j-\phi_j\|=Op\left(n^{-1/2}\right)$. Then $|\langle X_{ik}-x, \hat{\phi}_j-\phi_j\rangle| \le \|X_{ik}-x\| \|\hat{\phi}_j-\phi_j\|=Op\left(n^{-1/2}\right)$, so $|\langle X_{ik}-x, \hat{\phi}_j\rangle|=|\langle X_{ik}-x, \phi_j\rangle|+Op\left(n^{-1/2}\right)=Op\left(1\right)$;
\item
$c_{ijk}$ is between $1/\sqrt{\lambda_j}+Op\left(n^{-1/2}\right)$ and $1/\sigma_{jk}+Op\left(n^{-1/2}\right)$: by Taylor expansion $c_{ijk}$ is somewhere between $1/\sqrt{\hat{\lambda}_j}$ and $1/\hat{\sigma}_{jk}$, where $\hat{\lambda}_j=\lambda_j+Op\left(n^{-1/2}\right)$ (\cite{DH2011}). The estimated $\hat{\sigma}_{jk}^2=\sum_{i=1}^{n_k}\langle X_{ik}-\bar{X}, \hat{\phi}_j\rangle^2/\left(n_k-1\right)$, with $\bar{X}$ the average function. Let $\tilde{\sigma}_{jk}^2=\sum_{i=1}^{n_k}\langle X_{ik}-\bar{X}, \phi_j\rangle^2/\left(n_k-1\right)$, which is well known to be root-n consistent with $\sigma_{jk}^2$. With $\|\hat{\phi}_j-\phi_j\|=Op\left(n^{-1/2}\right)$ again, $\langle X_{ik}-\bar{X}, \hat{\phi}_j\rangle^2-\langle X_{ik}-\bar{X}, \phi_j\rangle^2=Op \left(n^{-1/2}\right)$. So, $\hat{\sigma}_{jk}^2-\tilde{\sigma}_{jk}^2=\left(n_k-1\right)^{-1}\sum_{i=1}^{n_k} \left(\langle X_{ik}-\bar{X}, \hat{\phi}_j\rangle^2-\langle X_{ik}-\bar{X}, \phi_j\rangle^2\right)=Op\left(n^{-1/2}\right)$. Thus $\hat{\sigma}_{jk}^2$ is also root-n consistent with $\sigma_{jk}^2$, and so is $1/\hat{\sigma}_{jk}$ with $1/\sigma_{jk}$ by delta method. Thus $c_{ijk}$ is between $1/\sqrt{\lambda_j}+Op\left(n^{-1/2}\right)$ and $1/\sigma_{jk}+Op\left(n^{-1/2}\right)$, i.e.\ $c_{ijk}=Op \left(1\right)$;
\item
Then with above results, $|c_{ijk} \langle X_{ik}-x, \hat{\phi}_j\rangle|/h$ is between 
\begin{equation} \label{opsigma}
\left|\dfrac{1}{\sigma_{jk}}\langle X_{ik}-x, \phi_j\rangle \right|/h+Op \left(\dfrac{1}{\sqrt{n}h}\right),
\end{equation}
and
\begin{align} \label{oplambda}
&\left|\dfrac{1}{\sqrt{\lambda_j}}\langle X_{ik}-x, \phi_j\rangle \right|+Op \left(\dfrac{1}{\sqrt{n}h}\right) \nonumber\\
&=\dfrac{\sigma_{jk}}{\sqrt{\lambda_j}}\left|\dfrac{1}{\sigma_{jk}}\langle X_{ik}-x, \phi_j\rangle \right|+Op \left(\dfrac{1}{\sqrt{n}h}\right),
\end{align}
where r.v. $\dfrac{1}{\sigma_{jk}}\langle X_{ik}-x, \phi_j\rangle$ is standardized with finite mean.

So $\forall M > 0$, $P\left(|\dfrac{1}{\sigma_{jk}}\langle X_{ik}-x, \phi_j\rangle|/h > M\right)=P\left(|\dfrac{1}{\sigma_{jk}}\langle X_{ik}-x, \phi_j\rangle|> Mh\right) \to 1$ as $n \to \infty$, and then $|\dfrac{1}{\sigma_{jk}}\langle X_{ik}-x, \phi_j\rangle|/h \overset{p}{\to} \infty$. 

Also, $Op \left(\dfrac{1}{\sqrt{n}h}\right)=op(1)$, since $nh^2=n^{1-\delta}h^3 \cdot n^{\delta}h^{-1}$, and $n^{1-\delta}h^3$ for $\delta>0$ is bounded away from zero by assumption. So $nh^2 \to \infty$, and $\dfrac{1}{\sqrt{n}h} \to 0$. Therefore, both Eq.(\ref{opsigma}) and Eq.(\ref{oplambda}) $\overset{p} \to \infty$.
\end{enumerate} 
As a conclusion from i) {-} iii), $|c_{ijk} \langle X_{ik}-x, \hat{\phi}_j\rangle|/h \overset{p} \to \infty$. Then by continuous mapping, Eq.(\ref{Ak2}) $=op \left(1\right)$. Also, $\dfrac{1}{\langle X_{ik}-x, \hat{\phi}_j\rangle^2} \dfrac{1}{c_{ijk}^3}$ is apparently $Op \left(1\right)$ using above results, which in the end shows that $A_{ijk}=op(h^2)$. 

It also shows that $1/\hat{\sigma}_{jk}-1/\sqrt{\hat{\lambda}_j}=1/\sigma_{jk}-1/\sqrt{\lambda_j} + Op \left(n^{-1/2}\right)$. Therefore, from Eq.(\ref{I}){-}(\ref{IV}), we get to the result that 
\begin{equation} \label{sdgj}
\hat{\pi}_1 \hat{g}_{j1}\left(\hat{x}_j\right)+\hat{\pi}_0 \hat{g}_{j0}\left(\hat{x}_j\right)-\hat{g}_j\left(\hat{x}_j\right)=op \left(h\right).
\end{equation}
With similar steps, it also shows that $\hat{\pi}_1 \bar{g}_{j1}\left(x_j\right)+\hat{\pi}_0 g_{j0}\left(x_j\right)-\bar{g}_j\left(x_j\right)=op \left(h\right)$. So $\hat{\pi}_1\left \{ \hat{g}_{j1}\left(\hat{x}_j\right)-\bar{g}_{j1}\left(x_j\right)\right \}+\hat{\pi}_0\left \{\hat{g}_{j0}\left(\hat{x}_j\right)-\bar{g}_{j0}\left(x_j\right)\right \}=\hat{g}_j\left(\hat{x}_j\right)-\bar{g}_j\left(x_j\right)+op\left(h\right)$, and when combined with Theorem 3.1 from \cite{DH2010}, it proves
\begin{align} \label{conclusion1}
&\sup_{x \in \mathcal{S}(c)} \left | \hat{\pi}_1\left \{\hat{g}_{j1}\left(\hat{x}_j\right)-\bar{g}_{j1}\left(x_j\right)\right \}+\hat{\pi}_0\left \{\hat{g}_{j0}\left(\hat{x}_j\right)-\bar{g}_{j0}\left(x_j\right)\right \} \right | \nonumber\\
&=\sup_{x \in \mathcal{S}(c)} \left| \hat{g}_j\left(\hat{x}_j\right) - \bar{g}_j\left(x_j\right) \right|+op \left(h\right) \nonumber \\
&=op\left(\dfrac{1}{\sqrt{nh}}\right) + op \left(h\right) = op\left(h\right).
\end{align}
Then under Assumption A5, $\sup_{x \in \mathcal{S}(c)} \left|\hat{g}_{jk}\left (\hat{x}_j\right)-\bar{g}_{jk}\left(x_j\right)\right|=op\left(h+\sqrt{\dfrac{\log n}{nh}}\right)$, and
\begin{align} \label{diffg}
& \sup_{x \in \mathcal{S}(c)} \left|\hat{g}_{jk}\left(\hat{x}_j\right)-g_{jk}\left(x_j\right)\right| \nonumber\\
& \le \sup_{x \in \mathcal{S}(c)} \left|\hat{g}_{jk}\left (\hat{x}_j\right)-\bar{g}_{jk}\left(x_j\right)\right|+\sup_{x \in \mathcal{S}(c)} \left|\bar{g}_{jk}\left (x_j\right)-g_{jk}\left(x_j\right)\right| \nonumber \\
&=op \left(h+\sqrt{\dfrac{\log n}{nh}}\right)+ Op \left(h+\sqrt{\dfrac{\log n}{nh}}\right)=Op \left(h+\sqrt{\dfrac{\log n}{nh}}\right),
\end{align}
where the second bound in Eq.(\ref{diffg}) is from established results of kernel density estimation like in \cite{stone}. Consequently,
\begin{align} \label{fdiff}
&\sup_{x \in \mathcal{S}(c)} \left|\hat{f}_{jk}\left(\hat{x}_j\right)-f_{jk}\left(x_j\right)\right| \nonumber\\
&=\sup_{x \in \mathcal{S}(c)} \left|\dfrac{1}{\hat{\sigma}_{jk}}\hat{g}_{jk}\left(\hat{x}_j\right)-\dfrac{1}{\sigma_{jk}} g_{jk}\left(x_j\right)\right| \nonumber\\
& \le \sup_{x \in \mathcal{S}(c)}\left| \dfrac{1}{\hat{\sigma}_{jk}} \left \{\hat{g}_{jk}\left(\hat{x}_j\right)-g_{jk}\left(x_j\right)\right \} \right| + \sup_{x \in \mathcal{S}(c)}\left| \left(\dfrac{1}{\hat{\sigma}_{jk}} - \dfrac{1}{\hat{\sigma}_{jk}}\right) g_{jk}\left(x_j\right)\right| \nonumber\\
&=Op\left(h+\sqrt{\dfrac{\log n}{nh}}\right)+ Op\left(\dfrac{1}{\sqrt{n}}\right)=Op\left(h+\sqrt{\dfrac{\log n}{nh}}\right)
\end{align}
\end{proof}

\subsection{Difference between $\hat{u}_{jk}$ and $u_{jk}$}
We need the following Lemma \ref{lemmau} for Theorem 1 proof: 
\begin{lemma} \label{lemmau}
Under A1-A4, $\forall X \in \mathcal{L}^2(\mathcal{T})$, $\hat{u}_{jk}=\Phi^{-1}\left \{\hat{F}_{jk}\left(\langle X, \hat{\phi}_j\rangle\right)\right\}$ is root-n consistent of $u_{jk}=\Phi^{-1}\left \{F_{jk}\left(\langle X, \phi_j \rangle \right)\right\}$
\end{lemma}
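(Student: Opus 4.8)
The plan is to write the target difference as a smooth transform of a CDF difference and then control that CDF difference by isolating, in turn, the effect of estimating the eigenfunction and the ordinary empirical‑process error. Throughout fix $X$ and write $x_j=\langle X,\phi_j\rangle$ and $\hat x_j=\langle X,\hat\phi_j\rangle$, so that $u_{jk}=\Phi^{-1}\{F_{jk}(x_j)\}$ and $\hat u_{jk}=\Phi^{-1}\{\hat F_{jk}(\hat x_j)\}$. Since $F_{jk}$ is continuous and strictly increasing with $f_{jk}$ bounded and bounded away from $0$ on compacta (Assumption~\ref{A2}), $F_{jk}(x_j)\in(0,1)$ is an interior point, while the $(n_k+1)$ normalization keeps $\hat F_{jk}$ inside $[\,1/(n_k+1),\,n_k/(n_k+1)\,]\subset(0,1)$. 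Hence for large $n$ both arguments lie, with high probability, in a fixed compact subinterval of $(0,1)$ on which $(\Phi^{-1})'(p)=1/\varphi(\Phi^{-1}(p))$ is bounded, and a mean‑value expansion gives $|\hat u_{jk}-u_{jk}|\le C\,|\hat F_{jk}(\hat x_j)-F_{jk}(x_j)|$. It therefore suffices to show $\hat F_{jk}(\hat x_j)-F_{jk}(x_j)=O_p(n^{-1/2})$.

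\textbf{Decomposition.} Let $\tilde F_{jk}(t)=\frac{1}{n_k+1}\sum_{i=1}^{n_k}\mathbbm 1\{\langle X_{i\cdot k},\phi_j\rangle\le t\}$ be the empirical CDF built from the \emph{true} eigenfunction, and split
\[
\hat F_{jk}(\hat x_j)-F_{jk}(x_j)=\underbrace{\{\hat F_{jk}(\hat x_j)-\tilde F_{jk}(x_j)\}}_{(\mathrm a)}+\underbrace{\{\tilde F_{jk}(x_j)-F_{jk}(x_j)\}}_{(\mathrm b)}.
\]
For $(\mathrm b)$, $\tilde F_{jk}$ is an ordinary empirical CDF of i.i.d.\ scores, so the pointwise central limit theorem (or the Dvoretzky--Kiefer--Wolfowitz bound uniformly) yields $\tilde F_{jk}(x_j)-F_{jk}(x_j)=O_p(n^{-1/2})$, the $(n_k+1)$ versus $n_k$ normalization contributing only an extra $O(n^{-1})$. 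A useful simplification is that the shift of the evaluation point is absorbed into the summands: since $\langle X_{i\cdot k},\phi_j\rangle\le x_j\iff\langle X_{i\cdot k}-X,\phi_j\rangle\le 0$, term $(\mathrm a)$ reduces purely to a count of ``sign flips'' of the indicators when $\phi_j$ is replaced by $\hat\phi_j$, and this is where the real work lies.

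\textbf{Controlling $(\mathrm a)$.} Writing $W_i=\langle X_{i\cdot k}-X,\phi_j\rangle$ and $\hat W_i=\langle X_{i\cdot k}-X,\hat\phi_j\rangle$, term $(\mathrm a)$ equals $\frac{1}{n_k+1}\sum_i\{\mathbbm 1(\hat W_i\le 0)-\mathbbm 1(W_i\le 0)\}$, and the $i$‑th summand is nonzero only when $|W_i|\le|\hat W_i-W_i|\le\|X_{i\cdot k}-X\|\,\|\hat\phi_j-\phi_j\|$. I would invoke the rate $\|\hat\phi_j-\phi_j\|=O_p(n^{-1/2})$ already used in the proof of Proposition~\ref{prop1} (Hall and Hosseini-Nasab, 2009 \cite{hosseini}) together with the boundedness of the density of $W_i=\langle X_{i\cdot k},\phi_j\rangle-x_j$, which holds because $f_{jk}$ is bounded (Assumption~\ref{A2}); then the expected fraction of flipped indicators is $O(\|\hat\phi_j-\phi_j\|)=O_p(n^{-1/2})$, and a Markov/concentration step upgrades this to the sample fraction itself. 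Combining $(\mathrm a)$ and $(\mathrm b)$ with the mean‑value bound gives the claim. \emph{Main obstacle.} The hard part is $(\mathrm a)$: because $\hat\phi_j$ is computed from the very sample whose indicators are being counted, the flips are not independent of the perturbation, and the crude uniform bound $\max_i\|X_{i\cdot k}\|=o_p(n^{\epsilon})$ afforded by the moment Assumption~\ref{kdeA1} only delivers $n^{-1/2+\epsilon}$. Obtaining the exact root‑$n$ rate requires bounding the averaged count $\frac{1}{n_k}\sum_i\mathbbm 1\{|W_i|\le\|X_{i\cdot k}-X\|\,\|\hat\phi_j-\phi_j\|\}$ directly through the bounded score density rather than through a worst‑case norm, which is precisely where Assumption~\ref{A2} is essential.
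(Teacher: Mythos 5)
Your proposal follows essentially the same route as the paper's proof: both split the error into the ordinary empirical-CDF fluctuation at the true score (handled by the CLT and a delta-method/mean-value expansion of $\Phi^{-1}$) plus the count of indicator sign flips caused by replacing $\phi_j$ with $\hat\phi_j$, the latter controlled via $\|\hat\phi_j-\phi_j\|=O_p\left(n^{-1/2}\right)$ from Hall and Hosseini-Nasab and a Markov-type step. If anything, your handling of the flip count—bounding the probability of $\{|W_i|\le\|X_{i\cdot k}-X\|\,\|\hat\phi_j-\phi_j\|\}$ through the bounded density of the scores guaranteed by Assumption \ref{A2}—is more explicit than the paper's, which applies Markov's inequality with the random threshold $\epsilon_{ijk}=|W_i|$ as though it were a fixed constant.
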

\begin{proof}
Let $\hat{u}^*_{jk}=\Phi^{-1}\left \{\hat{F}_{jk}\left(\langle X, \phi_j\rangle\right)\right\}$. Here $\hat{F}_{jk}\left(\langle X, \phi_j\rangle\right)=\dfrac{\sum_{i=1}^{n_k} I\left\{\langle X_{ik}, \phi_j \rangle \le \langle X, \phi_j\rangle \right\}}{n_k+1}$, which easily gives $\hat{u}^*_{jk}-u_{jk}=Op\left(n^{-1/2}\right)$ by CLT and delta method. Then,
\begin{align} \label{eq19}
& \left|\hat{F}_{jk}\left(\langle X, \hat{\phi}_j\rangle\right)-\hat{F}_{jk}\left(\langle X, \phi_j\rangle\right) \right| \nonumber\\
&=\dfrac{\left|\sum_{i=1}^{n_k} I\left\{\langle X_{ik}-X, \hat{\phi}_j \rangle \le 0 \right\} - \sum_{i=1}^{n_k} I\left\{\langle X_{ik}-X, \phi_j \rangle \le 0 \right\}\right|}{n_k+1}\nonumber\\
& \le \dfrac{\sum_{i=1}^{n_k} I\left\{ I\left\{\langle X_{ik}-X, \hat{\phi}_j \rangle \le 0 \right\} \ne I\left\{\langle X_{ik}-X, \phi_j \rangle \le 0 \right\}\right\}}{n_k+1}.
\end{align}
From Eq.(\ref{eq19}),
\begin{equation} \label{split1}
E\left|\hat{F}_{jk}\left(\langle X, \hat{\phi}_j\rangle\right)-\hat{F}_{jk}\left(\langle X, \phi_j\rangle\right) \right| \le \dfrac{1}{n_k+1} \sum_{i=1}^{n_k} P\left( I\left\{\langle X_{ik}-X, \hat{\phi}_j \rangle \le 0 \right\} \ne I\left\{\langle X_{ik}-X, \phi_j \rangle \le 0 \right\} \right),
\end{equation}
so for $I\left\{\langle X_{ik}-X, \hat{\phi}_j \rangle \le 0 \right\} \ne I\left\{\langle X_{ik}-X, \phi_j \rangle \le 0 \right\}$, $\left|\langle X_{ik}-X, \hat{\phi}_j \rangle- \langle X_{ik}-X, \phi_j \rangle\right| > \epsilon_{ijk}$ for some $\epsilon_{ijk}>0$. Then Eq.(\ref{split1}) becomes
\begin{align} \label{Ediff}
E\left|\hat{F}_{jk}\left(\langle X, \hat{\phi}_j\rangle\right)-\hat{F}_{jk}\left(\langle X, \phi_j\rangle\right) \right| &\le \dfrac{1}{n_k+1} \sum_{i=1}^{n_k} P\left(\left|\langle X_{ik}-X, \hat{\phi}_j \rangle- \langle X_{ik}-X, \phi_j \rangle\right| > \epsilon_{ijk}\right) \nonumber\\
&=\dfrac{1}{n_k+1} \sum_{i=1}^{n_k} P\left(\left|\langle X_{ik}-X, \hat{\phi}_j-\phi_j \rangle\right| > \epsilon_{ijk}\right)
\end{align}

By Lemma 3.3 and 3.4 of \cite{hosseini}, as $n \to \infty$, $\sqrt{n}E\left|\langle X_{ik}-X, \hat{\phi}_j-\phi_j \rangle\right| \le \sqrt{E\|X_{ik}-X\|^2}\cdot \sqrt{E\|\sqrt{n}\left(\hat{\phi}_j-\phi_j\right)\|^2} < \infty$. Hence $\forall \epsilon >0$, $\sqrt{n}P\left(\left|\langle X_{ik}-X, \hat{\phi}_j-\phi_j \rangle\right| > \epsilon \right) \le
\left(\sqrt{n}E\left|\langle X_{ik}-X, \hat{\phi}_j-\phi_j \rangle\right|\right)/\epsilon<\infty$ by Markov inequality. 

Continuing from Eq.(\ref{Ediff}), as $n \to \infty$,
\begin{equation} \label{etop}
\sqrt{n}E\left|\hat{F}_{jk}\left(\langle X, \hat{\phi}_j\rangle\right)-\hat{F}_{jk}\left(\langle X, \phi_j\rangle\right) \right| \le \dfrac{n_k}{n_k+1}\left[\sqrt{n} P\left(\left|\langle X_{ik}-X, \hat{\phi}_j-\phi_j \rangle\right| > \epsilon_{ijk} \right)\right] < \infty,
\end{equation}
which proves $\sqrt{n}\left|\hat{F}_{jk}\left(\langle X, \hat{\phi}_j\rangle\right)-\hat{F}_{jk}\left(\langle X, \phi_j\rangle\right) \right|= Op \left(1\right)$. Then with Taylor expansion it easily shows $\hat{u}_{jk}-\hat{u}_{jk}^*=\Phi^{-1}\left(\hat{F}_{jk}\left(\langle X, \hat{\phi}_j\rangle\right)\right)-\Phi^{-1}\left(\hat{F}_{jk}\left(\langle X, \phi_j\rangle\right)\right)=Op\left(n^{-1/2}\right)$,  hence $\hat{u}_{jk}-u_{jk}=Op\left(n^{-1/2}\right)$ too, concluding the lemma.
\end{proof}

\subsection{Difference between $\check{\mathbf{\Omega}}_k^{jj'}$ and $\hat{\mathbf{\Omega}}_k^{jj'}$}
Here $\check{\mathbf{\Omega}}_k$ is estimated correlation matrix at group $k$ using sample rank correlation calculated from scores $\langle X_{ik}, \phi_j\rangle$, while $\hat{\mathbf{\Omega}}_k$ uses $\langle X_{ik}, \hat{\phi}_j\rangle$. For simplicity, we only demonstrate with Kendall's $\tau$, but other rank correlations like Spearman's $\rho$ will have similar results:
\begin{align}
&\hat{\mathbf{\Omega}}_k^{jj'}=\sin \left(\dfrac{\pi}{2} \hat{\rho}_{\tau, k}^{jj'}\right): \hat{\rho}_{\tau, k}^{jj'}=\dfrac{2}{n_k\left(n_k-1\right)}\sum_{1 \le i \le i' \le n_k} \text{sign} \left\{\langle X_{ik}-X_{i'k}, \hat{\phi}_j\rangle \langle X_{ik}-X_{i'k}, \hat{\phi}_{j'}\rangle\right\}\\
&\check{\mathbf{\Omega}}_k^{jj'}=\sin \left(\dfrac{\pi}{2} \check{\rho}_{\tau, k}^{jj'}\right): \check{\rho}_{\tau, k}^{jj'}=\dfrac{2}{n_k\left(n_k-1\right)}\sum_{1 \le i \le i' \le n_k} \text{sign} \left\{\langle X_{ik}-X_{i'k}, \phi_j\rangle \langle X_{ik}-X_{i'k}, \phi_{j'}\rangle\right\}.
\end{align} 
We then propose the following lemma:
\begin{lemma} \label{lemmatau} 
$\left|\hat{\mathbf{\Omega}}_k^{jj'}-\check{\mathbf{\Omega}}_k^{jj'}\right|=Op \left(\dfrac{1}{\sqrt{n}}\right)$, $\forall 1 \le j, j' \le J$, $j \ne j'$.
\end{lemma}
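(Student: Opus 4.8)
The plan is to reduce the statement to a bound on the difference of the two Kendall's $\tau$ statistics and then to mirror the sign-flip argument already used in the proof of Lemma \ref{lemmau}. Since $\hat{\mathbf{\Omega}}_k^{jj'}=\sin(\frac{\pi}{2}\hat{\rho}_{\tau,k}^{jj'})$ and $\check{\mathbf{\Omega}}_k^{jj'}=\sin(\frac{\pi}{2}\check{\rho}_{\tau,k}^{jj'})$, and $\sin$ is $1$-Lipschitz, we obtain at once $|\hat{\mathbf{\Omega}}_k^{jj'}-\check{\mathbf{\Omega}}_k^{jj'}|\le \frac{\pi}{2}|\hat{\rho}_{\tau,k}^{jj'}-\check{\rho}_{\tau,k}^{jj'}|$, so it suffices to prove $|\hat{\rho}_{\tau,k}^{jj'}-\check{\rho}_{\tau,k}^{jj'}|=Op(n^{-1/2})$.

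Write $Z_{ii'j}=\langle X_{ik}-X_{i'k},\phi_j\rangle$ and $\hat{Z}_{ii'j}=\langle X_{ik}-X_{i'k},\hat{\phi}_j\rangle$, so that the two $\tau$'s differ only through the sign of the products $\hat{Z}_{ii'j}\hat{Z}_{ii'j'}$ versus $Z_{ii'j}Z_{ii'j'}$. Each summand difference lies in $\{-2,0,2\}$ and is nonzero only on the ``flip'' pairs, whence $|\hat{\rho}_{\tau,k}^{jj'}-\check{\rho}_{\tau,k}^{jj'}|\le \frac{4}{n_k(n_k-1)}\sum_{i<i'}\mathbbm{1}\{\mathrm{flip}_{ii'}\}$. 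A product changes sign only if at least one of its two factors changes sign, and the factor on $\phi_j$ changes sign only when $|Z_{ii'j}|\le|\langle X_{ik}-X_{i'k},\hat{\phi}_j-\phi_j\rangle|$ (and symmetrically for $j'$). By linearity of expectation it is therefore enough to bound each flip probability $P(\mathrm{flip}_{ii'})$ by $O(n^{-1/2})$, uniformly over pairs.

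To bound the flip probability I would follow the template of Lemma \ref{lemmau}: condition on the high-probability event on which $\|\hat{\phi}_j-\phi_j\|$ and $\|\hat{\phi}_{j'}-\phi_{j'}\|$ are at most $Mn^{-1/2}$, which holds by Lemma 3.4 of Hall and Hosseini-Nasab (2009 \cite{hosseini}). On this event Cauchy-Schwarz gives $|\langle X_{ik}-X_{i'k},\hat{\phi}_j-\phi_j\rangle|\le Mn^{-1/2}\|X_{ik}-X_{i'k}\|$, so the flip event is contained in $\{|Z_{ii'j}|\le Mn^{-1/2}\|X_{ik}-X_{i'k}\|\}\cup\{|Z_{ii'j'}|\le Mn^{-1/2}\|X_{ik}-X_{i'k}\|\}$, an event depending only on the data. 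The score difference $Z_{ii'j}=X_{ijk}-X_{i'jk}$ has a bounded density (a convolution of the bounded densities guaranteed by \ref{A2}), and $\|X_{ik}-X_{i'k}\|$ has all moments finite by \ref{kdeA1}; combining a bounded-density bound on the small-interval probability with Markov's inequality applied to $\|X_{ik}-X_{i'k}\|$ (exactly as in the Markov/Hall--Hosseini-Nasab step of Lemma \ref{lemmau}) yields $P(\mathrm{flip}_{ii'})=O(n^{-1/2})$. Summing over the $\binom{n_k}{2}$ pairs and dividing gives $E|\hat{\rho}_{\tau,k}^{jj'}-\check{\rho}_{\tau,k}^{jj'}|=O(n^{-1/2})$, and Markov's inequality upgrades this to the claimed $Op(n^{-1/2})$.

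The main obstacle is the dependence between the estimated eigenfunctions $\hat{\phi}_j$ and the very sample pairs $(i,i')$ entering the $\tau$ statistic, together with the fact that the threshold $|Z_{ii'j}|$ controlling a flip is itself random. Conditioning on the event $\{\|\hat{\phi}_j-\phi_j\|\le Mn^{-1/2}\}$ decouples $\hat{\phi}_j$ from the pairwise event by replacing the random perturbation size with a deterministic $O(n^{-1/2})$ bound, after which only the distribution of $(X_{ik},X_{i'k})$ is involved. This is the same device that makes the analogous estimate in Lemma \ref{lemmau} go through, and I expect the two-factor product here to require no essentially new idea beyond handling both factors and the norm moment simultaneously.
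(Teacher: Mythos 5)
Your proposal follows essentially the same route as the paper's proof: both reduce the claim via the smoothness of $\sin$ to bounding $|\hat{\rho}_{\tau,k}^{jj'}-\check{\rho}_{\tau,k}^{jj'}|$, decompose that difference into sign-flip indicators over pairs, take expectations, and control each flip probability through the root-$n$ consistency $\|\hat{\phi}_j-\phi_j\|=Op(n^{-1/2})$ from Hall and Hosseini-Nasab together with a Markov-type bound, exactly as in the Lemma on $\hat{u}_{jk}$. Your handling of the random flip threshold (bounded density of the score difference plus a moment bound on $\|X_{ik}-X_{i'k}\|$) is, if anything, a more careful execution of the step the paper dispatches by citing ``the same reasoning as in Lemma \ref{lemmau}.''
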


\begin{proof}
\begin{equation} \label{taudiff}
\begin{split}
\left|\hat{\rho}_{\tau, k}^{jj'}-\check{\rho}_{\tau, k}^{jj'}\right| \le \dfrac{4}{n_k\left(n_k-1\right)} \sum_{1 \le i < i' \le n_k} I [&\text{sign} \left\{\langle X_{ik}-X_{i'k}, \hat{\phi}_j\rangle \langle X_{ik}-X_{i'k}, \hat{\phi}_{j'}\rangle\right\} \\
&\ne \text{sign} \left\{\langle X_{ik}-X_{i'k}, \phi_j\rangle \langle X_{ik}-X_{i'k}, \phi_{j'}\rangle\right\}].
\end{split}
\end{equation}

To have unequal signs between $\langle X_{ik}-X_{i'k}, \hat{\phi}_j\rangle \langle X_{ik}-X_{i'k}, \hat{\phi}_{j'}\rangle$ and $\langle X_{ik}-X_{i'k}, \phi_j\rangle \langle X_{ik}-X_{i'k}, \phi_{j'}\rangle$, exactly either sign$\langle X_{ik}-X_{i'k}, \hat{\phi}_j\rangle \ne$ sign$\langle X_{ik}-X_{i'k}, \phi_j\rangle$, or sign$\langle X_{ik}-X_{i'k}, \hat{\phi}_{j'}\rangle \ne$ sign$\langle X_{ik}-X_{i'k}, \phi_{j'}\rangle$. So Eq.(\ref{taudiff}) has expectation
\begin{align} \label{etau}
E\left|\hat{\rho}_{\tau, k}^{jj'}-\check{\rho}_{\tau, k}^{jj'}\right| &\le \dfrac{4}{n_k\left(n_k-1\right)} \sum_{1 \le i < i' \le n_k} P\left(\text{sign} \langle X_{ik}-X_{i'k}, \hat{\phi}_j\rangle \ne \text{sign} \langle X_{ik}-X_{i'k}, \phi_j\rangle \right) \nonumber\\
&+ \dfrac{4}{n_k\left(n_k-1\right)} \sum_{1 \le i < i' \le n_k} P\left(\text{sign} \langle X_{ik}-X_{i'k}, \hat{\phi}_{j'}\rangle \ne \text{sign} \langle X_{ik}-X_{i'k}, \phi_{j'}\rangle \right) \nonumber\\
&\le \dfrac{4}{n_k\left(n_k-1\right)} \sum_{1 \le i < i' \le n_k} P\left(\left|\langle X_{ik}-X_{i'k}, \hat{\phi}_j-\phi_j\rangle\right| > \epsilon_{(i, i')jk}\right) \nonumber\\
&+ \dfrac{4}{n_k\left(n_k-1\right)} \sum_{1 \le i < i' \le n_k} P\left(\left|\langle X_{ik}-X_{i'k}, \hat{\phi}_{j'}-\phi_{j'}\rangle \right|> \epsilon_{(i, i')j'k}\right),
\end{align}
for $\epsilon_{(i, i')jk}$, $\epsilon_{(i, i')j'k}$ $> 0$, with the same reasoning as in Lemma \ref{lemmau}.

With results from proof steps of Lemma \ref{lemmau}, Eq.(\ref{etop}), $E\sqrt{n}\left|\hat{\rho}_{\tau, k}^{jj'}-\check{\rho}_{\tau, k}^{jj'}\right| < \infty$, $\Rightarrow \sqrt{n}\left|\hat{\rho}_{\tau, k}^{jj'}-\check{\rho}_{\tau, k}^{jj'}\right|=Op\left(1\right)$, $\Rightarrow \left|\hat{\rho}_{\tau, k}^{jj'}-\check{\rho}_{\tau, k}^{jj'}\right|=Op\left(\dfrac{1}{\sqrt{n}}\right)$. Thus with Taylor expansion it proves Lemma \ref{lemmatau}.
\end{proof}

\subsection{Asymptotic bound of $\left|\log \hat{Q}_J^*\left(X\right) - \log Q_J^*\left(X\right)\right|$}
\label{sup:correst}
Difference between the Bayes classifier and its estimated version is
\begin{align} 
\left| \log \hat{Q}_J^*\left(X\right) - \log Q_J^*\left(X\right)\right| &\le \sum_{k=0, 1}\sum_{j=1}^J  \left| \left( \log \hat{f}_{jk} \left(\hat{X}_j\right)- \log f_{jk} \left(X_j\right)\right)\right| \label{Qdiff1} \\
&+\dfrac{1}{2}\sum_{k=0, 1}\left|\log |\check{\mathbf{\Omega}}_k|-\log \left|\mathbf{\Omega}_k\right|\right| \label{Qdiff2}\\
&+\dfrac{1}{2}\sum_{k=0, 1}\left|\hat{\mathbf{u}}_k^T\left(\check{\mathbf{\Omega}}_k^{-1}-\mathbf{I}\right) \hat{\mathbf{u}}_k-\mathbf{u}_k^T\left(\mathbf{\Omega}_k^{-1}-\mathbf{I}\right)\mathbf{u}_k\right| \label{Qdiff3}\\
&+\dfrac{1}{2}\sum_{k=0, 1}\left|\log |\hat{\mathbf{\Omega}}_k|-\log |\check{\mathbf{\Omega}}_k|\right| + \dfrac{1}{2}\sum_{k=0, 1}\left|\hat{\mathbf{u}}_k^T\left(\hat{\mathbf{\Omega}}_k^{-1}-\check{\mathbf{\Omega}}_k^{-1}\right) \hat{\mathbf{u}}_k\right|, \label{Qdiff4}
\end{align}
Precision matrix is estimated using nonparanormal SKEPTIC with the graphical Dantzig selector described in \cite{yuan10} and \cite{Liu}. Asymptotic behavior of Eq.(\ref{Qdiff1}) is previously discussed in Section \ref{mkde}, $\hat{X}_j=\langle X, \hat{\phi}_j\rangle$.

\subsubsection{Bound of Eq.(\ref{Qdiff3})} \label{Q3}
To bound Eq.(\ref{Qdiff3}), we denote $\tilde{\mathbf{u}}_k=\hat{\mathbf{u}}_k-\mathbf{u}_k$, $\mathbf{M}_k=\check{\mathbf{\Omega}}_k^{-1}-\mathbf{\Omega}_k^{-1}$, where $\hat{\mathbf{u}}_k$ is a length $J$ vector with entries $\hat{u}_{jk}$ as defined above.
\begin{align} \label{part3}
\hat{\mathbf{u}}_k^T\left(\check{\mathbf{\Omega}}_k^{-1}-\mathbf{I}\right) \hat{\mathbf{u}}_k-\mathbf{u}_k^T\left(\mathbf{\Omega}_k^{-1}-\mathbf{I}\right)\mathbf{u}_k &= \mathbf{u}_k^T \mathbf{M}_k\mathbf{u}_k+2\mathbf{u}_k^T \mathbf{\Omega}_k^{-1}\tilde{\mathbf{u}}_k+2\mathbf{u}_k^T\mathbf{M}_k\tilde{\mathbf{u}}_k  \nonumber\\
&-2\mathbf{u}_k^T\tilde{\mathbf{u}}_k+\tilde{\mathbf{u}}_k^T\mathbf{\Omega}_k^{-1}\tilde{\mathbf{u}}_k+\tilde{\mathbf{u}}_k^T \mathbf{M}_k\tilde{\mathbf{u}}_k-\tilde{\mathbf{u}}_k^T\tilde{\mathbf{u}}_k
\end{align}
We discuss the asymptotic bound of each part in Eq.(\ref{part3}) from a) to f). For convenience of notation, $\|\cdot\|$ is for $\|\cdot\|_2$
\begin{enumerate}[a)]
\item
$\mathbf{u}_k^T\mathbf{M}_k\mathbf{u}_k \le \|\mathbf{u}_k\|^2 \cdot \|\mathbf{M}_k\|=Op \left(J\right) \cdot Op \left(M\sqrt{\dfrac{\log J}{n}}\right)=Op \left(MJ\sqrt{\dfrac{\log J}{n}}\right)$, where the bound on the norm of matrix difference comes from Theorem 4.4 in \cite{Liu}, and the fact that $\mathbf{\Omega}_k \in \mathcal{C}\left(\kappa, \tau, M, J\right)$;
\item
\begin{align}
2\mathbf{u}_k^T \mathbf{\Omega}_k^{-1}\tilde{\mathbf{u}}_k&=2\mathbf{u}_k^T \mathbf{\Omega}_k^{-1} Op\left(\dfrac{1}{\sqrt{n}}\right) \mathbf{1} \nonumber\\
&=Op\left(\dfrac{1}{\sqrt{n}}\right) \mathbf{u}_k^T \mathbf{\Omega}_k^{-1}\mathbf{1} \le Op\left(\dfrac{1}{\sqrt{n}}\right) \|\mathbf{u}_k\|\|\mathbf{\Omega}_k^{-1}\mathbf{1}\| \nonumber\\
&=Op\left(\dfrac{1}{\sqrt{n}}\right) \cdot Op\left(\sqrt{J}\right) \cdot Op\left(\sqrt{J}\right)=Op\left(\dfrac{J}{\sqrt{n}}\right),
\end{align}
where we have $\tilde{\mathbf{u}}_k=Op\left(\dfrac{1}{\sqrt{n}}\right) \mathbf{1}$ from Lemma \ref{lemmau}, and $\|\mathbf{\Omega}_k^{-1}\|_1 \le \kappa$;

\item
\begin{align}
2\mathbf{u}_k^T \mathbf{M}_k\tilde{\mathbf{u}}_k & \le 2\|\mathbf{u}_k\| \|\mathbf{M}_k\| \|\tilde{\mathbf{u}}_k\| \nonumber\\
&= Op \left(\sqrt{J}\right) \cdot Op \left(M\sqrt{\dfrac{\log J}{n}}\right)\cdot Op \left(\sqrt{\dfrac{J}{n}}\right)=Op \left(\dfrac{JM}{n}\sqrt{\log J}\right)
\end{align}

\item
\begin{align}
-2\mathbf{u}_k^T\tilde{\mathbf{u}}_k-\tilde{\mathbf{u}}_k^T\tilde{\mathbf{u}}_k=-\left(\hat{\mathbf{u}}_k+\mathbf{u}_k\right)^T\left(\hat{\mathbf{u}}_k-\mathbf{u}_k\right)=\|\mathbf{u}_k\|^2-\|\hat{\mathbf{u}}_k\|^2=Op \left(\dfrac{J}{\sqrt{n}}\right)
\end{align}

\item
\begin{align}
\tilde{\mathbf{u}}_k^T\mathbf{\Omega}_k^{-1}\tilde{\mathbf{u}}_k=Op \left(\dfrac{1}{\sqrt{n}}\right) \mathbf{1}^T \mathbf{\Omega}_k^{-1}Op \left(\dfrac{1}{\sqrt{n}}\right) \mathbf{1}=Op \left(\dfrac{J}{n}\right)
\end{align}

\item
\begin{align}
\tilde{\mathbf{u}}_k^T \mathbf{M}_k\tilde{\mathbf{u}}_k \le \|\tilde{\mathbf{u}}_k\|^2 \|\mathbf{M}_k\|=Op\left(\dfrac{MJ}{n}\sqrt{\dfrac{\log J}{n}}\right)
\end{align}
\end{enumerate}
In sum, Eq.(\ref{Qdiff3})$=Op \left(MJ\sqrt{\dfrac{\log J}{n}}\right)$

\subsubsection{Bound of Eq.(\ref{Qdiff2})} \label{bdet}
Log determinant difference in Eq.(\ref{Qdiff2}) can be bounded using Lemma 12 in \cite{Singh}:
\begin{equation}
\left|\log |\check{\mathbf{\Omega}}_k|-\log \left|\mathbf{\Omega}_k\right|\right| \le \dfrac{1}{\lambda^*}\|\check{\mathbf{\Omega}}_k-\mathbf{\Omega}_k\|_F,
\end{equation}
where $\lambda^*$ is the minimum among all eigenvalues of $\check{\mathbf{\Omega}}_k$ and $\mathbf{\Omega}_k$. Also, by Theorem 4.2 in \cite{Liu}, $\sup_{jj'} \left|\check{\mathbf{\Omega}}_k^{jj'}-\mathbf{\Omega}_k^{jj'}\right|=Op\left(\sqrt{\dfrac{\log J}{n}}\right)$. Thus, $\left|\log |\check{\mathbf{\Omega}}_k|-\log \left|\mathbf{\Omega}_k\right|\right|=Op \left(J\sqrt{\dfrac{\log J}{n}}\right)$.

\subsubsection{Bound of Eq.(\ref{Qdiff4})} \label{Q4}
With similar steps in Section \ref{bdet}, the first part in Eq.(\ref{Qdiff4}) is bounded as $\left|\log |\hat{\mathbf{\Omega}}_k|-\log |\check{\mathbf{\Omega}}_k|\right|=Op\left(\dfrac{J}{\sqrt{n}}\right)$, due to Lemma \ref{lemmatau}. For the second part, 
\begin{align}
\left|\hat{\mathbf{u}}_k^T\left(\hat{\mathbf{\Omega}}_k^{-1}-\check{\mathbf{\Omega}}_k^{-1}\right) \hat{\mathbf{u}}_k\right| &= \left|\hat{\mathbf{u}}_k^T \check{\mathbf{\Omega}}_k^{-1}\left(\check{\mathbf{\Omega}}_k-\hat{\mathbf{\Omega}}_k\right) \hat{\mathbf{\Omega}}_k^{-1}\hat{\mathbf{u}}_k\right| \nonumber\\
&\le \|\hat{\mathbf{u}}_k^T \check{\mathbf{\Omega}}_k^{-1}\| \|\check{\mathbf{\Omega}}_k-\hat{\mathbf{\Omega}}_k\| \|\hat{\mathbf{\Omega}}_k^{-1}\hat{\mathbf{u}}_k\|=Op\left(\dfrac{J^2}{\sqrt{n}}\right).
\end{align}
Thus, Eq.(\ref{Qdiff2}), Eq.(\ref{Qdiff3}) and Eq.(\ref{Qdiff4}) in sum are $Op \left(MJ\sqrt{\dfrac{\log J}{n}}\right)+Op\left(\dfrac{J^2}{\sqrt{n}}\right)$. 

\subsection{Proof of Theorem 1}\label{sec:suppTheorem1Proof}
\begin{proof}
We here inherit the idea in \cite{DMY2017} to only consider the case when $f_{j1}$ and $f_{j0}$ have common supports for simplicity. When $f_{j1}$ and $f_{j0}$ have unequal supports, we can divide the scenario into two parts: first, consider when the score of the target data $X$ fall into the common support of both densities, which is similar to what we discuss here; second, consider when the score only belongs to one support, which would be trivial to prove that $\log \hat{Q}_J^*\left(X\right)$ and $\log Q_J^*\left(X\right)$ always share the same sign. For detailed reasoning please refer to the Supplementary Material of \cite{DMY2017}.

For all $ \epsilon >0$, when $n$ is big enough, with parameters $c, C_{jk}, C_{T_1}, C_{T_2}$ dependent on $\epsilon$, we build the following sets:
\begin{itemize}
\item
$S_1=\left\{\|X\| \le c \right\}=\left\{X \in \mathcal{S}\left(c\right)\right\}$ s.t.\ $P\left(S_1\right) \ge 1-\epsilon/4$;
\item
By Proposition 1, let $S_2^{jk}=\left\{\sup_{x \in \mathcal{S}(c)} |\hat{f}_{jk}(\hat{x}_j)-f_{jk}(x_j)|/ \left(h+\sqrt{\dfrac{\log n}{nh}}\right) \le C_{jk}\right\}$, and $P\left(S_2^{jk}\right) \ge 1-2^{-\left(j+3\right)}$, for $j \ge1$, $k=0, 1$;
\item
Let $T_1=$ Eq.(\ref{Qdiff2}) $+$ Eq.(\ref{Qdiff3}). $T_1=Op\left(MJ\sqrt{\dfrac{\log J}{n}}\right)$ by Section \ref{Q3} and \ref{bdet}. $S_{T_1}=\left\{T_1/\left(MJ\sqrt{\dfrac{\log J}{n}}\right) \le C_{T_1}\right\}$, $P\left(S_{T_1}\right) \ge 1-\epsilon/4$;
\item
Let $T_2=$ Eq.(\ref{Qdiff4}). $T_2=Op\left(\dfrac{J^2}{\sqrt{n}}\right)$ by Section \ref{Q4}. $S_{T_2}=\left\{T_2/\left(\dfrac{J^2}{\sqrt{n}}\right) \le C_{T_2}\right\}$, $P\left(S_{T_2}\right) \ge 1-\epsilon/4$;
\item
Let $S^{jk}_3=\left\{\langle X, \phi_j \rangle \in \text{support}\left(f_{jk}\right)\right\}$. $P\left(S^{jk}_3\right)=1$. 
\end{itemize}

Let $S=S_1 \left\{\bigcap_{j \ge 1, k=0, 1} S_2^{jk}\right\} \cap S_{T_1} \cap S_{T_2} \left\{\bigcap_{j \ge 1, k=0, 1} S_3^{jk}\right\}$, $P\left(S\right)=1-P\left(S^c\right) \ge 1-\epsilon$. Since $\left(h+\sqrt{\dfrac{\log n}{nh}}\right) \to 0$, there exists $a_n \to \infty$ an increasing sequence which satisfies $a_n\left(h+\sqrt{\dfrac{\log n}{nh}}\right)=o\left(1\right)$. With $\mathcal{U}_{jk}=\left\{x: \langle x, \phi_j\rangle \in \text{support}\left(f_{jk}\right) \right\}$, $\mathcal{U}=\bigcap_{j \ge 1, k=0, 1} \mathcal{U}_{jk}$, and $d_{jk}=\min\left\{1, \inf_{x \in \mathcal{S}(c) \cap \mathcal{U}} f_{jk}\left(x_j\right)\right\}$, there is already a nondecreasing sequence $J_0\left(n\right)$ built by \cite{DMY2017}, which we can directly apply here: $$J_0\left(n\right)=\sup \left\{J' \ge 1: \sum_{j \le J', k=0, 1} \dfrac{M_{jk}}{d_{jk}} \le a_n\right\}.$$ It guarantees that Eq.(\ref{Qdiff1}): $ \sum_{k=0, 1}\sum_{j=1}^J  \left| \left( \log \hat{f}_{jk} \left(\hat{X}_j\right)- \log f_{jk} \left(X_j\right)\right)\right|=o\left(1\right)$ on the set $S$. 

Also, $T_1 \le MJ\sqrt{\log J} \cdot \dfrac{C_{T_1}}{\sqrt{n}}$ on $S$, subject to the condition in setup that $MJ\sqrt{\log J}=o\left(\sqrt{n}\right)$. As $\dfrac{C_{T_1}}{\sqrt{n}} \to 0$, $\exists b_n \to \infty$ and $b_n\dfrac{C_{T_1}}{\sqrt{n}} \to 0$. We here define $$J_1\left(n\right)=\sup \left\{J' \ge 1: M'J'\sqrt{\log J'} \le b_n\right\}.$$Then the nondecreasing $J_1$ satisfies the constraint $MJ\sqrt{\log J}=o\left(\sqrt{n}\right)$ and also guarantees $T_1=o\left(1\right)$ on $S$.

For $T_2 \le \dfrac{C_{T_2}}{\sqrt{n}} J^2$ on $S$, again $\exists c_n \to \infty$ and $c_n\dfrac{C_{T_2}}{\sqrt{n}} \to 0$. Let $$J_2\left(n\right)=\floor{\sqrt{c_n}}.$$ Then the sequence $J_2$ is nondecreasing and $T_2=o\left(1\right)$ on $S$ choosing $J=J_2$.

In sum, let $J^*\left(n\right)=\min \left\{J_0 \left(n\right), J_1 \left(n\right), J_2 \left(n\right)\right\}$, then $\left| \log \hat{Q}_J^*\left(X\right) - \log Q_J^*\left(X\right)\right| \to 0$ at $J=J^*\left(n\right)$ on $S$. With Assumption 4, the ratios $f_{j1}(X_j)/f_{j0}(X_j)$ are atomless, which therefore concludes $$P\left(S \cap \left\{\mathds{1}\left\{\log \hat{Q}^*_J\left(X\right) \ge 0 \right\} \ne \mathds{1}\left\{\log Q^*_J\left(X\right) \ge 0\right\}\right\}\right) \to 0.$$

\end{proof}

\section{Proofs of Theorem 2 \& 3}
\setcounter{equation}{0}
\subsection{Optimality of functional Bayes classifier on truncated scores} \label{section11}
The optimality of Bayes classification in multivariate case can be easily extended to the functional setting with first $J$ truncated scores: for a new case $X \in \mathcal{L}^2(\mathcal{T})$, the functional Bayes classifier $q^*_J=\mathds{1}\{\log Q^*_J(X)>0\}$, where
\begin{equation} \label{eq:supp1}
\log Q^*_J \left(X\right)=\log \left (\dfrac{\pi_1}{\pi_0} \right )+\displaystyle \sum_{j=1}^J \log \left \{ \dfrac{f_{j1}(X_j)}{f_{j0}(X_j)} \right \} + \log \left \{\dfrac{c_1\{ F_{11}(X_1), \ldots, F_{J1}(X_J)\}}{c_0\{ F_{10}(X_1), \ldots, F_{J0}(X_J)\}} \right \},
\end{equation}
achieves lower misclassification rate than any other classifier using the first $J$ scores $X_j=\langle X, \psi_j\rangle$, $j=1, \ldots, J$.
\begin{proof}
Let $q_J(X) = k$ be any classifier assigning $X$ to group $k$ based on its first $J$ scores. Define $D_k=\left\{\left(X_1, \ldots, X_J\right): q_J(X)=k\right\}$, $\mathds{1}_{D_k}=\mathds{1}\left\{\left(X_1, \ldots, X_J\right) \in D_k\right\}$. Then the misclassification rate of $q_J(X)$, denoted $\text{err}(q_J(X))$, is
\begin{align} \label{eq:supp2}
\text{err}\left\{q_J\left(X\right)\right\} &= P\left(q_J\left(X\right)=1, Y=0\right)+P\left(q_J\left(X\right)=0, Y=1\right) \nonumber \\
&=E\left[P\left(q_J\left(X\right)=1, Y=0 | X_1, \ldots, X_J\right)+P\left(q_J\left(X\right)=0, Y=1 | X_1, \ldots, X_J\right)\right]\nonumber \\
&=E\left[\mathds{1}_{D_1}P\left(Y=0 | X_1, \ldots, X_J\right)+\mathds{1}_{D_0}P\left(Y=1 | X_1, \ldots, X_J\right)\right]
\end{align} 
Thus, letting the corresponding functions $D_k^*$ and $\mathds{1}_{D_k^*}$ of Bayes classifier $q_J^*(X)$ being similar to $D_k$ and $\mathds{1}_{D_k}$, the difference between the error rates of $q_J(X)$ and $q_J^*(X)$ is
\begin{align} \label{eq:supp3}
\text{err}\left\{q_J\left(X\right)\right \}-\text{err}\left\{q_J^*\left(X\right)\right\} =&E[\left(\mathds{1}_{D_1}-\mathds{1}_{D_1^*}\right)P\left(Y=0 | X_1, \ldots, X_J\right) \nonumber\\
&+\left(\mathds{1}_{D_0}-\mathds{1}_{D_0^*}\right)P\left(Y=1 | X_1, \ldots, X_J\right)]
\end{align}
When $q_J(X)=0$, $q_J^*(X)=1$, $P\left(Y=1 | X_1, \ldots, X_J\right)> P\left(Y=0 | X_1, \ldots, X_J\right)$ by the definition of Bayes classification; and $P\left(Y=1 | X_1, \ldots, X_J\right)] > P\left(Y=0 | X_1, \ldots, X_J\right)$ when $q_J(X)=1$, $q_J^*(X)=0$. Therefore Eq.(\ref{eq:supp3}) is nonnegative, which proves the optimality of Bayes classification on truncated functional scores.
\end{proof}

\subsection{Theorem 2}\label{sec:suppTheorem2Proof}
\begin{proof}
When $X$ is Gaussian process under both $Y=0$ and $1$, let $\mathbf{X}_J=\left(X_1, \ldots, X_J\right)^T$, then the log ratio of $Q^*_J(X)$ is
\begin{equation} \label{ratio}
\log Q^*_J(X)=-\dfrac{1}{2}\left(\mathbf{X}_J-\vec{\mu}_J\right)^T \mathbf{R}_1^{-1}\left(\mathbf{X}_J-\vec{\mu}_J\right)+\dfrac{1}{2}\mathbf{X}_J^T \mathbf{R}_0^{-1}\mathbf{X}_J+\log \sqrt{\dfrac{|R_0|}{|R_1|}}
\end{equation}
At $k=0$, $\mathbf{X}_J^T \mathbf{R}_0^{-1}\mathbf{X}_J$ has central chi-square distribution with $J$ degrees of freedom, while $(\mathbf{X}_J-\vec{\mu}_J)^T \mathbf{R}_1^{-1}(\mathbf{X}_J-\vec{\mu}_J)$ is distributed generalized chi-squared. 

Eigendecomposition gives $\mathbf{R}_0^{1/2} \mathbf{R}_1^{-1} \mathbf{R}_0^{1/2}=\mathbf{P}^T \mathbf{\Delta} \mathbf{P}$, where $\mathbf{\Delta}$ is a diagonal matrix diag$\{\Delta_1, \ldots, \Delta_J\}$. Also determinant of $\mathbf{R}_0^{1/2} \mathbf{R}_1^{-1} \mathbf{R}_0^{1/2}$ is $\prod_{j=1}^J \frac{d_{j0}}{d_{j1}}=\prod_{j=1}^J\Delta_j$. We let $\mathbf{Z}=\mathbf{R}_0^{-1/2}\mathbf{X}_J$, $\mathbf{U}=\mathbf{P}\mathbf{Z}$. At $k=0$, $U_j$, as the $j$-th entry of vector $\mathbf{U}$, has standard Gaussian distribution; at $k=1$, $U_j \sim N(-b_j, 1/\Delta_j)$, with $b_j$ the $j$-th entry of $\mathbf{b}=-\mathbf{P}\mathbf{R}_0^{-1/2}\vec{\mu}_J$. $U_j$ and $U_{j'}$ are uncorrelated $\forall 1 \le j, j' \le J$, for both $k=0$ and $1$. 

Then Eq.(\ref{ratio}) is transformed into
\begin{align} \label{uform}
\log Q^*_J(X)&=-\dfrac{1}{2}\left(\mathbf{U}+\mathbf{b}\right)^T \mathbf{\Delta}\left(\mathbf{U}+\mathbf{b}\right)+\dfrac{1}{2}\mathbf{U}^T\mathbf{U}+\log \sqrt{\dfrac{|R_0|}{|R_1|}}\nonumber \\
&=-\dfrac{1}{2}\sum_{j=1}^J \Delta_j \left(U_j+b_j\right)^2+\dfrac{1}{2}\sum_{j=1}^J U_j^2 + \dfrac{1}{2}\sum_{j=1}^J \log \Delta_j
\end{align}

Eq. (\ref{uform}) thus fits into Lemma 3 in the Supplementary Material of \cite{DMY2017}, with which we conclude directly that perfect classification of $\mathds{1}\{\log Q^*_J(X)>0\}$ is achieved when either $\sum_{j=1}^{\infty} b_j^2 = \infty$, or $\sum_{j=1}^{\infty} (\Delta_j-1)^2 = \infty$, as $J \to \infty$. Otherwise $\log Q^*_J(X)$ converges almost surely to some random variable with finite mean and variance, thus err$\left(\mathds{1}\{\log Q^*_J(X)>0\}\right) \not \to 0$.

\end{proof}

\subsection{Proof of Theorem 3}\label{sec:proofTheorem3}
First, we provide a quick proof about the distribution of $u_{jk}|Y=k$ as mentioned in Section 5.3: $P\left[u_{jk} \le u|Y=k\right]=P\left[\Phi^{-1}\left(F_{jk}\left(X_{j}\right)\right) \le u |Y=k\right]=P\left[F_{jk}\left(X_j\right) \le \Phi\left(u\right)|Y=k\right]$. Since $F_{jk}\left(X_j\right)$ is a uniformly distributed variable at $Y=k$ (\cite{RM2015}), $P\left[u_{jk} \le u|Y=k\right]=\Phi\left(u\right)$. Thus $u_{jk}|Y=k \sim N(0, 1)$.

Second, we prove the claim that if a sequence of random variables $a_n > 0$ is $op\left(1\right)$, the conditional sequence $a_n|Y=k$, where $Y$ is binary with $k=0, 1$, is also convergent in probability to $0$:
\begin{proof}
To show $a_n|Y=k=op\left(1\right)$, we need to show $\forall \epsilon, \xi >0$, $\exists N_{\epsilon, \xi}$ such that, when $n \ge N_{\epsilon, \xi}$, $P\left(a_n > \epsilon|Y=k\right) < \xi$. 

Since $a_n=op\left(1\right)$, and $P\left(a_n > \epsilon\right)=P\left(a_n > \epsilon|Y=1\right)\pi_1+P\left(a_n > \epsilon|Y=0\right)\pi_0$, there exists $N_{\epsilon, \xi}'$ such that for $n \ge N_{\epsilon, \xi}'$, $P\left(a_n > \epsilon\right) < \pi_k \xi$, $\Rightarrow P\left(a_n > \epsilon|Y=k\right)\pi_k < \pi_k \xi$, $\Rightarrow P\left(a_n > \epsilon|Y=k\right)< \xi$. Thus it is proved that $\forall \epsilon, \xi$, such $N_{\epsilon, \xi}$ exists, and $N_{\epsilon, \xi} \le N_{\epsilon, \xi}'$, which concludes $a_n|Y \overset{p}{\to} 0$.
\end{proof}

Finally, to learn the asymptotic properties, we rely on the optimality of functional Bayes classification on truncated scores as discussed above. Any classifier on the same set of scores provides an upper bound of the error rate of the Bayes classifier $\mathds{1}\{\log Q^*_J(X)>0\}$. Therefore, let $\Gamma_J$ be the collection of all decision rules $\gamma_J$ using truncated scores $X_1, \ldots, X_J$, err$\left(\mathds{1}\{\log Q^*_J(X)>0\}\right) \le \min_{\gamma_J \in \Gamma_J} \text{err}\left(\gamma_J\right)$. Then perfect classification exists as long as there exists some classifier with asymptotic error rate converging to $0$. In the proof below, we build some decision rules with customized functions $T^a_j(X)$, etc., developed from the summand of $\log Q^*_J(X)$:

\begin{proof}
\begin{enumerate}[a)]
\item
For the first case, let $T^a_j(X)$ be defined as
\begin{equation}
T^a_{j}(X)=\log \dfrac{f_{j1}\left(X_{j}\right)}{f_{j0}\left(X_{j}\right)}\Big /\dfrac{\sqrt{\omega_{j1}}}{\sqrt{\omega_{j0}}}+\dfrac{1}{\omega_{j0}}\left(\mathbf{V}_{j0}^T\mathbf{u}_0\right)^2 = \log g_j+\left(\mathbf{V}_{j0}^T\mathbf{u}_0\right)^2/\omega_{j0},
\end{equation}
where $\mathbf{V}_{j0}$ as mentioned is $j$-th column of matrix $\mathbf{V}_0$ from the eigendecomposition $\mathbf{\Omega}_0=\mathbf{V}_0 \mathbf{D}_0 \mathbf{V}_0^T$. 

At $Y=0$, $\left(\mathbf{V}_{j0}^T\mathbf{u}_0\right)^2/\omega_{j0}$ follows $\chi^2_1$. Since there exists a subsequence $g^*_r=g_{j_r}$ of $g_j$ such that $g_{j_r} \overset{p}{\to} 0$, the subsequence is also $op \left(1\right)$ conditioned at $Y=0$, as proved previously. Therefore,
\begin{align} \label{eqa}
&P\left(T^a_{j_r}\left(X\right)>0|Y=0\right)=P\left(\log g_{j_r}+\left(\mathbf{V}_{j_r0}^T\mathbf{u}_0\right)^2/\omega_{j_r0} > 0 |Y=0\right) \nonumber\\ 
&=P\left(\log g_{j_r}+\left(\mathbf{V}_{j_r0}^T\mathbf{u}_0\right)^2/\omega_{j_r0} + C_a> C_a |Y=0\right), \forall C_a \in \mathds{R}^+\nonumber\\
&\le P\left(\log g_{j_r}+C_a >0 \cup \left(\mathbf{V}_{j_r0}^T\mathbf{u}_0\right)^2/\omega_{j_r0} >C_a |Y=0\right)\nonumber\\
&\le P\left(\log g_{j_r}+C_a >0|Y=0\right)+P\left(\left(\mathbf{V}_{j_r0}^T\mathbf{u}_0\right)^2/\omega_{j_r0} >C_a |Y=0\right)\nonumber\\
&=P\left(g_{j_r}>\exp\left\{-C_a\right\}|Y=0\right)+1-F_{\chi^2_1}\left(C_a\right) \nonumber \\
&\to 1-F_{\chi^2_1}\left(C_a\right), 
\end{align}
where $F_{\chi^2_1}$ is CDF of Chi-square distribution with d.f.\ $1$. As the inequality in Eq.(\ref{eqa}) exists $\forall C_a \in \mathds{R}^+$,  $P\left(\log g_{j_r}+\left(\mathbf{V}_{j_r0}^T\mathbf{u}_0\right)^2/\omega_{j_r0} > 0 |Y=0\right) \le \lim_{C_a \to \infty} 1-F_{\chi^2_1}\left(C_a\right)=0$.

At $Y=1$, 
\begin{align} \label{eqaa}
&P\left(\log g_{j_r}+\left(\mathbf{V}_{j_r0}^T\mathbf{u}_0\right)^2/\omega_{j_r0} < 0 |Y=1\right) \nonumber\\ 
&=P\left(s_{j_r0}\log g_{j_r}+s_{j_r0}\cdot\dfrac{\left(\mathbf{V}_{j_r0}^T\mathbf{u}_0\right)^2}{\omega_{j_r0}} < 0 |Y=1\right) \nonumber\\
&\le P\left(s_{j_r0}\log g_{j_r}+\epsilon <0|Y=1\right)+P\left(s_{j_r0}\cdot\dfrac{\left(\mathbf{V}_{j_r0}^T\mathbf{u}_0\right)^2}{\omega_{j_r0}} < \epsilon |Y=1\right), \forall \epsilon > 0 \nonumber\\
&\le P\left(|s_{j_r0}\log g_{j_r}|>\epsilon |Y=1\right)+P\left(\left|\sqrt{\dfrac{s_{j_r0}}{\omega_{j_r0}}}\mathbf{V}_{j_r0}^T\mathbf{u}_0\right| < \sqrt{\epsilon} |Y=1\right), \forall \epsilon > 0,
\end{align}
with $s_{j_r0}=1/\text{var}\left(V_{j_r0}^T\mathbf{u}_0/\sqrt{\omega_{j_r0}} |Y=1\right)$, as defined in Section 5.3. Thus $\sqrt{\dfrac{s_{j_r0}}{\omega_{j_r0}}}V_{j_r0}^T\mathbf{u}_0$ in the second probability part in Eq.(\ref{eqaa}) has unit variance. When $s_{j_r0} \to 0$, $s_{j_r0}\log g_{j_r} \overset{p}{\to} 0$ by continuous mapping and Slutsky's Theorem, so both probabilities in Eq.(\ref{eqaa}) go to $0$ when $\epsilon \to 0$. Consequently Eq.(\ref{eqaa}) converges to $0$, and the error rates of the sequence of decision rules $\mathds{1}\{T_{j_r}^a(X)>0\}$ are 
\begin{equation}
\text{err}\left(\mathds{1}\{T_{j_r}^a(X)>0\}\right)=P\left(T_{j_r}^a(X)>0 | Y=0\right)\pi_0+P\left(T_{j_r}^a(X)<0 | Y=1\right)\pi_1 \to 0.
\end{equation}
Therefore, the misclassification rate of $\mathds{1}\{\log Q^*_J(X)>0\}$ is asymptotically $0$ in this case.

\item
For the second case when the subsequence $1/g_{j_r} = op(1)$, the reasoning steps are similar. The term $T_j^b(X)$ is designed to build the decision rule here:
\begin{equation} \label{eqb}
T^b_{j}(X)=\log \dfrac{f_{j1}\left(X_{j}\right)}{f_{j0}\left(X_{j}\right)}\Big /\dfrac{\sqrt{\omega_{j1}}}{\sqrt{\omega_{j0}}}-\dfrac{1}{\omega_{j1}}\left(\mathbf{V}_{j1}^T\mathbf{u}_1\right)^2 = \log g_j-\left(\mathbf{V}_{j1}^T\mathbf{u}_1\right)^2/\omega_{j1}.
\end{equation}
Then at $Y=1$, $\left(\mathbf{V}_{j1}^T\mathbf{u}_1\right)^2/\omega_{j1}$ is $\chi_1^2$. Also, when $1/g_{j_r} = op(1)$, 
\begin{align} \label{eqbb}
&P\left(T^b_{j_r}\left(X\right)<0|Y=1\right)=P\left(\log g_{j_r}-\left(\mathbf{V}_{j_r1}^T\mathbf{u}_1\right)^2/\omega_{j_r1} < 0 |Y=1\right) \nonumber\\ 
&=P\left(\log g_{j_r}-\left(\mathbf{V}_{j_r1}^T\mathbf{u}_1\right)^2/\omega_{j_r1} + C_b< C_b |Y=1\right), \forall C_b \in \mathds{R}^+\nonumber\\
&\le P\left(\log g_{j_r} <C_b|Y=1\right)+P\left(\left(\mathbf{V}_{j_r1}^T\mathbf{u}_1\right)^2/\omega_{j_r1} >C_b |Y=1\right)\nonumber\\
&=P\left(g_{j_r}<\exp\left\{C_b\right\}|Y=1\right)+1-F_{\chi^2_1}\left(C_b\right) \nonumber \\
&\to 1-F_{\chi^2_1}\left(C_b\right), \forall C_b \in \mathds{R}^+,
\end{align}
since $1/g_{j_r}$ converges to $0$ in probability, i.e., $g_{j_r} \overset{p}{\to} \infty$. The error rate at $Y=1$ goes to $0$ as the inequality in Eq.(\ref{eqbb}) exists $\forall C_b \in \mathds{R}^+$. 

At $Y=0$, similarly to case a),
\begin{align} \label{eqbbb}
&P\left(\log g_{j_r}-\left(\mathbf{V}_{j_r1}^T\mathbf{u}_1\right)^2/\omega_{j_r1} > 0 |Y=0\right) \nonumber\\ 
&=P\left(s_{j_r1}\log g_{j_r}-s_{j_r1}\cdot\dfrac{\left(\mathbf{V}_{j_r1}^T\mathbf{u}_1\right)^2}{\omega_{j_r1}} > 0 |Y=0\right) \nonumber\\
&\le P\left(s_{j_r1}\log g_{j_r}>\epsilon |Y=0\right)+P\left(\epsilon-s_{j_r1}\cdot\dfrac{\left(\mathbf{V}_{j_r1}^T\mathbf{u}_1\right)^2}{\omega_{j_r1}} >0 |Y=0\right), \forall \epsilon > 0 \nonumber\\
&\le P\left(|s_{j_r1}\log g_{j_r}|>\epsilon |Y=0\right)+P\left(\left|\sqrt{\dfrac{s_{j_r1}}{\omega_{j_r1}}}\mathbf{V}_{j_r1}^T\mathbf{u}_1\right| < \sqrt{\epsilon} |Y=0\right), \forall \epsilon > 0,
\end{align}
and $s_{j_r1}=1/\text{var}\left(\mathbf{V}_{j_r1}^T\mathbf{u}_1/\sqrt{\omega_{j_r1}} |Y=0\right)$. Then again, when $s_{j_r1} \to 0$ and $g_{j_r} \overset{p}{\to} \infty$, $s_{j_r1}\log g_{j_r}$ is $op(1)$. Eq.(\ref{eqbbb}) goes to $0$ when $\epsilon \to 0$, and therefore asymptotic misclassification rate of the Bayes classifier is bounded up by $0$ in this case.  
\item
The third case uses $T_j^c(X)$ which is a combination of $T_j^a(X)$ and $T_j^b(X)$:
\begin{align}
T_j^c&=\log \dfrac{f_{j1}\left(X_{j}\right)}{f_{j0}\left(X_{j}\right)}\Big /\dfrac{\sqrt{\omega_{j1}}}{\sqrt{\omega_{j0}}}+\dfrac{1}{\omega_{j0}}\left(\mathbf{V}_{j0}^T\mathbf{u}_0\right)^2-\dfrac{1}{\omega_{j1}}\left(\mathbf{V}_{j1}^T\mathbf{u}_1\right)^2 \nonumber\\
&= \log g_j+\left(\mathbf{V}_{j0}^T\mathbf{u}_0\right)^2/\omega_{j0}-\left(\mathbf{V}_{j1}^T\mathbf{u}_1\right)^2/\omega_{j1}.
\end{align}
Then at $Y=0$, since $1/g_{j_r} \overset{p}{\to} 0$, and $s_{j_r1} \to 0$, the random variables $s_{j_r1}\log g_{j_r}$ and $s_{j_r1}\left(\mathbf{V}_{j_r0}^T\mathbf{u}_0\right)^2/\omega_{j_r0}$ are both $op(1)$, therefore,
\begin{align} \label{eqc}
P\left(T_{j_r}^c>0|Y=0\right) &= P\left(\log g_{j_r}+\left(\mathbf{V}_{j_r0}^T\mathbf{u}_0\right)^2/\omega_{j_r0}-\left(\mathbf{V}_{j_r1}^T\mathbf{u}_1\right)^2/\omega_{j_r1}>0|Y=0\right) \nonumber \\
&=P\left(s_{j_r1}\log g_{j_r}+s_{j_r1}\left(\mathbf{V}_{j_r0}^T\mathbf{u}_0\right)^2/\omega_{j_r0}-\left(\sqrt{\dfrac{s_{j_r1}}{\omega_{j_r1}}}\mathbf{V}_{j_r1}^T\mathbf{u}_1\right)^2>0|Y=0\right) \nonumber \\
& \le P\left(s_{j_r1}\log g_{j_r}+s_{j_r1}\left(\mathbf{V}_{j_r0}^T\mathbf{u}_0\right)^2/\omega_{j_r0}>\epsilon|Y=0\right) \nonumber \\ &+P\left(\left(\sqrt{\dfrac{s_{j_r1}}{\omega_{j_r1}}}\mathbf{V}_{j_r1}^T\mathbf{u}_1\right)^2<\epsilon|Y=0\right), \forall \epsilon > 0\nonumber \\
&\to P\left(\left|\sqrt{\dfrac{s_{j_r1}}{\omega_{j_r1}}}\mathbf{V}_{j_r1}^T\mathbf{u}_1\right|<\epsilon|Y=0\right), \forall \epsilon > 0,
\end{align}
and similar to case (b), $\sqrt{\dfrac{s_{j_r1}}{\omega_{j_r1}}}\mathbf{V}_{j_r1}^T\mathbf{u}_1$ has unit variance. Eq.(\ref{eqc}) goes to $0$ when $\epsilon \to 0$.

At $Y=1$, following previous steps, it is easy to find that $P\left(T_{j_r}^c<0|Y=1\right) \to 0$ when $g_{j_r} \to 0$ and $s_{j_r0} \to 0$ conditioned on $Y=1$, and therefore the proof is omitted here. In sum, the sufficiency of case (c) for perfect classification is verified.

\item
The last case uses $T_j^d=T_j^c$, where 
\begin{align} \label{eqd1}
P\left(T_{j_r}^d>0|Y=0\right)&=P\left(\log g_{j_r}+\left(\mathbf{V}_{j_r0}^T\mathbf{u}_0\right)^2/\omega_{j_r0}-\left(\mathbf{V}_{j_r1}^T\mathbf{u}_1\right)^2/\omega_{j_r1}>0|Y=0\right) \nonumber \\
& \le P\left(\log g_{j_r}+\left(\mathbf{V}_{j_r0}^T\mathbf{u}_0\right)^2/\omega_{j_r0}>0|Y=0\right),
\end{align}
and
\begin{align}\label{eqd2}
P\left(T_{j_r}^d<0|Y=1\right)&=P\left(\log g_{j_r}+\left(\mathbf{V}_{j_r0}^T\mathbf{u}_0\right)^2/\omega_{j_r0}-\left(\mathbf{V}_{j_r1}^T\mathbf{u}_1\right)^2/\omega_{j_r1}<0|Y=1\right) \nonumber \\
& \le P\left(\log g_{j_r}-\left(\mathbf{V}_{j_r1}^T\mathbf{u}_1\right)^2/\omega_{j_r1}<0|Y=1\right).
\end{align}
Eq.(\ref{eqd1}) with $g_{j_r} \overset{p}{\to} 0$ is already proved to go to $0$ in case (a), and Eq.(\ref{eqd2}) with $1/g_{j_r} \overset{p}{\to} 0$ converges to $0$ as shown in case (b), which complete the proof.
\end{enumerate}

\end{proof}

\lhead[]{}\rhead[\fancyplain{}\leftmark\footnotesize]{\fancyplain{}\rightmark\footnotesize{} }
\bibhang=1.7pc
\bibsep=2pt
\fontsize{9}{14pt plus.8pt minus .6pt}\selectfont
\renewcommand\bibname{\large \bf References}
\expandafter\ifx\csname
natexlab\endcsname\relax\def\natexlab#1{#1}\fi
\expandafter\ifx\csname url\endcsname\relax
  \def\url#1{\texttt{#1}}\fi
\expandafter\ifx\csname urlprefix\endcsname\relax\def\urlprefix{URL}\fi

\bibliography{references}


\end{document}